\theoremstyle{plain}
\newtheorem{thm}{Theorem}
\newtheorem{theorem}[thm]{Theorem}
\newtheorem{assumption}[thm]{Assumption}{\bfseries}{\itshape}
\newtheorem{fact}{Fact}{\bfseries}{\itshape}
\newtheorem{prop}[thm]{Proposition}
\newtheorem{proposition}[thm]{Proposition}
\newtheorem{problem}{Problem}
\newtheorem{cor}[thm]{Corollary}
\newtheorem{lem}[thm]{Lemma}
\newtheorem{lemma}[thm]{Lemma}
\theoremstyle{definition}
\newtheorem{definition}{Definition}
\newtheorem{example}{Example}
\newtheorem{nota}{Notation}
\theoremstyle{remark}
\newtheorem{rem}{Remark}
\renewcommand{\O}{\mathcal{O}}
\newcommand{\tr}{\ensuremath{\textrm{Tr}}}
\newcommand{\nr}{\ensuremath{\textrm{N}}}
\newcommand{\Z}{\mathbb{Z}}
\newcommand{\fract}[2]{\hbox{\leavevmode
\kern.1em \raise .3ex \hbox{\the\scriptfont0 $#1$}\kern-.1em }\big/
\raise -.5ex\hbox{\kern-.15em \lower .25ex \hbox{\the\scriptfont0 $#2$}}
}
\newcommand{\eqdef}{\stackrel{\text{def}}{=}}
\renewcommand{\leq}{\leqslant} 
\renewcommand{\geq}{\geqslant}
\newcommand{\Fq}{\ensuremath{\mathbb{F}_q}}
\newcommand{\Fqq}{\ensuremath{\mathbb{F}_{q^2}}}
\newcommand{\Fqm}{\ensuremath{\mathbb{F}_{q^m}}}
\newcommand{\CC}{\code{C}}
\newcommand{\DC}{\code{D}}
\newcommand{\EC}{\code{E}}
\newcommand{\CCs}{\code{C}^{\sigma}}
\newcommand{\CCse}{\code{C}^{\sigma}_{\textrm{ext}}}
\newcommand{\sqc}[1]{#1^{\star 2}}
\newcommand{\spc}[2]{#1 \star #2}
\newcommand{\cset}[3]{\left\{\left.\left(#1\right)_{#2} ~\right|~#3 \right\}}
\newcommand{\csetstar}[2]{\left\{\left. #1 ~\right|~#2 \right\}}
\newcommand{\mat}[1]{\ensuremath{\boldsymbol{#1}}}
\newcommand{\code}[1]{\ensuremath{\mathscr{#1}}}
\newcommand{\card}[1]{\ensuremath{|#1|}}
\newcommand{\prob}{\mathbf{Prob}}
\newcommand{\Am}{\mat{A}}
\newcommand{\Gm}{\mat{G}}
\newcommand{\Hm}{\mat{H}}
\newcommand{\Gsig}{\mat{G}^{\sigma}}
\newcommand{\xsig}{\xv^{\sigma}}
\newcommand{\Gsige}{\mat{G}^{\sigma}_{\textrm{\bf ext}}}
\newcommand{\Ind}{{\cal{I}}}
\newcommand{\av}{\mat{a}}
\newcommand{\bv}{\mat{b}}
\newcommand{\cv}{\mat{c}}
\newcommand{\ev}{\mat{e}}
\newcommand{\sv}{\mat{s}}
\newcommand{\uv}{\mat{u}}
\newcommand{\vv}{\mat{v}}
\newcommand{\xv}{{\mat{x}}}
\newcommand{\xvzero}{{\mat{x}}_{0}}
\newcommand{\yvzero}{{\mat{y}}_{0}}
\newcommand{\xvone}{{\mat{x}}_{1}}
\newcommand{\xvzo}{{\mat{x}}_{01}}
\newcommand{\xvzeroI}{{\mat{x}}_{\Ind \cup\{0\}}}
\newcommand{\yv}{{\mat{y}}}
\newcommand{\onev}{{\mat{1}}}
\newcommand{\xvpe}{\xv'_{\textrm{\bf ext}}}
\newcommand{\xve}{\xv^{\sigma}_{\textrm{\bf ext}}}
\newcommand{\Lc}{\mathcal{L}}
\newcommand{\F}{\ensuremath{\mathbb{F}}}
\newcommand{\fq}{\F_q}
\newcommand{\fqm}{\F_{q^m}}
\newcommand{\dec}{\mathcal{D}}
\renewcommand{\fq}{\F_{q}}
\renewcommand{\fqm}{\F_{q^m}}
\newcommand{\GRS}[3]{\text{\bf GRS}_{#1}(#2,#3)}
\newcommand{\RS}[2]{\text{\bf RS}_{#1}(#2)}
\newcommand{\Alt}[3]{\code{A}_{#1}(#2,#3)}
\newcommand{\Goppa}[2]{\code{G}\left(#1,#2\right)}
\newcommand{\am}{a^-}
\newcommand{\Coi}[2]{\code{C}_{#1}(#2)}
\newcommand{\polym}[2]{\F_{q^{#2}}[z]_{<#1}}
\newcommand{\sh}[2]{\mathcal{S}_{#2}\left(#1\right)}
\newcommand{\pu}[2]{\mathcal{P}_{#2}\left(#1\right)}
\newcommand{\sC}[1]{\sh{\code{C}}{ #1}}
\newcommand{\loc}[1]{\pi_{#1}}
\renewcommand{\Ind}{\mathcal{I}}
\newcommand{\Jind}{\mathcal{J}}
\begin{document}
\title{Polynomial Time Attack on Wild McEliece Over Quadratic Extensions}

\author{Alain Couvreur}
\address{INRIA \&  LIX, CNRS UMR 7161
--- \'Ecole Polytechnique, 91128 Palaiseau Cedex, France.}
\email{alain.couvreur@lix.polytechnique.fr}
 \author{Ayoub Otmani}
 \address{Normandie Univ, France; 
UR, LITIS, F-76821 Mont-Saint-Aignan, France.}
 \email{ayoub.otmani@univ-rouen.fr}
 \author{Jean--Pierre Tillich}
 \address{INRIA SECRET Project,   
78153 Le Chesnay Cedex, France.}
\email{jean-pierre.tillich@inria.fr}

\maketitle

\begin{abstract}
We present a polynomial-time structural attack against the McEliece system based on Wild Goppa codes defined over a quadratic finite field extension.
We show that such codes can be efficiently distinguished from random codes. The attack uses this property to compute a filtration, that is to say, a family
of nested subcodes which will reveal their secret algebraic description.
\end{abstract}

\bigskip

\noindent {\bf Key words:} Public key cryptography, cryptanalysis, McEliece,
distinguisher, Schur product, Goppa codes, generalized Reed Solomon codes,
alternant codes, filtration attack.

\bigskip

\bigskip

\section{Introduction}
\label{sec:intro}

\paragraph{\bf The McEliece cryptosystem and its security.}
The McEliece encryption scheme \cite{M78} which dates back to the end of the seventies  
still belongs to the very few public-key cryptosystems which remain unbroken.
It is based on the  famous  Goppa codes family. Several proposals  which suggested to replace 
binary Goppa codes with alternative families did not meet a similar fate.  
They all focus on a specific class of codes equipped with a decoding algorithm:
generalized Reed--Solomon  codes  (GRS for short) \cite{N86} or 
large subcodes of them \cite{BL05}, 
Reed--Muller codes \cite{S94}, algebraic geometry codes \cite{JM96}, LDPC and MDPC codes \cite{BBC08,MTSB13} 
or convolutional codes \cite{LJ12,GSJB14}. Most of them were successfully cryptanalyzed 
\cite{SS92,W10,MS07,FM08,OTD10,CGGOT14,LT13,CMP14,COTG15}.
Each time a description of the underlying 
code suitable for decoding is efficiently obtained. 
But some of them remain unbroken, namely those relying on
MDPC codes \cite{MTSB13} and their cousins \cite{BBC08}, the original binary Goppa codes of
\cite{M78} and their non-binary variants as proposed in \cite{BLP10,BLP11a}. 

Concerning the security of the McEliece proposal based on Goppa codes, weak keys were identified in \cite{G91,LS01} but they can be easily avoided.
There also exist  message recovery attacks 
using generic exponential time decoding algorithms
\cite{LB88,L88,S88,CC98,BLP08,MMT11,BJMM12}. 
More recently,  it  was shown in \cite{FOPT10,GL09}  that the secret structure of Goppa codes can be recovered by an algebraic attack using Gr\"obner bases.
This attack is of exponential nature and is infeasible for the original
McEliece scheme (the number of unknowns is linear in the length of the code),
whereas for variants using Goppa codes with a quasi-dyadic or quasi-cyclic
structure it was feasible due to the huge reduction of the number of unknowns.

\medbreak

\paragraph{\bf Distinguisher for Goppa and Reed-Solomon codes.} 

None of the existing strategies is able to severely dent the security of \cite{M78} when appropriate parameters are taken. Consequently, 
it has even been advocated that the generator matrix of a Goppa 
code does not disclose any visible structure that an attacker could exploit. 
This is strengthened by the fact that Goppa codes
share many characteristics with random codes.
However, in \cite{FGOPT11,FGOPT13}, an algorithm  that manages to 
distinguish between a random code and a high rate Goppa code has been introduced.

\medbreak

\paragraph{\bf Component wise products of codes.}
\cite{MP12} showed that the distinguisher given in \cite{FGOPT11} has an equivalent but simpler description
in terms of component-wise product of codes. 
This product allows in particular to define the square of a code.
This square code operation can be used to distinguish 
a high rate Goppa code from a random one because the dimension of the square of  the dual is 
much  smaller than the one obtained with a random code.
The notion of component-wise product of codes was first put forward to unify many 
different algebraic decoding algorithms \cite{P92,K92},
 then exploited in cryptology in \cite{W10} to break a McEliece  variant based on random subcodes
of GRS codes \cite{BL05} and  in \cite{CMP11,CMP11a, CMP14, CMP14a}
to study the security of encryption schemes using algebraic geometry codes.
Component-wise powers of codes are also studied in the context of
secret sharing and secure multi-party computation  \cite{CCCX09,CCX11}.

\medbreak

\paragraph{\bf Filtration key-recovery attacks.}
The works \cite{FGOPT11,FGOPT13}, without undermining the security of \cite{M78},
prompts to wonder whether it would be possible to devise an attack exploiting the distinguisher. 
That was indeed the case in \cite{CGGOT14} for McEliece-like public-key encryption
schemes relying on modified GRS codes \cite{BL12a,BBCRS11,W06}.
Additionally, \cite{CGGOT14} has shown that the unusually low dimension of
the square code of a generalized GRS code enables to compute a {\em filtration},
that is a nested sequence of subcodes, allowing the recovery of its algebraic
structure. This gives an attack that is radically different from the
Sidelnikov-Shestakov approach \cite{SS92}.
Notice that the first step of the Sidelnikov-Shestakov attack which consists in computing the minimal codewords 
and then using this information for recovering the algebraic structure has been fruitful for breaking 
other families of codes: for instance binary Reed-Muller codes 
\cite{MS07} or low-genus algebraic geometry codes \cite{FM08}.
This is not the approach we have followed here, because finding such codewords seems 
out of reach for the codes we are interested in, namely Goppa codes.
Our filtration attack is really a new paradigm for breaking public key cryptosystems
based on algebraic codes which in particular avoids 
the possibly very expensive computation of minimum weight codewords.

\medbreak

\paragraph{\bf Our contribution.}

The purpose of this article is to show that the
filtration attack of \cite{CGGOT14}
which gave a new way of attacking 
a McEliece scheme based on GRS codes
can be generalized to other families of codes.
Notice that this filtration approach was also followed later on with great success to break all schemes 
based on algebraic geometry codes \cite{CMP14}, whereas the aforementioned 
attack of Faure and Minder \cite{FM08} could handle only the case
of very low genus curves, due precisely to the expensive computation of the 
minimal codewords.
A tantalizing project would be to attack Goppa code based McEliece schemes,
or more
generally alternant code based schemes.
The latter family of codes are subfield subcodes defined over some field $\fq$
of GRS 
codes defined over a field extension $\F_{q^m}$.
Even for the smallest possible
field extension, that is for $m=2$, the cryptanalysis
of alternant codes
 is a completely open question.
Codes of this kind have indeed been
proposed  as possible improvements of  the original McEliece scheme, under
the form of {\it wild Goppa
codes} in \cite{BLP10}.
These 
are Goppa codes associated to polynomials of the form $\gamma^{q-1}$
where $\gamma$ is irreducible.
Notice that all irreducible binary Goppa codes of the original McEliece system
are actually wild Goppa codes.
Interestingly enough, it turns out that these wild Goppa codes for $m=2$ can be distinguished from random codes for a very large range of parameters
by observing that the square code 
of some of their shortenings
 have a small dimension compared to squares of random codes of the same
dimension.
It should be pointed out that
in the propositions of \cite{BLP10},
the case $m=2$ was particularly attractive since
it provided the smallest key sizes.

We show here that this distinguishing property can be used to compute an interesting
filtration of the public code, that is to say a family of nested subcodes of
the public Goppa code such that each element of the family is an alternant code
with the same support. 
This filtration can in turn be used to recover the algebraic description
of the Goppa code as an alternant code, which yields
an efficient key recovery attack.
This attack has been implemented in Magma \cite{BCP97} and allowed to break completely all the schemes 
with a claimed 128 bit security in Table 7.1 of  \cite{BLP10} corresponding to $m=2$ when the
degree of $\gamma$ is larger than $3$. 
This corresponds precisely to the case where these codes can be distinguished from random codes
by square code considerations. The filtration attack has a polynomial time complexity and basically boils
down to linear algebra. This is the first time in the 35 years of existence of the McEliece scheme based on Goppa codes that a polynomial time attack has been found on 
it. It questions the common belief that GRS codes are weak for a cryptographic use while Goppa codes are
secure as soon as $m \geq 2$ and that for the latter,
only generic information-set-decoding attacks apply.
 It also  raises the issue whether this algebraic distinguisher of Goppa
 and more generally alternant codes (see \cite{FGOPT13}) based on square code considerations
can be turned into an attack in the other cases where it applies (for instance
for Goppa codes of rate close enough to $1$). 
Finally, it is worth pointing out that our attack works
against codes without external symmetries confirming that the mere appearance of randomness is far from being
enough to defend codes against algebraic attacks.

It should also be  pointed out that subsequently to this work, it has been shown in \cite{FPP14} that 
one of the parameters of \cite{BLP10} that we have broken here can be attacked by Gr\"obner basis techniques 
by introducing an improvement of the
algebraic modelling of \cite{FOPT10} together with a new way of exploiting non-prime fields $\fq$. 
Unlike our attack, it also applies to field extensions that are larger than $2$ and to variations of the wild Goppa codes, called ``wild Goppa incognito'' in
\cite{BLP11a}.  However this new attack is exponential in nature and can be thwarted by using either more conservative 
parameters or prime fields $\fq$. Our attack on the other hand is much harder to avoid due to its polynomial complexity
and choosing $m=2$ together with wild Goppa codes seems now something that has to be considered with great care in a McEliece 
cryptosystem after our work.

\medbreak

\paragraph{\bf Outline of the article}
Our objective is to provide a self-contained article which could
be read by cryptographers who are not aware with coding theory.
For this reason, notation and many classical prerequisites
are given in Section~\ref{sec:nota}. The core of our attack is presented
in Sections~\ref{sec:distinguisher} and~\ref{sec:nested}.
Section~\ref{sec:distinguisher} presents a distinguisher on the public key
i.e. a manner to distinguish such codes from random ones and
Section~\ref{sec:nested} uses this distinguisher to compute a family of
nested subcodes of the public key providing information on the secret key.
Section~\ref{sec:attack} is devoted to a short overview of
the last part of the attack. Further technical details on the attack are given
in Appendix \ref{sec:in_depth}. Section \ref{sec:ifcompjust} is devoted
to the limits and possible extensions of this attack and
Section~\ref{sec:example} discusses the theoretical complexity of our 
algorithm and presents several running times of our Magma implementation
of the attack.

\medbreak

\paragraph{\bf Note} The material of this article was presented at the conference
{\em EUROCRYPT~2014} (Copenhagen, Denmark) and published in its proceedings \cite{COT14}.
 Due to space constraints, most of the proofs were
 omitted in the proceedings
version. The present article is a long revisited
version including all the missing proofs. 
Any proof which we did not consider as fundamental 
has 
been sent to the appendices.
We encourage the reader first to read the article without these
proofs and then read the appendices.

\section{Notation, Definitions and Prerequisites}
  \label{sec:nota}

We introduce in this section notation we will use in the sequel. We assume that the reader is familiar with notions from coding theory.
We refer to \cite{MS86} for the terminology. 

\subsection{Vectors, matrices  and Schur product}\label{ss:notaVect}
Vectors and  matrices are respectively denoted in bold letters 
and bold capital letters such as $\av$ and $\Am$.
We always denote the entries of a vector $\uv \in \Fq^n$ by $u_0, \ldots , u_{n-1}$.
Given a subset $\Ind \subset \{0, \ldots , n-1\}$, we denote by
$\uv_{\Ind}$ the vector $\uv$ \emph{punctured} at $\Ind$, that is to say, \emph{every entry with index in $\Ind$ is removed}.
When $\Ind = \{j\}$ 
we allow ourselves to write $\uv_{j}$ instead of $\uv_{\{j\}}$.
The component-wise product also called the Schur product $\spc{\uv}{\vv}$ of two vectors $\uv, \vv \in \Fq^n$ is defined as:
$$\spc{\uv}{\vv} \eqdef (u_0 v_0, \ldots , u_{n-1}v_{n-1}).$$
The $i$--th power $\uv\star \cdots \star \uv$ is denoted by
$\uv^{i}$. When every entry $u_i$ of $\uv$ is nonzero,
we set $$\uv^{-1} \eqdef (u_0^{-1}, \ldots , u_{n-1}^{-1}),$$ 
and more generally  for all $i$, we define $\uv^{-i}$ in the same manner. 
The operation $\star$ has an identity element, which is nothing but the
all-ones vector $(1, \ldots, 1)$ denoted by $\onev$.

\subsection{Polynomials}\label{subsec:poly}
The ring of polynomials with coefficients in $\Fq$  is denoted by $\Fq[z]$, 
while the subspace of $\Fq[z]$ of polynomials of degree
less than $t$ is denoted by $\Fq[z]_{<t}$.
For every rational fraction $P \in \Fq(z)$, 
with no poles at the elements $u_0, \ldots, u_{n-1}$,
$P(\uv)$ stands for $\left(P(u_0),
\ldots , P(u_{n-1}) \right)$. 
In particular for all $a , b\in\Fq$, $a\uv+b$ is the vector $(au_0+b, \ldots ,
au_{n-1}+b)$.

The \emph{norm} and \emph{trace} from $\Fqm$ to $\Fq$ can be viewed as
polynomials and applied componentwise to vectors in $\Fqm^n$.
In the present article we focus in particular on quadratic extensions ($ m = 2$)
which motivates the following notation
for all $\xv \in \Fqq^n$:
$$
   \nr(\xv)  \eqdef   \left(x_0^{q+1}, \ldots , x_{n-1}^{q+1} \right), \qquad
   \tr (\xv)  \eqdef  
  \left(x_0^q + x_0, \ldots , x_{n-1}^q + x_{n-1} \right).
$$
Finally, to each vector $\xv = (x_0, \ldots , x_{n-1}) \in \Fq^n$, we
associate its {\em locator polynomial} denoted as $\loc{\xv}$ and defined as:
$$
\loc{\xv}(z)\eqdef \prod_{i=0}^{n-1}(z-x_i).
$$

\subsection{Operations on codes}

For a given code $\DC \subseteq \Fq^n$ and a subset $\Ind \subseteq \{0, \ldots, n-1\}$
 the \emph{punctured} code $\pu{\DC}{\Ind}$ and \emph{shortened} code  $\sh{\DC}{\Ind}$
are defined as:
$$
\begin{aligned}
\pu{\DC}{\Ind} &\eqdef \Big \{ {(c_i)}_{i \notin \Ind}~|~ \cv \in \DC \Big \}; \\
\sh{\DC}{\Ind} &\eqdef \Big \{ {(c_i)}_{i \notin \Ind}~|~\exists \cv={(c_i)}_i \in \DC \text{ such that } 
\forall i \in \Ind,\; c_i = 0 \Big \}.  
\end{aligned}
$$
Instead of writing $\pu{\DC}{\{j\}}$ and $\sh{\DC}{\{j\}}$ when $\Ind = \{j\}$ we rather use the notation
$\pu{\DC}{j}$ and $\sh{\DC}{j}$.
The following classical result will be used repeatedly.

 \begin{proposition}
 \label{pr:dualshort}
 Let $\code{A} \subseteq \Fq^n$ be a code and $\Ind \subseteq \{0, \ldots , n-1\}$
   be a set of positions.  Then,
   $$
 \sh{\code{A}}{\Ind}^{\perp} = \pu{\code{A}^{\perp}}{\Ind}
   \quad \textrm{and}\quad
   \pu{\code{A}}{\Ind}^{\perp} = \sh{\code{A}^{\perp}}{\Ind}.
   $$
 \end{proposition}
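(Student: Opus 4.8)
The plan is to prove both identities by a direct double-inclusion argument on codewords, and then observe that the second identity follows from the first by duality. I would start with the first equality $\sh{\code{A}}{\Ind}^{\perp} = \pu{\code{A}^{\perp}}{\Ind}$.

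\medbreak

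First I would set up notation: write $\bar\Ind = \{0,\dots,n-1\}\setminus\Ind$ for the surviving positions, so that both sides live in $\Fq^{\bar\Ind}\cong\Fq^{n-\cInd}$. For the inclusion $\pu{\code{A}^{\perp}}{\Ind}\subseteq \sh{\code{A}}{\Ind}^{\perp}$, I would take $\hv\in\code{A}^\perp$ and show that its puncturing $\hv_\Ind$ is orthogonal to every element of $\sh{\code{A}}{\Ind}$. Indeed, any element of the shortened code is $\cv_\Ind$ for some $\cv\in\code{A}$ with $c_i=0$ for all $i\in\Ind$; then $\langle \hv_\Ind,\cv_\Ind\rangle = \sum_{i\notin\Ind} h_i c_i = \sum_{i=0}^{n-1} h_i c_i = \langle\hv,\cv\rangle = 0$, where the middle equality uses precisely that the omitted coordinates of $\cv$ vanish. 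This gives one inclusion.

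\medbreak

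For the reverse inclusion I would compare dimensions. Write $k=\dim\code{A}$. Puncturing $\code{A}^\perp$ at $\Ind$ is a surjective linear map, so $\dim \pu{\code{A}^{\perp}}{\Ind}\ge \dim\code{A}^\perp - \cInd = (n-k)-\cInd$; but one can be more careful — actually I would instead use the standard fact that $\dim\sh{\code{A}}{\Ind} \ge k - \cInd$ together with $\dim\pu{\code{A}^{\perp}}{\Ind}\le (n-k)$, and note that $\sh{\code{A}}{\Ind}$ and $\pu{\code{A}^{\perp}}{\Ind}$ both sit inside the ambient space $\Fq^{n-\cInd}$. Combining the inclusion already proved with the dimension bound $\dim\pu{\code{A}^{\perp}}{\Ind} \le \dim\sh{\code{A}}{\Ind}^\perp = (n-\cInd) - \dim\sh{\code{A}}{\Ind}$, it suffices to show $\dim\sh{\code{A}}{\Ind} + \dim\pu{\code{A}^{\perp}}{\Ind} \ge n-\cInd$. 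The cleanest route is to prove directly that $\dim\sh{\code{A}}{\Ind} = \dim\code{A} - \cInd + \dim\pu{(\code{A}^\perp)}{\Ind}^{\!\perp}$-type bookkeeping; concretely, shortening $\code{A}$ at $\Ind$ amounts to intersecting $\code{A}$ with the coordinate subspace $\{c : c_i=0,\ i\in\Ind\}$ and then projecting, and an easy rank computation gives $\dim\sh{\code{A}}{\Ind} = \dim\code{A} - \cInd + \dim\loc{}\!\big(\pu{\code{A}^\perp}{\Ind}^\perp\big)$. Rather than chase this, I would simply invoke the well-known rank identities $\dim\pu{\code{A}^\perp}{\Ind} + \dim\sh{\code{A}}{\Ind} = n - \cInd$ (equivalently $\dim\pu{\code{A}}{\Ind} + \dim\sh{\code{A}^\perp}{\Ind} = n-\cInd$), which follow from elementary linear algebra on the projection map $\Fq^n\to\Fq^{n-\cInd}$ restricted to $\code{A}$ and to $\code{A}^\perp$. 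With equality of dimensions and one inclusion, the first identity follows.

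\medbreak

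Finally, the second identity $\pu{\code{A}}{\Ind}^{\perp} = \sh{\code{A}^{\perp}}{\Ind}$ is obtained for free by applying the first identity to the code $\code{A}^\perp$ in place of $\code{A}$: we get $\sh{\code{A}^\perp}{\Ind}^\perp = \pu{(\code{A}^\perp)^\perp}{\Ind} = \pu{\code{A}}{\Ind}$, and taking duals of both sides (using $\code{B}^{\perp\perp}=\code{B}$) yields $\sh{\code{A}^\perp}{\Ind} = \pu{\code{A}}{\Ind}^\perp$, as desired. The only genuinely delicate point in the whole argument is the dimension count needed to upgrade the easy inclusion to an equality; everything else is a one-line orthogonality check or a formal duality manipulation. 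I expect the bookkeeping of which coordinates are removed versus retained to be the main place where care is required in writing this up cleanly.
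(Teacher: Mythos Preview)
The paper does not actually prove this proposition; it simply cites \cite[Theorem~1.5.7]{HP03}. Your argument --- one inclusion by a direct orthogonality computation, then equality via a dimension count, and the second identity by applying the first to $\code{A}^\perp$ --- is exactly the standard textbook proof, so in spirit you are reproducing what the cited reference contains.

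One caveat: the ``well-known rank identity'' $\dim\pu{\code{A}^\perp}{\Ind} + \dim\sh{\code{A}}{\Ind} = n - \cInd$ that you invoke to close the dimension count is essentially a reformulation of the very statement you are proving, so citing it as known is circular unless you establish it independently. This is easy to do: with $W_\Ind = \{\cv : c_i = 0,\ i\in\Ind\}$ and $V_\Ind = W_\Ind^\perp$, one has $\dim\sh{\code{A}}{\Ind} = \dim(\code{A}\cap W_\Ind)$ (the projection is injective there) and $\dim\pu{\code{A}^\perp}{\Ind} = \dim\code{A}^\perp - \dim(\code{A}^\perp\cap V_\Ind)$ (rank--nullity for the projection). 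Since $\code{A}^\perp\cap V_\Ind = (\code{A}+W_\Ind)^\perp$, the inclusion--exclusion formula $\dim(\code{A}+W_\Ind) = \dim\code{A} + \dim W_\Ind - \dim(\code{A}\cap W_\Ind)$ gives the identity after a two-line computation. If you insert this, your proof is complete and self-contained. The half-written bookkeeping line you abandoned midway (the one with the stray $\pi$) should simply be deleted.
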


 \begin{proof}
   See for instance \cite[Theorem 1.5.7]{HP03}
 \end{proof}

Given a code $\CC$ of length $n$ over a finite field extension $\F_{q^m}$ of
$\F_q$, the {\em subfield subcode} of $\CC$ over $\Fq$ is the code
$\CC \cap \Fq^n$. The {\em trace code} $\tr (\CC)$ is the image
of $\CC$ by the componentwise trace map $\tr_{\F_{q^m}/ \Fq}$.
 We recall an important result due to Delsarte establishing a link between 
subfield subcodes and trace codes.

 \begin{theorem}[Delsarte Theorem {\cite[Theorem 2]{D75}}]
 \label{th:Delsarte}
 Let $\code{E}$ be a linear code of length $n$ defined over $\Fqm$.
 Then 
 $$
 (\code{E} \cap \Fq^n) = {\tr(\code{E}^\perp)}^\perp.
 $$
 \end{theorem}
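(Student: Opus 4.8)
The plan is to deduce the statement from the more symmetric identity
$$\tr(\code{D})^{\perp} = \code{D}^{\perp} \cap \Fq^n$$
valid for every linear code $\code{D} \subseteq \Fqm^n$, where the left-hand dual is taken inside $\Fq^n$ and the right-hand one inside $\Fqm^n$. Granting this, it suffices to apply it to $\code{D} = \code{E}^{\perp}$ and to use that the bidual coincides with the code, i.e. $(\code{E}^{\perp})^{\perp} = \code{E}$, which yields exactly $(\code{E} \cap \Fq^n) = \tr(\code{E}^{\perp})^{\perp}$. Note first that $\tr(\code{D}) \subseteq \Fq^n$ since the trace takes values in $\Fq$, so both sides of the displayed identity are genuinely subcodes of $\Fq^n$.

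For the inclusion $\code{D}^{\perp} \cap \Fq^n \subseteq \tr(\code{D})^{\perp}$, I would take $\cv \in \code{D}^{\perp} \cap \Fq^n$ and $\dv \in \code{D}$, and compute, using that the $c_i$ lie in $\Fq$ and that the trace is $\Fq$-linear,
$$\sum_{i} c_i \tr(d_i) = \sum_i \tr(c_i d_i) = \tr\Big(\sum_i c_i d_i\Big) = \tr(0) = 0,$$
so $\cv$ is orthogonal to $\tr(\dv)$; since every element of $\tr(\code{D})$ is of this form, this gives $\cv \in \tr(\code{D})^{\perp}$.

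For the reverse inclusion I would exploit crucially that $\code{D}$ is $\Fqm$-linear. Let $\cv \in \tr(\code{D})^{\perp} \subseteq \Fq^n$ and $\dv \in \code{D}$; we must show $s \eqdef \sum_i c_i d_i = 0$. For any $\lambda \in \Fqm$ the vector $\lambda \dv$ again lies in $\code{D}$, hence $\tr(\lambda \dv) \in \tr(\code{D})$ and
$$0 = \sum_i c_i \tr(\lambda d_i) = \tr\Big(\lambda \sum_i c_i d_i\Big) = \tr(\lambda s).$$
Thus $\tr(\lambda s) = 0$ for every $\lambda \in \Fqm$. The trace form $(\lambda,\mu) \mapsto \tr(\lambda\mu)$ is a non-degenerate $\Fq$-bilinear form on $\Fqm$ (the extension $\Fqm/\Fq$ being separable), so this forces $s = 0$, i.e. $\cv \in \code{D}^{\perp}$, and therefore $\cv \in \code{D}^{\perp} \cap \Fq^n$.

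The main point to get right is this last step: passing from ``orthogonal to $\tr(\dv)$'' to ``orthogonal to $\dv$ over $\Fqm$'' requires both the freedom to twist a codeword of $\code{D}$ by an arbitrary scalar $\lambda \in \Fqm$ and the non-degeneracy of the trace pairing; everything else is a direct unwinding of the definitions together with the $\Fq$-linearity of $\tr$.
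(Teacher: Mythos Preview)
Your proof is correct and is in fact the standard argument for Delsarte's theorem. The paper itself does not prove this statement: it merely quotes it as a known result with a reference to \cite[Theorem 2]{D75}, so there is no ``paper's own proof'' to compare against. Your approach --- first establishing the symmetric identity $\tr(\code{D})^{\perp} = \code{D}^{\perp} \cap \Fq^n$ via the $\Fq$-linearity of the trace and the non-degeneracy of the trace pairing, then specialising to $\code{D} = \code{E}^{\perp}$ --- is exactly the classical one and is complete as written.
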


The following classical result is extremely useful in the next sections.

\begin{proposition}\label{prop:commutation}
  Let $\code{A}$ be a code over $\Fqm$ of length $n$ and $\Ind \subseteq
  \{0, \ldots , n-1\}$, then, we have:
  \begin{enumerate}[(a)]
  \item\label{it:punctTr} $ \pu{\tr(\code{A})}{\Ind}  =
        \tr (\pu{\code{A}}{\Ind}) $;
      
        \item\label{it:shortTr}  $\tr(\sh{\code{A}}{\Ind})  \subseteq
    \sh{\tr (\code{A})}{\Ind}$.
  
  \item\label{it:shortsub} $\sh{\code{A}}{\Ind} \cap \Fq^{n - |\Ind|}  =
        \sh{\code{A} \cap \Fq^n}{\Ind} $;
  
  \item\label{it:punctsub} $\pu{\code{A}\cap \Fq^n}{\Ind}  \subseteq
    \pu{\code{A}}{\Ind} \cap \Fq^{n-|\Ind|}$;
      \end{enumerate}
\end{proposition}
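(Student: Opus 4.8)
The plan is to prove each of the four items by tracking a single codeword, using the fact that the componentwise trace map $\tr = \tr_{\Fqm/\Fq}$, the puncturing operator $\pu{\cdot}{\Ind}$, and the property of having all coordinates in $\Fq$ are each defined \emph{coordinate by coordinate}. Consequently puncturing commutes with everything in sight, whereas shortening --- which, beyond deleting the coordinates in $\Ind$, also imposes a coordinatewise vanishing condition --- will interact with the other operations only up to an inclusion.

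For part~(a), I would start from the observation that for every $\cv \in \Fqm^n$ one has $\tr(\cv)_{\Ind} = \tr(\cv_{\Ind})$: applying $\tr$ entrywise and then deleting the entries indexed by $\Ind$ produces the same vector as deleting those entries first and then applying $\tr$ entrywise. Taking the image of both sides as $\cv$ runs over $\code{A}$ gives $\pu{\tr(\code{A})}{\Ind} = \tr(\pu{\code{A}}{\Ind})$. For part~(b), if $\cv \in \code{A}$ vanishes on $\Ind$, then $\tr(\cv) \in \tr(\code{A})$ also vanishes on $\Ind$ (because $\tr(0) = 0$), so $\tr(\cv_{\Ind}) = \tr(\cv)_{\Ind}$ belongs to $\sh{\tr(\code{A})}{\Ind}$; letting $\cv$ range over the codewords of $\code{A}$ that vanish on $\Ind$ yields the desired inclusion $\tr(\sh{\code{A}}{\Ind}) \subseteq \sh{\tr(\code{A})}{\Ind}$. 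One should not expect equality here, since a codeword of $\tr(\code{A})$ that vanishes on $\Ind$ need not be the trace of a codeword of $\code{A}$ that vanishes on $\Ind$.

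For part~(c), the inclusion $\supseteq$ is immediate: a witness $\cv \in \code{A} \cap \Fq^n$ with $c_i = 0$ for all $i \in \Ind$ is in particular a witness in $\code{A}$, and the punctured vector $\cv_{\Ind}$ lies in $\Fq^{n-|\Ind|}$. For $\subseteq$, take $\word{x}$ in $\sh{\code{A}}{\Ind} \cap \Fq^{n-|\Ind|}$ and lift it to some $\cv \in \code{A}$ with $c_i = 0$ for $i \in \Ind$ and $\cv_{\Ind} = \word{x}$; then every coordinate of $\cv$ lies in $\Fq$ --- those indexed by $\Ind$ are $0$, and the remaining ones are precisely the coordinates of $\word{x}$ --- so $\cv \in \code{A} \cap \Fq^n$, whence $\word{x} \in \sh{\code{A} \cap \Fq^n}{\Ind}$. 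Finally, for part~(d), given $\word{x} \in \pu{\code{A}\cap\Fq^n}{\Ind}$, any preimage $\cv \in \code{A} \cap \Fq^n$ with $\cv_{\Ind} = \word{x}$ lies both in $\code{A}$ and in $\Fq^n$, so $\word{x}$ lies in $\pu{\code{A}}{\Ind} \cap \Fq^{n-|\Ind|}$.

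There is no real obstacle in this proposition; the only step that deserves a second of care is the $\subseteq$ direction of part~(c), which genuinely relies on shortening rather than mere puncturing: it is the forced vanishing of $\cv$ on $\Ind$ that makes the lift $\cv$ automatically $\Fq$-rational. This is exactly why part~(d) cannot be upgraded to an equality --- a codeword of $\code{A}$ whose only coordinate outside $\Fq$ sits in $\Ind$ punctures to an $\Fq$-vector without lying in $\code{A}\cap\Fq^n$.
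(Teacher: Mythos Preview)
Your argument is correct. For parts~(a) and~(b) you do exactly what the paper does: observe that the componentwise trace commutes with puncturing, and that it preserves vanishing on $\Ind$.

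The difference lies in parts~(c) and~(d). You prove them directly by element-chasing, exploiting the fact that a lift which is forced to vanish on $\Ind$ is automatically $\Fq$-rational. The paper instead derives~(c) and~(d) from~(a) and~(b) by duality: combining Delsarte's theorem $(\code{A}\cap\Fq^n)^{\perp} = \tr(\code{A}^{\perp})$ with the shortening/puncturing duality $\sh{\code{A}}{\Ind}^{\perp} = \pu{\code{A}^{\perp}}{\Ind}$, one sees that~(c) is the dual statement of~(a) and~(d) is the dual statement of~(b). Your route is more self-contained (no appeal to Delsarte); the paper's route is slicker once the duality machinery is in place and, more importantly, explains conceptually why~(c) is an equality while~(d) is only an inclusion --- they inherit this from~(a) and~(b), where the asymmetry is transparent.
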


\begin{proof}
The componentwise trace map and the puncturing map commute 
with each 
other, which proves (\ref{it:punctTr}). 
To prove (\ref{it:shortTr}), let $\cv \in \code{A}$ be a codeword
whose entries with indexes in $\Ind$ are all equal to $0$. 
We have $\tr(c_{\Ind}) \in \tr(\sh{\code{A}}{\Ind})$.
Moreover, the entries
of $\tr (\cv)$ with indexes in $\Ind$
are also equal to $0$, hence $\tr(\cv_{\Ind}) =
\tr(\cv)_{\Ind} \in \sh{\tr(\code{A})}{\Ind}$. This proves (\ref{it:shortTr}).
By duality, (\ref{it:shortsub}) and 
(\ref{it:punctsub}) can be directly deduced from
(\ref{it:punctTr}) and (\ref{it:shortTr}) thanks to
Proposition~\ref{pr:dualshort} and Theorem~\ref{th:Delsarte}.
\end{proof}

\subsection{Generalized Reed--Solomon and Alternant codes}

\begin{definition}[Generalized Reed-Solomon code] \label{def:defGRS}
Let $q$ be a prime power and
$k, n$ be integers such that $1 \leqslant k < n \leqslant q$.
Let $\xv$ and $\yv$ be two $n$-tuples such that the entries of $\xv$ are
pairwise distinct elements of $\fq$  and those of $\yv$ are nonzero elements in $\fq$.
The {\em generalized Reed-Solomon} code (GRS in short) $\GRS{k}{\xv}{\yv}$ of dimension $k$ associated to 
$(\xv,\yv)$ is defined as
\begin{eqnarray*}
\GRS{k}{\xv}{\yv} &\eqdef& \Big
\{ \big( y_0p(x_0),\dots{},y_{n-1}p(x_{n-1}) \big) ~\big |~ p \in \fq[z]_{< k} 
\Big\} \\
&=&  \Big
\{  \spc{\yv}{p(\xv)} ~\big |~ p \in \fq[z]_{< k} 
\Big\}.
\end{eqnarray*}

Reed-Solomon codes correspond to the case
where $\yv = \onev$ and are denoted as $\RS{k}{\xv}$.
The vectors $\xv$ and $\yv$ are called the {\em support} and the {\em multiplier} of the code.
\end{definition}

In the sequel, we will also use the terms support and multiplier without refering to a 
generalized Reed-Solomon code -- this term will also appear in the context of Goppa and
alternant codes. In this case, when we say that a vector $\xv \in \fq^n$ is a {\em support}, this means
that all its entries are distinct. Likewise, when we say that a vector $\yv \in \fq^n$ is a {\em multiplier},
this means that all its entries are different from zero. 
\begin{proposition}
  \label{prop:dualGRS} Let $\xv, \yv$ be as in 
  Definition \ref{def:defGRS}.
  Then,
  $$
\GRS{k}{\xv}{\yv}^{\perp} = \GRS{n-k}{\xv}{\spc{\yv^{-1}}{\loc{\xv}'(\xv)^{-1}}}
$$
where $\loc{\xv}$ is the locator polynomial of $\xv$ as defined in
\S~\ref{subsec:poly} and $\loc{\xv}'$ denotes its first derivative.
  \end{proposition}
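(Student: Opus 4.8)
The plan is to compute the dual of a GRS code by pairing it against a Reed--Solomon code and using the classical fact that sums of polynomial values over the full evaluation set vanish below a certain degree. Concretely, write $\GRS{k}{\xv}{\yv}$ as the set of vectors $\spc{\yv}{p(\xv)}$ with $\deg p < k$, and consider a candidate dual codeword $\spc{\zv}{r(\xv)}$ with $\zv = \spc{\yv^{-1}}{\loc{\xv}'(\xv)^{-1}}$ and $\deg r < n-k$. The Euclidean inner product of these two vectors is $\sum_{i=0}^{n-1} \frac{p(x_i) r(x_i)}{\loc{\xv}'(x_i)}$, where the factors $y_i$ and $y_i^{-1}$ cancel. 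So everything reduces to showing this sum vanishes whenever $\deg(pr) < n-1$.

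First I would recall (or prove in one line by partial fractions) the Lagrange-interpolation identity: for any polynomial $f$ with $\deg f \leqslant n-1$, one has $f(z) = \sum_{i=0}^{n-1} f(x_i) \frac{\loc{\xv}(z)}{(z-x_i)\loc{\xv}'(x_i)}$, since both sides are polynomials of degree $\leqslant n-1$ agreeing at the $n$ distinct points $x_0,\dots,x_{n-1}$. Comparing the coefficient of $z^{n-1}$ on both sides gives $\sum_{i=0}^{n-1} \frac{f(x_i)}{\loc{\xv}'(x_i)} = [z^{n-1}]f$, which is $0$ as soon as $\deg f \leqslant n-2$. Applying this with $f = pr$, where $\deg p \leqslant k-1$ and $\deg r \leqslant n-k-1$, we get $\deg(pr) \leqslant n-2$, hence the inner product vanishes. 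This shows $\GRS{n-k}{\xv}{\zv} \subseteq \GRS{k}{\xv}{\yv}^\perp$.

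To conclude equality, I would argue by dimension. Both $\xv$ has distinct entries and $\zv$ has nonzero entries (the $y_i$ are nonzero and $\loc{\xv}'(x_i) \neq 0$ because the $x_i$ are distinct, so $\loc{\xv}$ has simple roots), so $\GRS{n-k}{\xv}{\zv}$ is a genuine GRS code of dimension $n-k$: indeed the evaluation map $p \mapsto \spc{\zv}{p(\xv)}$ from $\fq[z]_{<n-k}$ to $\fq^n$ is injective, since a nonzero polynomial of degree $< n-k \leqslant n$ cannot vanish at all $n$ distinct points $x_i$, and the multiplier $\zv$ has no zero entries. Since $\dim \GRS{k}{\xv}{\yv}^\perp = n - k = \dim \GRS{n-k}{\xv}{\zv}$, the inclusion is an equality.

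I do not expect any real obstacle here; the only mild subtlety is making sure all the hypotheses of Definition~\ref{def:defGRS} are met by the pair $(\xv, \zv)$ — distinctness of the support is inherited, and nonvanishing of the multiplier uses simplicity of the roots of $\loc{\xv}$, which in turn uses distinctness of the $x_i$ — and being careful that the degree bookkeeping $\deg(pr) \leqslant (k-1)+(n-k-1) = n-2$ is strict enough to kill the top coefficient. Everything else is the standard Lagrange/partial-fraction computation and a dimension count.
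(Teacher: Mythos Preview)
Your proof is correct and is the standard argument for this classical fact. The paper itself does not give a proof at all: it simply cites \cite[Prop.~5.2 \& Problems 5.6, 5.7]{R06}, so there is no in-paper argument to compare against, and your Lagrange-interpolation/dimension-count approach is exactly what one finds in such references.
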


  \begin{proof}
    See for instance \cite[Prop. 5.2 \& Problems 5.6, 5.7]{R06}.
  \end{proof}

This leads to the definition of alternant codes (\cite[Chap. 12, \S~2]{MS86}).
\begin{definition}[Alternant code]\label{def:subfield_subcode}
Let $\xv, \yv \in \Fqm^n$ be a support and a multiplier.
Let $\ell$ be a positive integer,
the alternant code $\Alt{\ell}{\xv}{\yv}$ defined over $\fq$
is defined as
$$
\Alt{\ell}{\xv}{\yv} \eqdef \GRS{\ell}{\xv}{\yv}^{\perp} \cap \fq^n.
$$
The integer $\ell$ is referred to as the {\em degree} of the alternant code
and $m$ as its {\em extension degree}.
\end{definition}

\begin{prop}[{\cite[Chap. 12, {\S}~2]{MS86}}]\label{prop:designed}
  Let $\xv, \yv$ be as in Definition \ref{def:subfield_subcode}. 
  \begin{enumerate}
  \item\label{it:designed_dim} $\dim_{\Fq} \Alt{\ell}{\xv}{\yv} \geq n - m\ell$;
  \item $d_{\textrm{min}}(\Alt{\ell}{\xv}{\yv}) \geq \ell+1$;
  \end{enumerate}
  where $d_{\textrm{min}}(\cdot)$ denotes the minimum distance of a code.
\end{prop}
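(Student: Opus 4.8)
The plan is to establish both parts of Proposition~\ref{prop:designed} by working with the parity-check description of the alternant code. First I would recall the standard fact that follows immediately from the definition and Proposition~\ref{prop:dualGRS}: a vector $\cv \in \fq^n$ lies in $\Alt{\ell}{\xv}{\yv} = \GRS{\ell}{\xv}{\yv}^{\perp} \cap \fq^n$ if and only if $\cv \in \Fqm^n$ and $\cv$ is orthogonal to $\GRS{\ell}{\xv}{\yv}$, equivalently
$$
\sum_{i=0}^{n-1} c_i\, y_i\, x_i^j = 0 \qquad \text{for all } 0 \leqslant j < \ell.
$$
Thus the code is the $\fq$-kernel of the $\ell \times n$ matrix $\Vm$ over $\Fqm$ whose $(j,i)$ entry is $y_i x_i^j$, i.e. a generalized Vandermonde matrix scaled by the multiplier $\yv$.

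For part~\ref{it:designed_dim}, the bound $\dim_{\Fq}\Alt{\ell}{\xv}{\yv} \geqslant n - m\ell$ follows by the usual subfield-subcode counting argument: expressing each of the $\ell$ linear constraints over $\Fqm$ in terms of an $\Fq$-basis of $\Fqm$ turns it into at most $m$ linear constraints over $\Fq$, so the solution space over $\Fq$ has codimension at most $m\ell$. I would phrase this cleanly via Delsarte's theorem (Theorem~\ref{th:Delsarte}): $\Alt{\ell}{\xv}{\yv} = \bigl(\tr(\GRS{\ell}{\xv}{\yv})\bigr)^{\perp}$, and since $\dim_{\Fq}\tr(\GRS{\ell}{\xv}{\yv}) \leqslant m\dim_{\Fqm}\GRS{\ell}{\xv}{\yv} = m\ell$, taking duals over $\Fq$ gives the claim.

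For part~\ref{it:designed_dim} on the minimum distance, suppose $\cv \in \Alt{\ell}{\xv}{\yv}$ is nonzero with support $S = \{i : c_i \neq 0\}$, and assume for contradiction that $|S| \leqslant \ell$. Restricting the $\ell$ orthogonality relations to the columns indexed by $S$ yields a homogeneous linear system whose coefficient matrix contains, as a submatrix, the $|S|\times|S|$ matrix with entries $x_i^j$ for $i \in S$, $0 \leqslant j < |S|$ — a Vandermonde matrix in the pairwise distinct nodes $\{x_i : i \in S\}$, hence invertible. Since the $y_i$ are nonzero, this forces $c_i = 0$ for all $i \in S$, contradicting $\cv \neq 0$. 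Therefore every nonzero codeword has weight at least $\ell+1$, i.e. $d_{\min} \geqslant \ell+1$.

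The argument is entirely routine; the only mild care needed is in the dimension bound, where one must make sure the trace-code inequality $\dim_{\Fq}\tr(\CC) \leqslant m\dim_{\Fqm}\CC$ is invoked correctly (it is a general fact about the $\Fq$-linear surjection $\CC^m \twoheadrightarrow \tr(\CC)$ obtained from an $\Fq$-basis of $\Fqm$), and in the minimum-distance bound, where one must be careful that picking any $|S|$ of the $\ell$ available rows suffices precisely because $|S| \leqslant \ell$. No genuine obstacle is expected.
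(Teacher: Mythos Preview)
Your proof is correct and is exactly the standard argument; the paper itself does not prove this proposition but simply cites \cite[Chap.~12, \S~2]{MS86}, where the same Vandermonde/parity-check reasoning appears. One small slip: in your second paragraph on the minimum distance you wrote ``For part~\ref{it:designed_dim}'' but clearly mean part~(2).
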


From Definition~\ref{def:subfield_subcode}, it is clear that alternant codes inherit the decoding algorithms of 
the underlying GRS codes.
The key feature of an alternant code is the following fact (see \cite[Chap. 12, {\S}~9]{MS86}):
\begin{fact}
\label{fa:decoding}
There exists a polynomial time algorithm decoding all errors of Hamming weight at 
most $\lfloor \frac{\ell}{2} \rfloor$ once
the vectors $\xv$ and $\yv$ are known.
\end{fact}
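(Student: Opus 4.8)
The plan is to exhibit the classical syndrome-based decoder for alternant codes (as in \cite[Chap. 12, {\S}~9]{MS86}), checking that its decoding radius is exactly the quantity claimed. Write $t \eqdef \lfloor \ell/2 \rfloor$. By Definition~\ref{def:subfield_subcode} a parity-check matrix for $\Alt{\ell}{\xv}{\yv}$ over $\Fqm$ is a generator matrix of $\GRS{\ell}{\xv}{\yv}$, namely the $\ell \times n$ matrix $\Hm$ whose $(j,i)$-entry is $y_i x_i^{\,j}$ for $0 \le j < \ell$ (obtained by running $p$ through $1,z,\dots,z^{\ell-1}$ in Definition~\ref{def:defGRS}). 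Given a received word $\word{r} = \cv + \ev$ with $\cv \in \Alt{\ell}{\xv}{\yv}$ and $\wt{\ev} \le t$, the decoder first computes the syndrome $\sv \eqdef \Hm \word{r}^{\mathsf T} = \Hm \ev^{\mathsf T}$, since $\Hm\cv^{\mathsf T}=0$; we record it as the syndrome polynomial $S(z) \eqdef \sum_{j=0}^{\ell-1} s_j z^j$, where $s_j = \sum_{i \in E} y_i e_i x_i^{\,j}$ and $E \eqdef \suppt{\ev}$.

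Introduce the error-locator polynomial $\sigma(z) \eqdef \prod_{i \in E}(1 - x_i z)$ and the error-evaluator polynomial $\omega(z) \eqdef \sum_{i \in E} y_i e_i \prod_{j \in E \setminus \{i\}}(1 - x_j z)$, so that $\deg \sigma \le t$, $\deg \omega < t$, and $\gcd(\sigma,\omega)=1$. Expanding each geometric series $1/(1-x_i z)$ modulo $z^\ell$ yields the key equation $\sigma(z) S(z) \equiv \omega(z) \pmod{z^{\ell}}$. Because $\deg\sigma+\deg\omega < 2t \le \ell$, such a pair $(\sigma,\omega)$ can be recovered from $S$ in $O(\ell^2)$ operations in $\Fqm$ by the extended Euclidean algorithm applied to $z^{\ell}$ and $S(z)$, stopped at the first remainder of degree $< t$ (equivalently, by the Berlekamp--Massey algorithm). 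The error positions are then exactly the indices $i$ with $\sigma(x_i^{-1}) = 0$, found by evaluating $\sigma$ at $x_0^{-1},\dots,x_{n-1}^{-1}$, and the error values follow from Forney's formula $e_i = -\,\omega(x_i^{-1})\big/\big(y_i\,x_i^{-1}\,\sigma'(x_i^{-1})\big)$; both steps cost $O(n\ell)$ operations in $\Fqm$. The algorithm outputs $\cv \eqdef \word{r} - \ev$.

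For correctness, note that by Proposition~\ref{prop:designed} the minimum distance of $\Alt{\ell}{\xv}{\yv}$ is at least $\ell+1 > 2t$, so there is at most one codeword within Hamming distance $t$ of $\word{r}$; hence the solution of the key equation with the stated degree bounds is unique up to a scalar and reconstructs the true $\ev$. Since $\cv$ and $\word{r}$ both lie in $\fq^n$, the computed $\ev$ automatically lies in $\fq^n$, so no extra projection to the subfield subcode is needed. All the operations above are a bounded number of Euclidean-algorithm, polynomial-evaluation, and linear-algebra steps over $\Fqm$, hence their total cost is polynomial in $n$ (and in $\log q$), which proves the claim. There is no genuine difficulty here; the only point that needs care is matching the decoding radius, i.e. verifying that the degree threshold $t = \lfloor \ell/2 \rfloor$ used to stop the Euclidean algorithm is precisely the radius that the designed distance $\ell+1$ of Proposition~\ref{prop:designed} guarantees to be uniquely correctable.
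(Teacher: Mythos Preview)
The paper does not prove this statement at all; it merely records it as a Fact with the reference \cite[Chap.~12, \S 9]{MS86}. Your write-up is a correct rendition of precisely the classical syndrome/key-equation decoder found in that reference (Berlekamp--Massey or extended Euclid on $z^{\ell}$ and $S(z)$, followed by a Chien search and Forney's formula), so there is no substantive difference in approach to discuss.

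One small point worth tidying: your locator $\sigma(z)=\prod_{i\in E}(1-x_iz)$ and your Forney formula both implicitly assume $x_i\neq 0$ for every error position, whereas Definition~\ref{def:defGRS} allows one support entry to be $0$. This is harmless---one either detects the missing root by comparing $\deg\sigma$ with the number of errors and recovers $e_i$ from one remaining syndrome equation, or one first applies an affine shift of the support (cf.\ the equality $\Alt{r}{\xv}{\yv}=\Alt{r}{a\xv+b}{\yv}$ established in the proof of Lemma~\ref{lem:2transitive}) to move every $x_i$ away from $0$---but you should say so explicitly.
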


The following description of alternant codes, will be extremely useful in
this article.

\begin{lemma}
  \label{lem:decription_as_eval} Let $\xv$, $\yv$, $\ell$ be as in Definition~\ref{def:subfield_subcode}. We have:
\begin{eqnarray*}
\Alt{\ell}{\xv}{\yv} &=&  \cset{\frac{1}{y_i \loc{\xv}'(x_i)}f(x_i)}{0\leq i <n}{f\in \polym{n-\ell}{m}} \cap \fq^n\\
&=& \csetstar{ \yv^{-1} \star \loc{\xv}'(\xv)^{-1} \star f(\xv) }
{f\in \polym{n-\ell}{m}} \cap \fq^n.
\end{eqnarray*}
\end{lemma}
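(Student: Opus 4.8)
The plan is to deduce this description directly from Definition~\ref{def:subfield_subcode} together with the formula for the dual of a GRS code given in Proposition~\ref{prop:dualGRS}. By definition, $\Alt{\ell}{\xv}{\yv} = \GRS{\ell}{\xv}{\yv}^{\perp} \cap \fq^n$, so the whole statement reduces to describing $\GRS{\ell}{\xv}{\yv}^{\perp}$ as a code over $\fqm$ and then intersecting with $\fq^n$. Applying Proposition~\ref{prop:dualGRS} with $k=\ell$ and $n-k = n-\ell$, we obtain
\[
\GRS{\ell}{\xv}{\yv}^{\perp} = \GRS{n-\ell}{\xv}{\spc{\yv^{-1}}{\loc{\xv}'(\xv)^{-1}}}.
\]

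Next I would simply unfold the definition of this GRS code. By Definition~\ref{def:defGRS}, a codeword of $\GRS{n-\ell}{\xv}{\spc{\yv^{-1}}{\loc{\xv}'(\xv)^{-1}}}$ is exactly a vector of the form $\spc{\left(\spc{\yv^{-1}}{\loc{\xv}'(\xv)^{-1}}\right)}{f(\xv)}$ for some polynomial $f \in \fqm[z]_{<n-\ell}$, which in the paper's notation is $\polym{n-\ell}{m}$. Using associativity and commutativity of the Schur product $\star$, this is precisely $\yv^{-1} \star \loc{\xv}'(\xv)^{-1} \star f(\xv)$, i.e.\ the $i$-th coordinate equals $\frac{1}{y_i \loc{\xv}'(x_i)} f(x_i)$. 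Intersecting with $\fq^n$ then yields both displayed expressions of the lemma, the two being just two notations (coordinatewise versus Schur-product form) for the same set. This completes the argument.

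There is essentially no obstacle here: the lemma is a direct transcription of the dual-GRS formula, and the only mild subtlety is bookkeeping --- making sure that $\loc{\xv}'(\xv)^{-1}$ is well defined (which holds because the entries of $\xv$ are pairwise distinct, so $\loc{\xv}'(x_i)\neq 0$ for all $i$) and that $\yv$ is a multiplier (entries nonzero), so that $\yv^{-1}$ makes sense, exactly as assumed in Definition~\ref{def:subfield_subcode}. One should also note that Proposition~\ref{prop:dualGRS} is stated for $\xv,\yv$ over $\fq$, but the same computation is valid verbatim over the extension field $\fqm$, since the proof only uses the field structure and the bound $n \leqslant q^m$ implicit in having $n$ distinct support entries; alternatively one invokes Proposition~\ref{prop:dualGRS} with $q$ replaced by $q^m$. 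Hence the statement follows immediately.
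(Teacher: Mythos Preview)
Your argument is correct and is exactly the intended one: the paper does not even spell out a proof of this lemma, treating it as an immediate consequence of Definition~\ref{def:subfield_subcode} and Proposition~\ref{prop:dualGRS}, which is precisely the derivation you give. Your remarks on the well-definedness of $\yv^{-1}$ and $\loc{\xv}'(\xv)^{-1}$, and on applying Proposition~\ref{prop:dualGRS} over $\fqm$, are the right sanity checks.
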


\subsection{Classical Goppa codes}

\begin{definition}
\label{def:ClassicalGoppa}
Let $\xv \in \Fqm^n$ be a support 
and $\Gamma \in \Fqm[z]$
be a polynomial such that $\Gamma (x_i)\neq 0$ for
all $i \in \{0, \ldots, n-1\}$.
The classical Goppa code $\Goppa{\xv}{\Gamma}$ over $\Fq$
associated to $\Gamma$ and supported by $\xv$
is defined as
$$
\Goppa{\xv}{\Gamma} \eqdef \Alt{\deg \Gamma}{\xv}{\Gamma(\xv)^{-1}}.
$$
We call $\Gamma$ the {\em Goppa polynomial} and $m$
the {\em extension degree} of the Goppa code.
\end{definition}
As for alternant codes, the following description of Goppa codes,
which is due to Lemma~\ref{lem:decription_as_eval}
will be extremely useful in this article.
\begin{lemma}\label{lem:descr_Goppa_as_eval}  
Let $\xv, \Gamma$ be as in Definition \ref{def:ClassicalGoppa}. We have,
  $$
  \begin{aligned}
      \Goppa{\xv}{\Gamma}&=
  \cset{\frac{\Gamma(x_i)}{\loc{\xv}'(x_i)}f(x_i)}{0\leq i <n}{f \in \polym{n-\deg (\Gamma)}{m}}\cap \Fq^{n}\\
                         &=
  \csetstar{\Gamma(\xv)\star \loc{\xv}'(\xv)^{-1}\star f(\xv) }
{f \in \polym{n-\deg (\Gamma)}{m}}\cap \Fq^{n}.
  \end{aligned}
  $$
\end{lemma}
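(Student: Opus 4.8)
The plan is to derive this description of $\Goppa{\xv}{\Gamma}$ directly from Lemma~\ref{lem:decription_as_eval} applied with the specific multiplier dictated by Definition~\ref{def:ClassicalGoppa}. By definition we have $\Goppa{\xv}{\Gamma} = \Alt{\deg \Gamma}{\xv}{\Gamma(\xv)^{-1}}$, so I set $\ell \eqdef \deg \Gamma$ and $\yv \eqdef \Gamma(\xv)^{-1}$ and feed these into Lemma~\ref{lem:decription_as_eval}. The lemma then yields
$$
\Goppa{\xv}{\Gamma} = \csetstar{ \yv^{-1} \star \loc{\xv}'(\xv)^{-1} \star f(\xv) }{f\in \polym{n-\ell}{m}} \cap \fq^n.
$$

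The only remaining step is to rewrite $\yv^{-1}$. Since every entry of $\yv$ is $\Gamma(x_i)^{-1}$, which is well-defined and nonzero because $\Gamma(x_i) \neq 0$ for all $i$ (this is part of the hypothesis of Definition~\ref{def:ClassicalGoppa}, and it also confirms that $\Gamma(\xv)^{-1}$ is a legitimate multiplier), we get $\yv^{-1} = \Gamma(\xv)$ componentwise. Substituting this in and replacing $\ell$ by $\deg(\Gamma)$ gives exactly
$$
\Goppa{\xv}{\Gamma} = \csetstar{\Gamma(\xv)\star \loc{\xv}'(\xv)^{-1}\star f(\xv) }{f \in \polym{n-\deg (\Gamma)}{m}}\cap \Fq^{n},
$$
which is the second displayed formula; the first is just the same statement written entry by entry, using that $(\uv \star \vv)_i = u_i v_i$.

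There is essentially no obstacle here: the statement is a corollary of Lemma~\ref{lem:decription_as_eval} obtained by specializing the multiplier, and the content of the lemma itself (which rests on Proposition~\ref{prop:dualGRS} describing the dual of a GRS code, together with Definition~\ref{def:subfield_subcode}) has already been established earlier in the excerpt. The one point worth stating explicitly in the write-up is why the substitution $\yv^{-1} = \Gamma(\xv)$ is valid, namely the non-vanishing condition $\Gamma(x_i) \neq 0$, which guarantees both that $\Gamma(\xv)^{-1}$ is a valid multiplier (all entries nonzero, as required to invoke Lemma~\ref{lem:decription_as_eval}) and that inverting it returns $\Gamma(\xv)$. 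Hence the proof is a two-line deduction.
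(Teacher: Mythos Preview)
Your proposal is correct and matches the paper's own treatment: the paper does not give an explicit proof but simply states that the description ``is due to Lemma~\ref{lem:decription_as_eval}'', i.e.\ it is the specialization of that lemma to the multiplier $\yv = \Gamma(\xv)^{-1}$, exactly as you wrote.
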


The interesting point about this subfamily of alternant codes is that under some conditions, Goppa codes can correct more errors than a general alternant code. 
\begin{theorem}[{\cite[Theorem 4]{SKHN76}}] \label{thm:SKHN76}
Let $\gamma \in \Fqm[z]$ be a squarefree polynomial.
Let $\xv \in \Fqm^n$ be a support,
then
$$
\Goppa{\xv}{\gamma^{q-1}} = \Goppa{\xv}{\gamma^q}.
$$
\end{theorem}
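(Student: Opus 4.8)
**Plan for the proof of Theorem \ref{thm:SKHN76} (the "wild Goppa" identity $\Goppa{\xv}{\gamma^{q-1}} = \Goppa{\xv}{\gamma^q}$).**

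The plan is to use the characterization of membership in a Goppa code via the classical "Goppa condition" on rational fractions. Recall that for a polynomial $\Gamma \in \Fqm[z]$ with $\Gamma(x_i) \neq 0$ for all $i$, a vector $\cv \in \fq^n$ lies in $\Goppa{\xv}{\Gamma}$ if and only if the rational fraction $R_{\cv}(z) \eqdef \sum_{i=0}^{n-1} \frac{c_i}{z - x_i}$ satisfies $R_{\cv} \equiv 0 \pmod{\Gamma}$, meaning that when $R_{\cv}$ is written as $\frac{N_{\cv}(z)}{\loc{\xv}(z)}$ with $\gcd(N_{\cv},\loc{\xv}) = 1$, one has $\Gamma \mid N_{\cv}$. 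This is the standard reformulation of Definition \ref{def:ClassicalGoppa} (equivalently of Lemma \ref{lem:descr_Goppa_as_eval}). Since $\gamma^{q-1}$ divides $\gamma^q$, the inclusion $\Goppa{\xv}{\gamma^q} \subseteq \Goppa{\xv}{\gamma^{q-1}}$ is immediate, so the whole content is the reverse inclusion, and squarefreeness of $\gamma$ is exactly what makes it work.

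For the reverse inclusion, first I would reduce to the case where $\gamma$ is irreducible: since $\gamma$ is squarefree it factors as $\gamma = \prod_j \pi_j$ with the $\pi_j$ distinct irreducible polynomials, and $\Gamma \mid N_{\cv}$ for $\Gamma \in \{\gamma^{q-1},\gamma^q\}$ is equivalent to $\pi_j^{q-1} \mid N_{\cv}$ (resp.\ $\pi_j^q \mid N_{\cv}$) for every $j$ by coprimality (CRT). So it suffices to show: if $\pi$ is irreducible, $\pi \nmid \loc{\xv}$, and $\pi^{q-1} \mid N_{\cv}$ where $\cv \in \fq^n$ and $N_{\cv}/\loc{\xv} = \sum_i c_i/(z-x_i)$, then in fact $\pi^q \mid N_{\cv}$. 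The key step is a derivative/Frobenius argument. Differentiating $R_{\cv}(z) = \sum_i \frac{c_i}{z-x_i}$ gives $R_{\cv}'(z) = -\sum_i \frac{c_i}{(z-x_i)^2}$, and over a field of characteristic $p$ (with $q$ a power of $p$) one can extract a "Frobenius part": working modulo a suitable power of $\pi$ and using that the $x_i$ and $c_i$ determine things, one shows that the multiplicity of $\pi$ in $N_{\cv}$ cannot be exactly $q-1$. Concretely, write the $\pi$-adic valuation $v_\pi(N_{\cv}) = r$; the hypothesis is $r \geq q-1$ and we want $r \geq q$. Expanding $R_{\cv}$ and its derivative $\pi$-adically, one gets that $v_\pi(R_{\cv}') \geq v_\pi(R_{\cv}) - 1$ generically, but the real input is the identity $R_{\cv}' = \left(\text{something}\right)$ combined with the fact that in characteristic $p$, $R_{\cv}'$ relates $R_{\cv}$ to a $p$-th power structure; together with $q - 1 \equiv -1 \pmod p$ this forces the valuation to jump from $q-1$ to $q$. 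I would carry this out by writing $N_{\cv} = \pi^{r} M$ with $\pi \nmid M$, differentiating the relation $N_{\cv} = R_{\cv}\cdot \loc{\xv}$, and examining valuations modulo $\pi$, using $\loc{\xv}' = \sum_i \loc{\xv}/(z-x_i)$ and $\pi \nmid \loc{\xv}$.

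The main obstacle I anticipate is making the characteristic-$p$ derivative argument fully rigorous and self-contained: one has to be careful that $\pi$ is squarefree (so $\pi \nmid \pi'$, which holds since $\pi$ is irreducible and separable over the finite field $\Fqm$), and one must correctly track how the valuation $v_\pi$ behaves under differentiation and under the passage $r = q-1 \rightsquigarrow r = q$ — the slick way is to observe that $\gamma^{q-1} = \gamma^q/\gamma$ and that $\cv \in \fq^n$ (the subfield condition) is what lets a $q$-th power argument, i.e. $c_i = c_i^q$, enter. An alternative, cleaner route I would seriously consider is to invoke the description in Lemma \ref{lem:descr_Goppa_as_eval} directly and argue on the interpolation side, but I expect the rational-fraction/valuation argument above to be the most transparent. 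In any case the squarefreeness of $\gamma$ is used precisely to guarantee $\gcd(\gamma, \gamma') = 1$, which is the hinge of the valuation jump.
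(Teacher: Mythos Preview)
The paper does not prove this theorem at all; it is simply quoted from \cite{SKHN76} (and the generalization in Remark~\ref{rem:wild_Goppa} likewise), so there is no ``paper's own proof'' to compare against.

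Your overall plan is the classical one and the reductions you make (Goppa rational-fraction condition, trivial inclusion $\Goppa{\xv}{\gamma^{q}}\subseteq\Goppa{\xv}{\gamma^{q-1}}$, passage to each irreducible factor, and the intention to exploit $c_i^q=c_i$ together with $\gcd(\gamma,\gamma')=1$) are all correct. The genuine gap is the mechanism you propose for the ``valuation jump''. Writing $N_{\cv}=\pi^{r}M$ and differentiating the relation $N_{\cv}=R_{\cv}\loc{\xv}$ only tells you that if $v_\pi(N_{\cv})=q-1$ then $v_\pi(N_{\cv}')=q-2$, which is perfectly consistent with $v_\pi(N_{\cv})=q-1$ and yields no contradiction; a single formal derivative does not see the subfield hypothesis $c_i\in\Fq$, which is where the whole strength of the statement lies. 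So as written your sketch does not close.

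The missing step is short once seen. Work at a root $\alpha$ of $\gamma$ in $\overline{\Fqm}$ (these are simple since $\gamma$ is squarefree). Expanding
\[
R_{\cv}(z)=\sum_i\frac{c_i}{z-x_i}=\sum_{k\ge 0}(-1)^k\Big(\sum_i\frac{c_i}{(\alpha-x_i)^{k+1}}\Big)(z-\alpha)^k,
\]
the coefficient of $(z-\alpha)^{q-1}$ is $(-1)^{q-1}\sum_i c_i(\alpha-x_i)^{-q}$. Now use $c_i=c_i^{q}$ and Frobenius additivity in characteristic $p$:
\[
\sum_i \frac{c_i}{(\alpha-x_i)^{q}}=\sum_i\Big(\frac{c_i}{\alpha-x_i}\Big)^{q}=\Big(\sum_i\frac{c_i}{\alpha-x_i}\Big)^{q}=R_{\cv}(\alpha)^{q}.
\]
The hypothesis $v_\alpha(R_{\cv})\ge q-1\ge 1$ already gives $R_{\cv}(\alpha)=0$, hence the $(q-1)$-th Taylor coefficient vanishes and $v_\alpha(R_{\cv})\ge q$. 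Since this holds at every (simple) root of $\gamma$, one gets $\gamma^{q}\mid N_{\cv}$. This is exactly the ``$q$-th power argument'' you allude to; the point is that it acts on the $(q-1)$-th Taylor (equivalently Hasse) coefficient, not on the first derivative.
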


Codes with such a Goppa polyomial are called {\em wild Goppa codes}.
From Fact \ref{fa:decoding}, wild Goppa codes correct up to
$\lfloor \frac{qr}{2}\rfloor$ errors in polynomial-time instead
of just $\lfloor \frac{(q-1)r}{2}\rfloor$ if 
viewed as
$\Alt{r(q-1)}{\xv}{\gamma^{-(q-1)}(\xv))}$.
On the other hand, these
codes have dimension $\geq n-mr(q-1)$  instead of $\geq n-mrq$. 
Notice that when $q=2$, this amounts to double the error correction capacity.
It is one of the reasons why binary Goppa codes have been chosen in the original McEliece scheme or why Goppa codes with Goppa polynomials of the form $\gamma^ {q-1}$  are proposed in \cite{BLP10,BLP11a}.

\begin{rem}
  \label{rem:wild_Goppa}
  Actually, {\cite[Theorem 4]{SKHN76}} is more general and asserts that
  given irreducible polynomials $f_1, \ldots, f_s \in \Fqm[z]$,
  a polynomial $g$ prime to $f_1\cdots f_s$
  and positive integers $a_1, \ldots, a_s$, then
  $$
  \Goppa{\xv}{f_1^{a_1 q - 1} \cdots f_s^{a_s q - 1} g}
    = \Goppa{\xv}{f_1^{a_1 q} \cdots f_s^{a_s q} g}.
  $$
  
\end{rem}

\subsection{Shortening Alternant and Goppa codes}\label{ss:shorten_alt}
The shortening operation will play a crucial role in our attack.
For this reason, we recall the following classical result. We give
a proof because of a lack of references.

\begin{proposition}\label{pr:shortened_alternant_code}
Let $\xv \in \fqm^n$ be a support and
let $\yv \in \fqm^n$ be a multiplier, then
$$
\sh{\Alt{r}{\xv}{\yv}}{\Ind} = 
    \Alt{r}{\xv_{\Ind}}{\yv_{\Ind}}.
$$
\end{proposition}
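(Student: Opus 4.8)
The plan is to prove the two inclusions separately, using the description of alternant codes as subfield subcodes of duals of GRS codes (Definition~\ref{def:subfield_subcode}), together with the commutation rules for shortening, puncturing, dualizing and subfield-subcoding gathered in Propositions~\ref{pr:dualshort} and~\ref{prop:commutation}. First I would unwind the definition: $\Alt{r}{\xv}{\yv} = \GRS{r}{\xv}{\yv}^\perp \cap \fq^n$, so that by Proposition~\ref{prop:commutation}(\ref{it:shortsub}),
$$
\sh{\Alt{r}{\xv}{\yv}}{\Ind} = \sh{\left(\GRS{r}{\xv}{\yv}^\perp \cap \fq^n\right)}{\Ind}
= \sh{\left(\GRS{r}{\xv}{\yv}^\perp\right)}{\Ind} \cap \fq^{n-|\Ind|}.
$$
Thus it suffices to identify $\sh{\left(\GRS{r}{\xv}{\yv}^\perp\right)}{\Ind}$, and then intersect with $\fq^{n-|\Ind|}$.

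Next I would pass to the dual and use Proposition~\ref{pr:dualshort}: $\sh{\left(\GRS{r}{\xv}{\yv}^\perp\right)}{\Ind}^\perp = \pu{\GRS{r}{\xv}{\yv}}{\Ind}$. The key observation is that puncturing a GRS code at a set $\Ind$ simply restricts the evaluation vectors: since a codeword of $\GRS{r}{\xv}{\yv}$ has the form $\spc{\yv}{p(\xv)}$ with $\deg p < r$, deleting the coordinates indexed by $\Ind$ yields exactly $\spc{\yv_{\Ind}}{p(\xv_{\Ind})}$, and as $p$ ranges over $\fqm[z]_{<r}$ this describes all of $\GRS{r}{\xv_{\Ind}}{\yv_{\Ind}}$ — provided $\xv_{\Ind}$ still has length $\geq r$ so that the code is well-defined and no collapse of dimension occurs (distinctness of the entries of $\xv_{\Ind}$ is automatic since they are a subset of the distinct entries of $\xv$, and similarly $\yv_{\Ind}$ remains a multiplier). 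Hence $\pu{\GRS{r}{\xv}{\yv}}{\Ind} = \GRS{r}{\xv_{\Ind}}{\yv_{\Ind}}$. Dualizing back via Proposition~\ref{prop:dualGRS} (or simply by taking $\perp$ of both sides), one gets $\sh{\left(\GRS{r}{\xv}{\yv}^\perp\right)}{\Ind} = \GRS{r}{\xv_{\Ind}}{\yv_{\Ind}}^\perp$, where one should check that the multiplier of the dual is indeed the one obtained by puncturing the original dual multiplier — this follows because the locator polynomial $\loc{\xv_{\Ind}}$ is the product of $\loc{\xv}$ over the surviving factors, so $\loc{\xv_{\Ind}}'(\xv_{\Ind})$ is not simply the puncturing of $\loc{\xv}'(\xv)$; however, after intersecting with $\fq^{n-|\Ind|}$ the precise multiplier is irrelevant up to the equivalence of GRS codes, and in any case $\GRS{r}{\xv_{\Ind}}{\yv_{\Ind}}^\perp \cap \fq^{n-|\Ind|} = \Alt{r}{\xv_{\Ind}}{\yv_{\Ind}}$ by definition.

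Combining these steps gives
$$
\sh{\Alt{r}{\xv}{\yv}}{\Ind} = \GRS{r}{\xv_{\Ind}}{\yv_{\Ind}}^\perp \cap \fq^{n-|\Ind|} = \Alt{r}{\xv_{\Ind}}{\yv_{\Ind}},
$$
which is the claim. An alternative, perhaps cleaner, route avoiding the dual-multiplier bookkeeping is to argue directly from Lemma~\ref{lem:decription_as_eval}: a codeword of $\Alt{r}{\xv}{\yv}$ is $\yv^{-1}\star\loc{\xv}'(\xv)^{-1}\star f(\xv)$ with $\deg f < n-r$ and the vector lying in $\fq^n$; shortening at $\Ind$ keeps those codewords vanishing on $\Ind$, and one must match them with codewords of $\Alt{r}{\xv_{\Ind}}{\yv_{\Ind}}$, i.e. $\yv_{\Ind}^{-1}\star\loc{\xv_{\Ind}}'(\xv_{\Ind})^{-1}\star g(\xv_{\Ind})$ with $\deg g < (n-|\Ind|)-r$, using the factorization $\loc{\xv}'(x_i) = \loc{\xv_{\Ind}}'(x_i)\cdot\prod_{j\in\Ind}(x_i - x_j)$ for $i\notin\Ind$.

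The main obstacle I anticipate is purely bookkeeping rather than conceptual: tracking how the multiplier transforms under the successive dual/puncture/dual operations, because of the derivative-of-locator factor in Proposition~\ref{prop:dualGRS}, and making sure the intermediate GRS codes have admissible parameters ($r \leq n - |\Ind|$, entries of $\xv_{\Ind}$ still distinct, $\yv_{\Ind}$ still nonzero). Once one is comfortable that the subfield-subcode at the end washes out any discrepancy in the multiplier that differs by a scalar-per-coordinate already accounted for, the proof is a short chain of the cited propositions. I would write it using the direct GRS-puncturing route, since it keeps the argument to three lines of display math.
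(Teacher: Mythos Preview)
Your proof is correct and follows essentially the same idea as the paper's: both rest on the observation that puncturing $\GRS{r}{\xv}{\yv}$ at $\Ind$ yields $\GRS{r}{\xv_{\Ind}}{\yv_{\Ind}}$, which is the parity-check side of the shortened alternant code. The paper phrases this directly in matrix language (remove columns of the $\fqm$-parity-check matrix $\Hm$), whereas you route it through Propositions~\ref{pr:dualshort} and~\ref{prop:commutation}(\ref{it:shortsub}); the content is the same.

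One remark: your worry about the dual multiplier and the locator-derivative is unnecessary. Once you have $\pu{\GRS{r}{\xv}{\yv}}{\Ind} = \GRS{r}{\xv_{\Ind}}{\yv_{\Ind}}$, you simply take $\perp$ of both sides to get $\sh{\GRS{r}{\xv}{\yv}^\perp}{\Ind} = \GRS{r}{\xv_{\Ind}}{\yv_{\Ind}}^\perp$, and then intersect with $\fq^{n-|\Ind|}$; Proposition~\ref{prop:dualGRS} never enters, so there is no bookkeeping to do. Your ``alternative route'' via Lemma~\ref{lem:decription_as_eval} would work too but is heavier than needed.
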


\begin{proof}
This proposition follows on the spot from the definition of the
alternant code $\Alt{r}{\xv}{\yv}$: there is a parity-check $\Hm$ for it 
with entries over $\fqm$ which is the generating matrix of $\GRS{r}{\xv}{\yv}$.
A parity-check matrix of the shortened code
$\sh{\Alt{r}{\xv}{\yv}}{\Ind}$ is obtained by throwing away the columns of $\Hm$
that belong to $\Ind$. 
That is to say, by puncturing $\GRS{r}{\xv}{\yv}$
at $\Ind$.
This parity-check matrix is therefore the generator matrix of
$\GRS{r}{\xv_{\Ind}}{\yv_{\Ind}}$ and 
the associated code is $ \Alt{r}{\xv_{\Ind}}{\yv_{\Ind}}$.
\end{proof}

\begin{cor}\label{cor:shortened_Goppa}
  Let $\Gamma \in \Fqm [z]$ and $\xv \in \Fqm^n$ be a support.
  Let $\Ind \subseteq \{0, \ldots, n-1\}$, then
  $$
  \sh{\Goppa{\xv}{\Gamma}}{\Ind} = \Goppa{\xv_{\Ind}}{\Gamma}.
  $$
\end{cor}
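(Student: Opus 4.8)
The plan is to derive this corollary directly from Proposition~\ref{pr:shortened_alternant_code} together with the definition of classical Goppa codes. Recall from Definition~\ref{def:ClassicalGoppa} that $\Goppa{\xv}{\Gamma} = \Alt{\deg \Gamma}{\xv}{\Gamma(\xv)^{-1}}$, where $\Gamma(\xv)^{-1}$ is a legitimate multiplier precisely because $\Gamma(x_i) \neq 0$ for all $i$.

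First I would apply Proposition~\ref{pr:shortened_alternant_code} with the multiplier $\yv \eqdef \Gamma(\xv)^{-1}$ and degree $r \eqdef \deg \Gamma$, which gives
$$
\sh{\Goppa{\xv}{\Gamma}}{\Ind} = \sh{\Alt{\deg\Gamma}{\xv}{\Gamma(\xv)^{-1}}}{\Ind} = \Alt{\deg\Gamma}{\xv_{\Ind}}{\left(\Gamma(\xv)^{-1}\right)_{\Ind}}.
$$
Then I would observe that puncturing commutes with componentwise evaluation of a polynomial: for any $\Ind$ one has $\left(\Gamma(\xv)^{-1}\right)_{\Ind} = \Gamma(\xv_{\Ind})^{-1}$, since the $i$--th entry of either side is simply $\Gamma(x_i)^{-1}$ for the indices $i \notin \Ind$. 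Substituting this identity yields
$$
\sh{\Goppa{\xv}{\Gamma}}{\Ind} = \Alt{\deg\Gamma}{\xv_{\Ind}}{\Gamma(\xv_{\Ind})^{-1}} = \Goppa{\xv_{\Ind}}{\Gamma},
$$
where the last equality is again Definition~\ref{def:ClassicalGoppa} applied to the support $\xv_{\Ind}$; note that $\Gamma(x_i) \neq 0$ still holds for the surviving indices, and $\xv_{\Ind}$ is still a support (its entries remain pairwise distinct), so the definition applies.

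There is essentially no obstacle here: the only point requiring a word of care is that $\deg\Gamma$ is unchanged under shortening, so the alternant degree parameter on both sides matches, and that the multiplier and support conditions are preserved by puncturing. All of this is immediate. I would simply present the two displayed chains of equalities above as the proof.
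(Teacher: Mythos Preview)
Your proof is correct and follows exactly the intended approach: the paper states this as a corollary of Proposition~\ref{pr:shortened_alternant_code} without giving a separate proof, precisely because it is the specialization you describe, using Definition~\ref{def:ClassicalGoppa} and the trivial observation that $\left(\Gamma(\xv)^{-1}\right)_{\Ind} = \Gamma(\xv_{\Ind})^{-1}$.
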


\subsection{McEliece encryption scheme}

We recall here the general principle of McEliece public-key 
scheme \cite{M78}. 
The key generation algorithm picks a random $k \times n$ generator matrix 
$\Gm$ of a code $\CC$ over $\Fq$ 
which is itself randomly picked in a family of codes for
which $t$ errors can be efficiently corrected.
The \emph{secret} key is the decoding algorithm $\dec$ associated to $\CC$ and 
the \emph{public} key is $\Gm$. To encrypt $\uv \in \Fq^k$, the sender chooses
 a random vector $\ev$ in $\Fq^n$ of Hamming weight 
less than or equal to $t$ and computes the ciphertext 
$\cv = \uv \Gm + \ev$.
The receiver then recovers the plaintext by applying $\dec$ on $\cv$.

McEliece based his scheme solely on binary Goppa codes. In \cite{BLP10,BLP11a}, it is 
advocated to use $q$-ary wild Goppa codes,
\textit{i.e.} codes with Goppa polynomials of the form
$\gamma^{q-1}$ 
because of their better error correction capability (Theorem~\ref{thm:SKHN76}). 
In this paper, we precisely focus on 
these codes but defined over quadratic extensions ($m = 2$). We shall see how it is 
possible to fully recover their secret structure
under some mild condition on $q$ and the degree of $\gamma$
(further details in Table \ref{tab:limit}).

\section{A Distinguisher based on Square Codes}\label{sec:distinguisher}

\subsection{Square code}
One of the keys for the distinguisher presented here and the attack outlined in
the subsequent sections is a 
special property of certain alternant codes with respect to the component-wise
product.
 \begin{definition}[Product of codes, square code]
Let $\code{A}$ and $\code{B}$ be two codes of length $n$. The
\emph{Schur product code} denoted by $\spc{\code{A}}{\code{B}}$ is the vector space
spanned by all products $\spc{\av}{\bv}$ for all $(\av,\bv)\in \code{A}\times \code{B}$.
When $\code{B} = \code{A}$, 
$\spc{\code{A}}{\code{A}}$ is called the \emph{square code} of $\code{A}$
and is denoted by ${\code{A}}^{\star 2}$.
\end{definition}

The dimension of the Schur product 
 is easily bounded by:
\begin{proposition}\label{pr:typ_dim}
Let $\code{A}$ and  $\code{B}$ be two linear codes  $\subseteq \Fq^n$, then
\begin{eqnarray}
\label{eq:asymprod} \dim\left(\spc{\code{A}}{\code{B}}\right) & \leq &\min \left\{ n,\
\dim{\code{A}} \dim{\code{B}} - {\dim (\code{A}\cap\code{B}) \choose 2}\right\} \\
\label{eq:square} \dim  \left({\code{A}}^{\star 2}\right) & \leq & \min \left\{n,\ \binom{\dim(\code{A})+1}{2}\right\}.
 \end{eqnarray}
\end{proposition}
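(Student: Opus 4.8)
The bound on $\dim(\spc{\code{A}}{\code{B}})$ splits into two independent pieces. The first is the trivial observation that $\spc{\code{A}}{\code{B}}$ is a subspace of $\Fq^n$, so its dimension cannot exceed $n$. The second, and the only part requiring work, is the bound $\dim{\code{A}}\dim{\code{B}} - \binom{\dim(\code{A}\cap\code{B})}{2}$. I would obtain this by exhibiting an explicit spanning set of $\spc{\code{A}}{\code{B}}$ and then counting how many of its elements must be linearly dependent for purely formal reasons.

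More precisely, let $k_A = \dim\code{A}$, $k_B = \dim\code{B}$ and $d = \dim(\code{A}\cap\code{B})$. Choose a basis $\word{c}_1, \ldots, \word{c}_d$ of $\code{A}\cap\code{B}$, extend it to a basis $\word{c}_1, \ldots, \word{c}_d, \word{a}_1, \ldots, \word{a}_{k_A - d}$ of $\code{A}$, and to a basis $\word{c}_1, \ldots, \word{c}_d, \word{b}_1, \ldots, \word{b}_{k_B - d}$ of $\code{B}$. By bilinearity of the Schur product, $\spc{\code{A}}{\code{B}}$ is spanned by all pairwise products of an element of the first basis with an element of the second, i.e. by the $k_A k_B$ vectors $\spc{\word{u}}{\word{v}}$ with $\word{u}$ ranging over $\{\word{c}_i\}\cup\{\word{a}_j\}$ and $\word{v}$ over $\{\word{c}_i\}\cup\{\word{b}_l\}$. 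Among these, the products $\spc{\word{c}_i}{\word{c}_j}$ for $i \neq j$ appear twice (since $\spc{\word{c}_i}{\word{c}_j} = \spc{\word{c}_j}{\word{c}_i}$ and both $\word{c}_i, \word{c}_j$ lie in both codes), so the number of \emph{distinct} generators is at most $k_A k_B - \binom{d}{2}$; hence $\dim(\spc{\code{A}}{\code{B}}) \le k_A k_B - \binom{d}{2}$. Taking the minimum with $n$ gives \eqref{eq:asymprod}.

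For the square code bound \eqref{eq:square}, I would simply specialize: when $\code{B} = \code{A}$ we have $d = k_A$, and $k_A^2 - \binom{k_A}{2} = k_A^2 - \frac{k_A(k_A-1)}{2} = \frac{k_A(k_A+1)}{2} = \binom{k_A+1}{2}$, so \eqref{eq:asymprod} already yields \eqref{eq:square}; alternatively one argues directly that $\sqc{\code{A}}$ is spanned by the symmetric products $\spc{\word{a}_i}{\word{a}_j}$ with $i \le j$, of which there are $\binom{k_A+1}{2}$. There is no real obstacle here: the only subtlety is making sure the double-counting argument is stated cleanly, namely that the symmetry of the Schur product together with $\word{c}_i,\word{c}_j$ lying in the intersection is exactly what forces the pairs $(i,j)$ and $(j,i)$ with $i<j$ to contribute the same generator, and that nothing else needs to be double-counted. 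Everything else is immediate from bilinearity and symmetry of $\star$.
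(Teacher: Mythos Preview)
Your proof is correct and follows essentially the same approach as the paper: choose a basis of $\code{A}\cap\code{B}$, extend it to bases of $\code{A}$ and of $\code{B}$, and count the distinct Schur products of basis elements by removing the $\binom{d}{2}$ repetitions coming from commutativity on the shared part. The paper likewise deduces \eqref{eq:square} directly from \eqref{eq:asymprod} by specializing $\code{B}=\code{A}$.
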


\begin{proof}
Let $\{e_1, \dots , e_s\}$ be a basis of $\code{A}\cap \code{B}$.
Complete it as two bases
${B}_{\code{A}} = \{e_1, \dots , e_s, a_{s+1},$ $\dots , a_k\}$
and $B_{\code{B}}= \{e_1, \dots, e_s, b_{s+1}, \dots, b_{\ell}\}$ of $\code{A}$
and $\code{B}$ respectively.
The Schur products $\spc{\uv}{\vv}$ where $\uv \in B_{\code{A}}$
and $\vv \in B_{\code{B}}$ span $\spc{\code{A}}{\code{B}}$.
The number of such products is $k\ell = \dim \code{A} \dim \code{B}$
minus the number of products which are counted twice, namely the products
$\spc{e_i}{e_j}$ with $i\neq j$ and their number is precisely
${s \choose 2}$. This proves~(\ref{eq:asymprod}).
The inequality given in~(\ref{eq:square}) is a consequence of~(\ref{eq:asymprod}).
\end{proof}

It is proved in \cite{CCMZ15, R15a} that, almost all codes of a given length and dimension 
reach these bounds while GRS codes behave completely differently when they have
the same support.

\begin{proposition}\label{prop:GRSsquare}
Let $\xv \in \Fq^n$ be a support 
and $\yv$, $\yv'$ be two multipliers in $\Fq^n$.
Then,
\begin{enumerate}[(i)]
\item $\spc{\GRS{k}{\xv}{\yv}}{\GRS{k'}{\xv}{\yv'}} = \GRS{k+k'-1}{\xv}{\yv\star \yv'}$; \label{cas1:grssquare}
\item ${\GRS{k}{\xv}{\yv}}^{\star 2}=\GRS{2k-1}{\xv}{\yv\star\yv}$. \label{cas2:grssquare}
\end{enumerate}
\end{proposition}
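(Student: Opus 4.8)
The plan is to prove~(\ref{cas1:grssquare}) directly from the polynomial description of GRS codes in Definition~\ref{def:defGRS}, since~(\ref{cas2:grssquare}) is then just the special case $\yv = \yv'$, $k = k'$. First I would observe that a spanning set for the Schur product $\spc{\GRS{k}{\xv}{\yv}}{\GRS{k'}{\xv}{\yv'}}$ is obtained by multiplying a codeword $\spc{\yv}{p(\xv)}$ of the first code by a codeword $\spc{\yv'}{q(\xv)}$ of the second, where $p \in \fq[z]_{<k}$ and $q \in \fq[z]_{<k'}$. Componentwise, $\spc{\yv}{p(\xv)} \star \spc{\yv'}{q(\xv)} = \spc{(\yv \star \yv')}{(pq)(\xv)}$, using the evident fact that evaluation is multiplicative: $(pq)(x_i) = p(x_i)q(x_i)$. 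So every generator of the Schur product lies in $\GRS{k+k'-1}{\xv}{\yv\star\yv'}$, since $pq$ has degree at most $(k-1)+(k'-1) = k+k'-2 < k+k'-1$. This gives one inclusion.

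For the reverse inclusion, the point is that every polynomial of degree less than $k+k'-1$ is an $\fq$-linear combination of products $pq$ with $\deg p < k$ and $\deg q < k'$. Concretely, the monomial $z^j$ for $0 \leqslant j \leqslant k+k'-2$ can be written as $z^a \cdot z^b$ with $a \leqslant k-1$ and $b \leqslant k'-1$ (take $a = \min(j,k-1)$ and $b = j-a$, which satisfies $b \leqslant k'-1$ precisely because $j \leqslant k+k'-2$). Hence the products $\spc{(\yv\star\yv')}{z^a(\xv)} \star \mathbf{1}$-type generators — more precisely the vectors $\spc{(\yv\star\yv')}{z^j(\xv)}$ for $j = 0, \ldots, k+k'-2$ — all lie in the Schur product, and they span $\GRS{k+k'-1}{\xv}{\yv\star\yv'}$ because $\{1, z, \ldots, z^{k+k'-2}\}$ is a basis of $\fq[z]_{<k+k'-1}$. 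This establishes $\GRS{k+k'-1}{\xv}{\yv\star\yv'} \subseteq \spc{\GRS{k}{\xv}{\yv}}{\GRS{k'}{\xv}{\yv'}}$, completing~(\ref{cas1:grssquare}). Finally~(\ref{cas2:grssquare}) follows by setting $k' = k$, $\yv' = \yv$, so that $\yv \star \yv' = \yv \star \yv = \yv^2$.

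I do not expect any real obstacle here: the statement is essentially a restatement of the fact that the product of polynomials of degrees $< k$ and $< k'$ ranges exactly over polynomials of degree $< k+k'-1$, transported through the evaluation-and-scale map $p \mapsto \spc{\yv}{p(\xv)}$, which is injective because the $x_i$ are distinct and the $y_i$ nonzero (so distinct polynomials of degree $< n$ give distinct codewords). The only mild point worth stating carefully is the span argument for the reverse inclusion — it is tempting to argue only that the generators $\spc{\av}{\bv}$ lie in the target code, which gives one inclusion, and one must separately exhibit enough of them to span the whole target, which is what the monomial factorization $z^j = z^a z^b$ provides.
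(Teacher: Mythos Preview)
Your proof is correct and follows essentially the same approach as the paper: establish the forward inclusion by observing that the Schur product of two GRS codewords is the evaluation of a product polynomial of degree at most $k+k'-2$, then obtain the reverse inclusion by noting that every polynomial of degree $<k+k'-1$ arises this way, and deduce~(\ref{cas2:grssquare}) from~(\ref{cas1:grssquare}) by specialization. The paper's proof is in fact terser on the reverse inclusion --- it simply asserts that any element of $\GRS{k+k'-1}{\xv}{\yv\star\yv'}$ is a linear combination of Schur products --- whereas you make this explicit via the monomial factorization $z^j = z^a z^b$, which is a welcome addition.
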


This proposition shows that the dimension of
$\spc{\GRS{k}{\xv}{\yv}}{\GRS{k'}{\xv}{\yv'}}$ does not scale multiplicatively as
$kk'$ but additively as $k+k'-1$.
 It has been used the first time in cryptanalysis in \cite{W10} and appears 
for instance explicitly as Proposition 10 in  \cite{MMP11a}. We provide the proof here because it 
is crucial for understanding why the Schur products of GRS codes and some alternant codes behave in 
a non generic way.

\begin{proof}[Proof of Proposition~\ref{prop:GRSsquare}]
 In order to prove (\ref{cas1:grssquare}), let $\cv=(y_0 f(x_0),\dots,y_{n-1}f(x_{n-1})) \in \GRS{k}{\xv}{\yv}$ and
 $\cv'=(y'_0 g(x_0),\dots,y'_{n-1}g(x_{n-1}))$ $\in\GRS{k'}{\xv}{\yv'}$ where
 $\deg(f)\leq k-1$ and  $\deg(g)\leq k'-1$.
 Then $\spc{\cv}{\cv'}$ is of the form: 
 \begin{align*}
 \spc{\cv}{\cv'} & = (y_0y_0' f(x_0)g(x_0),\dots,y_{n-1}y'_{n-1} f(x_{n-1})g(x_{n-1}))\\
                 & =(y_0y_0'r(x_0),\dots,y_{n-1}y'_{n-1}r(x_{n-1}))
 \end{align*}
 where $\deg(r)\leq k+k'-2$.
 Conversely, any element $(y_0y_0'r(x_0),\dots,y_{n-1}y'_{n-1}r(x_{n-1}))$
 where $\deg(r)\leq k+k'-2$, is a linear combination of Schur
 products of two elements of $\GRS{k}{\xv}{\yv}$.
Statement (\ref{cas2:grssquare}) is a consequence of 
(\ref{cas1:grssquare}) by putting $\yv'=\yv$ and $k'=k$.
\end{proof}

Since an alternant code is a subfield subcode of a GRS code,
we might suspect that products of alternant codes have also
low dimension compared to products of random codes.
This is  true but in a very attenuated form  as shown by:

\begin{theorem}\label{thm:sq_alternant}
Let $\xv \in \Fqm^n$ be a support and
$\yv, \yv' \in \Fqm^n$ be two multipliers.
Then, 
\begin{equation}\label{eq:starAlt}
\spc{\Alt{s}{\xv}{\yv}}{\Alt{s'}{\xv}{\yv'}} \subseteq \Alt{s+s'-n+1}{\xv}{\yv''},
\end{equation}
for $\yv'' \eqdef \yv \star \yv' \star \loc{\xv}'(\xv)$.
\end{theorem}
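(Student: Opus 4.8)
The plan is to reduce the statement about subfield subcodes to the corresponding statement about the ambient GRS codes, where Proposition~\ref{prop:GRSsquare} applies. First I would recall from Definition~\ref{def:subfield_subcode} that $\Alt{s}{\xv}{\yv} = \GRS{s}{\xv}{\yv}^{\perp} \cap \fq^n$ and $\Alt{s'}{\xv}{\yv'} = \GRS{s'}{\xv}{\yv'}^{\perp} \cap \fq^n$, and that by Proposition~\ref{prop:dualGRS} the dual GRS codes can themselves be written as GRS codes: $\GRS{s}{\xv}{\yv}^{\perp} = \GRS{n-s}{\xv}{\zv}$ with $\zv = \yv^{-1} \star \loc{\xv}'(\xv)^{-1}$, and similarly $\GRS{s'}{\xv}{\yv'}^{\perp} = \GRS{n-s'}{\xv}{\zv'}$ with $\zv' = \yv'^{-1} \star \loc{\xv}'(\xv)^{-1}$.

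Next I would argue that the Schur product of the two subfield subcodes is contained in the Schur product of the two ambient codes: since $\Alt{s}{\xv}{\yv} \subseteq \GRS{n-s}{\xv}{\zv}$ and $\Alt{s'}{\xv}{\yv'} \subseteq \GRS{n-s'}{\xv}{\zv'}$, and since taking Schur products is monotone with respect to inclusion, we get
$$
\spc{\Alt{s}{\xv}{\yv}}{\Alt{s'}{\xv}{\yv'}} \subseteq \spc{\GRS{n-s}{\xv}{\zv}}{\GRS{n-s'}{\xv}{\zv'}}.
$$
By part~(\ref{cas1:grssquare}) of Proposition~\ref{prop:GRSsquare}, the right-hand side equals $\GRS{(n-s)+(n-s')-1}{\xv}{\zv \star \zv'} = \GRS{2n-s-s'-1}{\xv}{\zv\star\zv'}$. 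Now I compute $\zv\star\zv' = \yv^{-1}\star\yv'^{-1}\star \loc{\xv}'(\xv)^{-2}$, so the product GRS code is $\GRS{2n-s-s'-1}{\xv}{\yv^{-1}\star\yv'^{-1}\star\loc{\xv}'(\xv)^{-2}}$.

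Then I would observe that the product vectors $\spc{\av}{\av'}$ with $\av\in\Alt{s}{\xv}{\yv}$, $\av'\in\Alt{s'}{\xv}{\yv'}$ all have entries in $\fq$, so their span lies in $\fq^n$; hence the product is contained in $\GRS{2n-s-s'-1}{\xv}{\yv^{-1}\star\yv'^{-1}\star\loc{\xv}'(\xv)^{-2}} \cap \fq^n$. To identify this with an alternant code I take duals once more: applying Proposition~\ref{prop:dualGRS} to $\GRS{2n-s-s'-1}{\xv}{\yv^{-1}\star\yv'^{-1}\star\loc{\xv}'(\xv)^{-2}}$ gives its dual as $\GRS{s+s'-n+1}{\xv}{\wv}$ where the multiplier is $\wv = (\yv^{-1}\star\yv'^{-1}\star\loc{\xv}'(\xv)^{-2})^{-1}\star\loc{\xv}'(\xv)^{-1} = \yv\star\yv'\star\loc{\xv}'(\xv)$, which is exactly $\yv''$ as defined in the statement. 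Dualizing the containment $\spc{\Alt{s}{\xv}{\yv}}{\Alt{s'}{\xv}{\yv'}} \subseteq \GRS{s+s'-n+1}{\xv}{\yv''}^{\perp}$ and intersecting with $\fq^n$ yields the claim, since a subspace of $\fq^n$ contained in a code over $\fqm$ is contained in that code's subfield subcode; thus $\spc{\Alt{s}{\xv}{\yv}}{\Alt{s'}{\xv}{\yv'}} \subseteq \GRS{s+s'-n+1}{\xv}{\yv''}^{\perp}\cap\fq^n = \Alt{s+s'-n+1}{\xv}{\yv''}$.

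The computations are all routine bookkeeping with the locator polynomial and the formula from Proposition~\ref{prop:dualGRS}; the only point requiring a little care is the handling of the degree parameter when $s+s'-n+1$ fails to be a legitimate alternant degree (for instance if it is non-positive, in which case $\Alt{s+s'-n+1}{\xv}{\yv''}$ should be read as all of $\fq^n$), and making sure the dual-of-GRS identity is applied only when the relevant dimensions satisfy $1\le k < n$. The main conceptual step, which is nonetheless short, is recognizing that one should pass to the \emph{dual} GRS description twice: once to put the alternant codes inside GRS codes so that Proposition~\ref{prop:GRSsquare} bites, and once more at the end to repackage the resulting GRS code as the dual of a smaller GRS code, i.e. as an alternant code.
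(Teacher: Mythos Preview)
Your proof is correct and follows essentially the same route as the paper's: both pass to the GRS description of the alternant codes (the paper via Lemma~\ref{lem:decription_as_eval}, you via Proposition~\ref{prop:dualGRS}), multiply, observe the entries lie in $\fq$, and then recognise the result as the subfield subcode $\Alt{s+s'-n+1}{\xv}{\yv''}$. The only cosmetic difference is that the paper carries out the product elementwise on a pair $\cv,\cv'$ and their underlying polynomials $f,g$, whereas you invoke Proposition~\ref{prop:GRSsquare} at the level of codes; also, in your last paragraph the phrase ``dualizing the containment'' is a slip --- no further dualization is needed once you have written the ambient GRS code as $\GRS{s+s'-n+1}{\xv}{\yv''}^{\perp}$, since intersecting with $\fq^n$ already gives the alternant code by definition.
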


\begin{proof}
Let $\cv, \cv'$ be respective elements of $\Alt{s}{\xv}{\yv}$
and $\Alt{s'}{\xv}{\yv'}$. 
From Lemma~\ref{lem:decription_as_eval},
$$
\cv = f(\xv) \star \yv^{-1} \star \loc{\xv}'(\xv)^{-1}
\quad {\rm and} \quad \cv' = g(\xv) \star {\yv'}^{-1} \star \loc{\xv}'(\xv)^{-1}
$$ 
for some polynomials $f$ and $g$ of degree $<n-s$ and $<n-s'$ respectively.
This implies that
$$\spc{\cv}{\cv'} = h(\xv) \star \yv^{-1} \star {\yv'}^{-1} \star
\loc{\xv}'(\xv)^{-2}$$
where $h\eqdef fg$ is a polynomial of degree $< 2n -(s+s')-1$.
Moreover, since $\cv$ and $\cv'$ have their entries in $\Fq$ then so has
$\spc{\cv}{\cv'}$. Consequently,
$$\spc{\cv}{\cv'} \in  \GRS{ 2n -(s+s')-1}{\xv}{\yv^{-1} \star {\yv^\prime}^{-1} \star {\loc{\xv}^\prime}(\xv)^{-2}}\cap \Fq^n.$$
From Definition \ref{def:subfield_subcode}, the above code equals $\Alt{s+s'-n+1}{\xv}{\yv''}$ for 
$\yv'' = \yv \star \yv' \star \loc{\xv}'(\xv)$.
\end{proof}

\subsection{The particular case of wild Goppa codes
over quadratic extensions}\label{ss:insight}

Theorem~\ref{thm:sq_alternant}
generalizes Proposition~\ref{prop:GRSsquare} which
 corresponds to the particular case where the
extension degree $m$ is equal to $1$. 
 However, when $m>1$, the right hand term
 of (\ref{eq:starAlt})
 is in general the full space $\Fq^n$.
 Indeed, assume that $m>1$ and that the dimensions of
 $\Alt{s}{\xv}{\yv}$ and $\Alt{s'}{\xv}{\yv'}$ are equal to
 $n- sm$ and $n-s'm$ respectively.
 If we assume that both codes have non trivial dimensions then we should have 
 $n-sm >0$ and $n-s'm>0$ which implies that 
 $s , s' < n/m \leq n/2$ and hence:
 $$(s+s')-n+2 \leq 0$$
 which entails that $\Alt{s+s'-n+1}{\xv}{\yv''}$ is the full space $\Fq^n$.

However, in the case  $m=2$ and when either:
\begin{enumerate}[(i)]
\item
$\Alt{s}{\xv}{\yv}$  or $\Alt{s'}{\xv}{\yv'}$ has dimension
 which exceeds the lower bound $n-sm$ or $n-s'm$
\item or when
 one of these codes is actually an alternant code for a larger degree \textit{i.e.}
 $\Alt{s}{\xv}{\yv}= \Alt{s''}{\xv}{\yv'}$ for $s'' > s$ and some multiplier
 $\yv'$
\end{enumerate}
 then 
 the right-hand term of (\ref{eq:starAlt})
 may be  smaller than the full
 space.
 This is precisely what happens for  wild Goppa codes of
 extension degree $2$ as shown by the following statement.

\begin{thm}[{\cite{COT13}}]\label{thm:properties_superGoppa}
Let $\Goppa{\xv}{\gamma^{q-1}}$ be a wild Goppa code of length $n$ defined over $\fq$ 
with support $\xv \in \Fqq^n$ where  $\gamma \in \Fqq[z]$ is irreducible of degree $r>1$. Then,
\begin{enumerate}[(i)]
\item\label{it:superGoppa} $\Goppa{\xv}{\gamma^{q-1}}  =  \Goppa{\xv}{\gamma^{q+1}} $
\item\label{it:rrmoins2} $\dim(\Goppa{\xv}{\gamma^{q+1}})  \geq  n - 2r(q+1) + r(r+2)$
\item\label{it:PsupSRS} $\Goppa{\xv}{ \gamma^{q+1}}  = \uv \star \Alt{r(q+1)}{\xv}{\onev}$ for some multiplier $\uv \in \Fq^n$.
\end{enumerate}
\end{thm}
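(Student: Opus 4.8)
The plan is to prove the three statements essentially independently, with (i) being the crucial input that then feeds into (ii) and (iii).

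For (i): The goal is to show $\Goppa{\xv}{\gamma^{q-1}} = \Goppa{\xv}{\gamma^{q+1}}$. By Theorem~\ref{thm:SKHN76} we already know $\Goppa{\xv}{\gamma^{q-1}} = \Goppa{\xv}{\gamma^{q}}$, so it suffices to prove the extra gain $\Goppa{\xv}{\gamma^q} = \Goppa{\xv}{\gamma^{q+1}}$, i.e. that one may add one further power of $\gamma$ for free. The natural approach is via the parity-check/trace description. A codeword $\cv \in \Goppa{\xv}{\gamma^q} \cap \fq^n$ satisfies $\sum_i \frac{c_i}{\gamma(x_i)^q}\, x_i^j = 0$ for $0 \le j < rq$, where these equations live over $\Fqq$. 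Since $\cv$ has entries in $\fq$ and $m=2$, applying the Frobenius $t \mapsto t^q$ to the syndrome equations and using that Frobenius fixes $\fq$ gives a second family of relations; combining the two families (the original ones and their conjugates) one obtains that $\cv$ satisfies the defining equations of $\Goppa{\xv}{\gamma^{q+1}}$ as well. Concretely, the conjugate of the condition "$\cv/\gamma^q \perp \Fqq[z]_{<rq}(\xv)$" is "$\cv/\bar\gamma^q \perp \Fqq[z]_{<rq}(\xv)$" where $\bar\gamma$ is the conjugate polynomial; since $\nr(\gamma) = \gamma\bar\gamma \in \fq[z]$ and $\gamma$ is irreducible over $\Fqq$, a degree/divisibility bookkeeping shows the intersection of the two conditions forces divisibility by $\gamma^{q+1}$ in the appropriate sense. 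This divisibility argument is the main obstacle: one must track carefully that the two conjugate syndrome conditions, each of "width" $rq$, intersect in a space corresponding to the Goppa polynomial $\gamma^{q+1}$ of degree $r(q+1)$, which is exactly consistent with the two equations over $\Fqq$ collapsing to conditions over $\fq$. I expect this to be where the real content of \cite{COT13} lies.

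For (ii): Once (i) is established, the dimension bound follows by combining two facts. First, Proposition~\ref{prop:designed}\eqref{it:designed_dim} applied to $\Goppa{\xv}{\gamma^{q-1}} = \Alt{r(q-1)}{\xv}{\gamma^{-(q-1)}(\xv)}$ gives $\dim \geq n - 2r(q-1)$; but that is weaker than what we want. The sharper bound uses the identification with $\Goppa{\xv}{\gamma^{q+1}}$ and the standard observation (analogous to the "wild" gain and to the trace-code dimension count) that a Goppa code defined by a polynomial of the form $\gamma^{q+1}$ with $\gamma$ irreducible of degree $r$ over a quadratic extension satisfies $\dim \geq n - 2 \deg(\gamma^{q+1}) + (\text{correction})$. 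The correction term $r(r+2)$ arises because the $2 r(q+1)$ a priori independent $\fq$-linear parity-check equations (coming from the $r(q+1)$ coordinates over $\Fqq$, each giving $2$ equations over $\fq$) are not independent: the block associated to the top powers of $\gamma$ is stable under conjugation and hence contributes roughly half as many constraints. I would make this precise by writing the parity-check matrix of $\Goppa{\xv}{\gamma^{q+1}}$ over $\fq$ as the $\fq$-expansion of the $\Fqq$ parity-check matrix, and exhibiting an explicit $r(r+2)$-dimensional space of $\fq$-linear dependencies among its rows, using that $\gamma^{q+1} = (\gamma \bar\gamma)^{?}$-type symmetry lets the conjugate rows be rewritten in terms of the original ones modulo lower-degree terms. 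Counting: $2r(q+1) - r(r+2)$ is the effective number of independent constraints.

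For (iii): The aim is $\Goppa{\xv}{\gamma^{q+1}} = \uv \star \Alt{r(q+1)}{\xv}{\onev}$ for some multiplier $\uv \in \fq^n$. By definition $\Goppa{\xv}{\gamma^{q+1}} = \Alt{r(q+1)}{\xv}{\gamma(\xv)^{-(q+1)}} = \Alt{r(q+1)}{\xv}{\nr(\gamma(\xv))^{-1}}$, and the key point is that the multiplier $\gamma(\xv)^{-(q+1)} = (\gamma(x_i)^{q+1})_i = (\nr(\gamma(x_i)))_i$ already has all its entries in $\fq$, since $t^{q+1} = \nr(t) \in \fq$ for every $t \in \Fqq$. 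Now I invoke the general fact that multiplying the multiplier of an alternant code by a vector with entries in the base field $\fq$ does not change the code up to the componentwise action of that vector: indeed, scaling $\yv$ by $\uv \in (\fq^*)^n$ scales the associated GRS code's dual description by $\uv$, and intersecting with $\fq^n$ commutes with $\uv\star(-)$ when $\uv \in \fq^n$, so $\Alt{\ell}{\xv}{\uv \star \yv} = \uv^{-1}\star \Alt{\ell}{\xv}{\yv}$ — provided $\uv$ has no zero entries. Taking $\yv = \onev$ and $\uv = \nr(\gamma(\xv))$ (which is a valid multiplier because $\gamma(x_i) \neq 0$ for all $i$, hence $\nr(\gamma(x_i)) \neq 0$) gives $\Alt{r(q+1)}{\xv}{\nr(\gamma(\xv))^{-1}} = \nr(\gamma(\xv)) \star \Alt{r(q+1)}{\xv}{\onev}$, which is the claim with $\uv = \nr(\gamma(\xv))$. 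This part is routine once the observation "$\nr$ lands in $\fq$" is in place; the only care needed is checking $\uv$ is a genuine multiplier, which is immediate.

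Overall, (iii) is essentially formal, (ii) is a dimension count resting on an explicit description of the row-dependencies in the $\fq$-expanded parity-check matrix, and (i) — the statement that the "wild" gain can be pushed one step further over a quadratic extension — is the substantial result and the main obstacle, proved by playing the conjugate syndrome conditions against each other.
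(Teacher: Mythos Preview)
Your treatment of (i) and (ii) is only a sketch, but so is the paper's: it simply cites Theorems~1 and~24 of \cite{COT13} for these two items without reproducing the arguments. Your identification of (i) as the main obstacle, and the mechanism you propose for it (playing the Frobenius-conjugate syndrome conditions against each other over the quadratic extension), is indeed what underlies the proof in \cite{COT13}.

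For (iii) your argument is correct and genuinely more direct than the paper's. The paper first invokes \cite[Corollary~10]{COT13} in the full-support case $n=q^2$ to obtain $\Goppa{\xv}{\gamma^{q+1}} = \av \star (\RS{q^2-r(q+1)}{\xv}\cap\Fq^n)$, rewrites this as $\av\star\Alt{r(q+1)}{\xv}{\onev}$ using that $\loc{\xv}'(\xv)=\onev$ for a full support, and then deduces the general case by shortening via Proposition~\ref{pr:shortened_alternant_code}. You bypass all of this: from the definition $\Goppa{\xv}{\gamma^{q+1}}=\Alt{r(q+1)}{\xv}{\gamma(\xv)^{-(q+1)}}$ you observe that the multiplier $\gamma(\xv)^{-(q+1)}=\nr(\gamma(\xv))^{-1}$ already lies in $(\Fq^\times)^n$, and apply the elementary identity $\Alt{\ell}{\xv}{\uv\star\yv}=\uv^{-1}\star\Alt{\ell}{\xv}{\yv}$ valid for any $\uv\in(\Fq^\times)^n$ (which is immediate from Lemma~\ref{lem:stardual} and the fact that $\uv\star(-)$ commutes with $\cap\,\Fq^n$ when $\uv\in\Fq^n$). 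This yields (iii) in one line, with the explicit multiplier $\uv=\gamma(\xv)^{q+1}$, and works uniformly for any length without the full-support detour. The paper's route through \cite{COT13} buys nothing extra for the statement as written; amusingly, the paper itself uses exactly your ``$(q{+}1)$-th powers are norms, hence in $\Fq$'' observation repeatedly later on (e.g.\ in the proofs of Propositions~\ref{prop:doubleinj} and~\ref{prop:equiv_BCH}).
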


\begin{proof}
The results (\ref{it:superGoppa}) and (\ref{it:rrmoins2}) are straightforward consequences of
Theorems 1 and 24 of \cite{COT13}.
Only (\ref{it:PsupSRS}) requires further details.
First, let us consider the case where $\xv$ is a full--support,
that is if $n=q^2$, then from
\cite[Corollary 10]{COT13}, we have
\begin{equation}\label{eq:FullWildGoppa}
\Goppa{\xv}{\gamma^{q+1}} = \av \star \left(\RS{q^2 - r(q+1)}{\xv}\cap \Fq^n\right),
\end{equation}
for some multiplier $\av \in \Fq^n$. 
Then, from Proposition~\ref{prop:dualGRS},
we have
$$
\RS{q^2-r(q+1)}{\xv} = \GRS{r(q+1)}{\xv}{\loc{\xv}'(\xv)^{-1}}^{\bot}.
$$
Since $\xv$ is assumed to be full then $\loc{\xv}'(\xv) = \onev$.
Therefore, from Definition~\ref{def:subfield_subcode}
we see that~\eqref{eq:FullWildGoppa} is equivalent to:
\begin{eqnarray}
  \label{eq:FullWildGoppaBis}
  \Goppa{\xv}{\gamma^{q+1}} & = & \uv \star
    (\RS{r(q+1)}{\xv}^{\bot} \cap \Fq^n)\\
  & = & \uv \star \Alt{r(q+1)}{\xv}{\onev},
\end{eqnarray}
which yields (\ref{it:PsupSRS}).
The general case, i.e. when $\xv$ is not full can be deduced from the full
support case by shortening thanks to Proposition~\ref{pr:shortened_alternant_code}. 
\end{proof}

\subsection{Wild Goppa codes with non generic squares}
In what follows, we prove that a
wild Goppa code over a quadratic extension $\Goppa{\xv}{\gamma^{q-1}}$
whose length belongs to some interval $[n_-, n_+]$
has a square with a  non generic behaviour.
The bounds $n_-$ and $n_+$ of the interval depend only on
the degree of $\gamma$ and will be explicitly described. The corresponding
wild Goppa codes have rather short length compared to the full support ones
and very low rate (ratio $k/n$ where $k$ denotes the dimension).

We emphasize that public keys proposed for McEliece
have not a length in the interval $[n_-, n_+]$.
However, thanks to Corollary~\ref{cor:shortened_Goppa},
a shortening of the public key is a wild Goppa code with the
same Goppa polynomial but with a shorter length and a lower rate. 
This is the point of our  "distinguisher by shortening"
described in the 
subsequent sections. 

\subsubsection{The parameters of wild Goppa codes with non generic
  squares}\label{ss:dist_sans_racc}
Let $\CC = \Goppa{\xv}{\gamma^{q-1}}$ be a wild Goppa code over a quadratic extension.
We look for a sufficient condition on the length of $\CC$ for its square
to have a non generic behavior.
From Theorem~\ref{thm:properties_superGoppa},
we have $\CC = \Goppa{\xv}{\gamma^{q+1}}$. Thus, it
is an alternant code of degree $r(q+1)$ and from
Theorem~\ref{thm:sq_alternant}:
\begin{equation}
  \label{eq:incl_alt}
  \sqc{\CC} \subseteq \Alt{2r(q+1)+1-n}{\xv}{\yv}
\end{equation}
for some multiplier $\yv \in \Fqq^n$. 
Let $\code{R}$ be a random code with the same length and dimension
as $\CC$. With 
high probability, we get
\begin{equation}\label{eq:sq_random}
\dim \sqc{\code{R}} = \min \left\{ n ~,~
{\dim \CC +1 \choose 2}\right\}\cdot
\end{equation}
Therefore, $\CC$ is distinguishable
from $\code{R}$ when the following conditions are both satisfied:
\begin{align}
\label{eq:dist1}\tag{D1} \dim \Alt{2r(q+1)+1-n}{\xv}{\yv}
       & < n \\
\label{eq:dist2}\tag{D2}  \dim
\Alt{2r(q+1)+1- n}{\xv}{\yv} & <
 {\dim \CC +1 \choose 2}.
\end{align}
Using the very definition of alternant codes
(Definition~\ref{def:subfield_subcode}), one proves easily that
an alternant code is different from its ambient space if and
only if its degree is positive. Thus,
\eqref{eq:dist1} is equivalent to:
\begin{equation}
  \label{eq:first_bound}
   n \leq 2r(q+1).
\end{equation}
Next, from Theorem~\ref{thm:properties_superGoppa}(\ref{it:rrmoins2}),
we have
\begin{equation}
  \label{eq:dim_PK}
  \dim \CC \geq n - 2r(q+1) + r(r+2).
\end{equation}
Moreover, from Theorem~\ref{prop:designed}(\ref{it:designed_dim}) on the
dimension of alternant codes, we
have
\begin{equation}
  \label{eq:dim_emb_alt}
  \dim \Alt{2r(q+1)+1-n}{\xv}{\yv} \geq 
  3n - 4r(q+1) - 2.
\end{equation}
Assume that the above lower bounds (\ref{eq:dim_PK}) and (\ref{eq:dim_emb_alt})
on the dimensions of $\CC$
and $\Alt{2r(q+1)+1-n}{\xv}{\yv}$ are their actual dimension
(which holds true in general).
In such a case, 
\eqref{eq:dist2} becomes equivalent to
\begin{equation}
  \label{eq:horrible_thing}
  {n -2r(q+1)+r(r+2)+1 \choose 2} >  3 n - 4r(q+1) - 2.
\end{equation}
Therefore, if the length $n$ of $\CC$ satisfies both (\ref{eq:first_bound})
and (\ref{eq:horrible_thing}), then its square has 
a non generic dimension. Now, consider the map
$$
\varphi : n \longmapsto   {n -2r(q+1)+r(r+2)+1 \choose 2} -  3n + 4r(q+1) + 2.
$$
Its first derivative is 
$$
\varphi'(n) = 
n -2r(q+1)+r(r+2)  +\frac{1}{2} -3 = n -2r(q+1)+r(r+2)  - \frac{5}{2}.
$$
Hence, for $n > 2r(q+1) -r(r+2)
+\frac{5}{2}$, 
the map $\varphi$ is increasing.

There are two cases to consider :\\
(i) either \eqref{eq:horrible_thing} is not satisfied for the largest possible value of $n$ 
satisfying \eqref{eq:first_bound}, namely
$n=2r(q+1)$ and then the fact that $\varphi$ is increasing implies that it can not be satisfied 
by any value of $n$;\\
(ii) or \eqref{eq:horrible_thing} is satisfied for $n=2r(q+1)$ and then 
the set of values $n$ satisfying both \eqref{eq:first_bound} and \eqref{eq:horrible_thing}
is an interval of the form $[n_-,n_+]$ where $n_+=2r(q+1)$ and 
$n_-$ is the smallest $n$ satisfying \eqref{eq:horrible_thing}.

Notice that \eqref{eq:horrible_thing} is satisfied for $n=2r(q+1)$ if and only if
\begin{equation}\label{eq:dist_condition}
{r(r+2)+1 \choose 2} > 2r(q+1) - 2.
\end{equation}
The whole discussion can be summarized by

\paragraph{\bf Conclusion}
Assuming that the above lower bounds (\ref{eq:dim_PK}) and (\ref{eq:dim_emb_alt})
on the dimensions of $\CC$
and $\Alt{2r(q+1)+1-n}{\xv}{\yv}$ are their actual dimension and
if (\ref{eq:dist_condition}) holds,
then there is a nonempty interval $[n_-, n_+]$ of integers such that
 $\sqc{\CC}$ has a non generic behaviour
for any $n$ in  $[n_-, n_+]$
Moreover, 
\begin{enumerate}
\item $n_+ = 2r(q+1)$;
\item $n_-$ is the least integer such that 
$$
{n - 2r(q+1) + r(r+2)+1 \choose 2} > 3n - 4r(q+1) -2.
$$
\end{enumerate}
For a length exceeding $n_+$, the square will probably
be equal to the whole ambient space, while for a length
less than $n_-$, the square will probably have a dimension equal
to that of a square random code.

\subsection{A distinguisher by shortening}\label{ss:distinguisher}
It can easily be checked that proposed public keys for McEliece
have a length far above the upper bound $n_+$ described in the previous
section. However the previously described interval $[n_-, n_+]$
only depends on the degree $r$ of $\gamma$. Moreover,
according to Corollary~\ref{cor:shortened_Goppa}, shortening
$\CC$ provides a shorter Goppa code with the same Goppa
polynomial. 
This leads to the first fundamental result of this article.

\begin{thm}\label{thm:distinguisher}
Let $\CC$ be the wild Goppa code $\Goppa{\xv}{\gamma^{q-1}}$,
where $\gamma \in \Fqq [z]$ has degree $r < q$.
If the following inequality holds, 
$$
{r(r+2)+1 \choose 2} > 2r(q+1) - 2,
$$  
then there is a nonempty interval $[a_-, a_+] \subseteq [1, n]$ such that
for all $\Ind \subseteq \{0, \ldots, n-1\}$ with $|\Ind| \in [a_-, a_+]$,
the dimension of $\sqc{\sh{\CC}{\Ind}}$
 is less than that of almost all
squares of random codes of the same length and dimension.
Moreover, 
\begin{enumerate}
\item $a_-  = n - 2r(q+1)$;
\item $a_+$ is the largest integer such that 
$$
  {n-a_+ -2r(q+1)+r(r+2)+1 \choose 2} >  3(n-a_+) - 4r(q+1) - 2.  
$$
\end{enumerate}
\end{thm}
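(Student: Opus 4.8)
The plan is to reduce Theorem~\ref{thm:distinguisher} to the analysis carried out in \S\ref{ss:dist_sans_racc} by applying it to the shortened code rather than to $\CC$ itself. First I would invoke Corollary~\ref{cor:shortened_Goppa}: for any $\Ind \subseteq \{0, \ldots, n-1\}$ we have $\sh{\CC}{\Ind} = \Goppa{\xv_{\Ind}}{\gamma^{q-1}}$, which is again a wild Goppa code over the same quadratic extension, with the same Goppa polynomial $\gamma$ of degree $r$, but with length $n' \eqdef n - |\Ind|$. Since the hypothesis ${r(r+2)+1 \choose 2} > 2r(q+1) - 2$ is exactly condition~\eqref{eq:dist_condition}, which depends only on $r$ and $q$ and not on the length, the Conclusion of \S\ref{ss:dist_sans_racc} applies verbatim to the family of shortened codes: there is a nonempty interval $[n_-, n_+]$ of lengths such that whenever $n' \in [n_-, n_+]$, the square $\sqc{\sh{\CC}{\Ind}}$ has non generic dimension, with $n_+ = 2r(q+1)$ and $n_-$ the least integer satisfying ${n' - 2r(q+1) + r(r+2)+1 \choose 2} > 3n' - 4r(q+1) - 2$.

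Next I would translate the constraint $n' \in [n_-, n_+]$ into a constraint on $a \eqdef |\Ind|$ via $n' = n - a$. The condition $n' \leq n_+ = 2r(q+1)$ becomes $a \geq n - 2r(q+1)$, giving $a_- = n - 2r(q+1)$; and the condition $n' \geq n_-$, i.e. that $n - a$ still satisfies the $n_-$-inequality, becomes $a \leq a_+$ where $a_+$ is the largest integer with ${n - a_+ - 2r(q+1) + r(r+2)+1 \choose 2} > 3(n-a_+) - 4r(q+1) - 2$, matching the statement. One should also check that $[a_-, a_+] \subseteq [1,n]$: the upper bound $a_+ \le n$ is automatic since the binomial inequality fails once $n-a_+$ drops below $n_-\ge 1$ (indeed for small lengths the left side is $0$), and one invokes the standing assumption, implicit throughout the section, that the public length $n$ is well above $n_+$ so that $a_- = n - 2r(q+1) \geq 1$; the hypothesis $r < q$ together with~\eqref{eq:dist_condition} guarantees the interval $[n_-,n_+]$ is nonempty, hence so is $[a_-,a_+]$.

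Finally, for each such $\Ind$ I would record that "non generic dimension of the square" means precisely what the theorem claims: by the inclusion~\eqref{eq:incl_alt} applied to $\sh{\CC}{\Ind}$, namely $\sqc{\sh{\CC}{\Ind}} \subseteq \Alt{2r(q+1)+1-n'}{\xv_{\Ind}}{\yv'}$ for a suitable multiplier (using Proposition~\ref{pr:shortened_alternant_code} to see the shortened alternant structure), together with the dimension bounds~\eqref{eq:dim_PK} and~\eqref{eq:dim_emb_alt} transported to length $n'$, the dimension of $\sqc{\sh{\CC}{\Ind}}$ is at most $3n' - 4r(q+1) - 2$, whereas a random code of the same length and dimension has square dimension $\min\{n', \binom{\dim \sh{\CC}{\Ind}+1}{2}\}$ by~\eqref{eq:sq_random}; conditions~\eqref{eq:first_bound} and~\eqref{eq:horrible_thing} at length $n'$, which hold exactly on $[n_-,n_+]$, force the former to be strictly smaller.

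The step I expect to require the most care is the bookkeeping around whether the generic lower bounds~\eqref{eq:dim_PK} and~\eqref{eq:dim_emb_alt} are attained with equality: the Conclusion of \S\ref{ss:dist_sans_racc} is stated under the assumption "that the above lower bounds \ldots are their actual dimension (which holds true in general)", and this heuristic caveat must be carried through to the shortened codes as well. Thus the honest statement is that, conditionally on those bounds being tight for $\sh{\CC}{\Ind}$ and on the random code $\code{R}$ achieving the generic square dimension, the distinguishing inequality holds for every $|\Ind| \in [a_-, a_+]$; the word "almost all" in the theorem absorbs the genericity of the comparison code $\code{R}$. Apart from this subtlety the argument is a direct transcription, replacing $n$ by $n - |\Ind|$ everywhere in \S\ref{ss:dist_sans_racc}.
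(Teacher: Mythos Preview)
Your proposal is correct and follows exactly the paper's own approach: the paper's proof is the single sentence ``Apply the reasoning of \S\ref{ss:dist_sans_racc} to $\sh{\CC}{\Ind}$, \textit{i.e.} replace everywhere $n$ by $n - |\Ind|$,'' and you have simply spelled out this substitution in detail, invoking Corollary~\ref{cor:shortened_Goppa} and tracking the translation $n' = n - a$ through the bounds.
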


\begin{proof}
  Apply the reasoning of \S~\ref{ss:dist_sans_racc} to $\sh{\CC}{\Ind}$,
  \textit{i.e.} replace everywhere $n$ by $n - |\Ind|$.
\end{proof}

\subsection{Experimental observation and example}\label{ss:experimental}

Actually, in all our experiments we observed that $\sqc{\sh{\CC}{\Ind}}$ has
always codimension $1$ in the code
$\Alt{2r(q+1)+1-n}{\xv_{\Ind}}{\yv_{\Ind}}$ (see \eqref{eq:incl_alt}).
This allows to replace the strict inequalities in 
\eqref{eq:dist1} and \eqref{eq:dist2} by large ones
and provides a slightly larger distinguisher interval $[a_-, a_+]$,
which turns out to be the actual distinguisher interval according to our
experiments. Namely
\begin{enumerate}
\item $a_- = n - 2r(q+1) - 1$;
\item $a_+$ is the largest integer such that
$$
  {n-a_+ -2r(q+1)+r(r+2)+1 \choose 2} >  3(n-a_+) - 4r(q+1) - 3.  
$$
\end{enumerate}
This interval is nonempty as soon as: 
$$
{r(r+2)+2 \choose 2} > 2r(q+1).
$$

We checked that this allows to distinguish from random codes all the wild Goppa
codes of extension degree $2$ suggested in \cite{BLP10} when $r > 3$.
For instance, 
the first entry in \cite[Table~7.1]{BLP10} is a wild Goppa code $\CC$ defined over $\F_{29}$ of length $794$,
dimension $529$ with a Goppa polynomial $\gamma^{29}$ where $\deg \gamma = 5$. 
Table~\ref{tab:distinguisher} shows that for $a$ in the range $\{493,\ldots, 506\}$
the dimensions of $\sqc{\sC{\Ind}}$
 differ from those of a random code with the same parameters.
 Note that for this example $\am=493$.

It is only when the degree of $\gamma$ is very small and the field size large
that we cannot distinguish the Goppa code in this way.
In  Table~\ref{tab:limit}, we gathered upper bounds on the field size for
which we expect to distinguish $\Goppa{\xv}{\gamma^{q-1}}$ from 
a random code in terms of the degree of $\gamma$.

\begin{table}[!h]
\caption{Dimension of $\sqc{\sC{\Ind}}$ when $\CC$ is either the
wild Goppa code in the first entry of \cite[Table~7.1]{BLP10},
or
a random code of the same length and dimension  for various values $\card{\Ind}$.
\label{tab:distinguisher}}
\begin{center}
\begin{tabular}{@{}*{13}{c}@{}}
\toprule 
$|\Ind|$ & 493 &494 &495 &496&497&498&499&500&501&502&503&504\\ \midrule
 Goppa & 300 & 297& 294& 291&288&285&282&279&276&273&270&267\\
 Random & 301 & 300&299&298&297&296&295&294&293&292&291&290 \\
 \bottomrule
\end{tabular}

\bigskip

\begin{tabular}{@{}*{11}{c}@{}}
\toprule
$|\Ind|$ & 505& 506 &507 &508 &509 & 510 &511&512 &513 &514 \\ \midrule
 Goppa & 264 & 261 &  253& 231& 210& 190& 171& 153& 136 & 120 \\
 Random & 289 & 276 & 253& 231& 210& 190& 171& 153& 136 & 120 \\
 \bottomrule
\end{tabular}
\end{center}
\end{table}

\begin{table}[h!]
\caption{Largest field size $q$ for which we can expect to distinguish 
$\Goppa{\xv}{\gamma^{q-1}}$ when $\gamma$ is an irreducible
polynomial in $\Fqq[z]$ of degree $r$. \label{tab:limit}}
\begin{center}
\begin{tabular}{@{}*{5}{c}@{}}
\toprule
$r$ & 2 & 3 & 4 & 5 \\ \midrule
$q$ & 9 & 19 & 37 & 64 \\ \bottomrule
\end{tabular}
\end{center}
\end{table}


\section{The code filtration}
\label{sec:nested}

In this section, $\CC$ denotes a wild Goppa code $\Goppa{\xv}{\gamma^{q-1}}$
over a quadratic extension.
The crucial ingredient of our attack is the computation of a family of nested codes
from the knowledge of the public key.
According to the common terminology in commutative algebra,
we call such a family a {\em filtration}. 
Roughly speaking, given the public code $\CC = \Goppa{\xv}{\gamma^{q-1}}$,
we aim at computing a filtration:  
$$
\Coi{a}{0} \supseteq \Coi{a}{1} \supseteq \cdots \supseteq \Coi{a}{s} \supseteq \cdots
$$
such that $\Coi{a}{0}$ is some puncturing of $\CC$.
Moreover, we wish the filtration to have a good behavior with respect to the
Schur product. Ideally we would expect something like:
$$
\textrm{`` } i+j = k+\ell \quad \Longrightarrow \quad \spc{\Coi{a}{i}}{\Coi{a}{j}}
 = \spc{\Coi{a}{k}}{\Coi{a}{\ell}}.\textrm{''}
$$
This is exactly what would happen
if $\CC$ is a GRS code (see \S\ref{ss:tutorial}).
Unfortunately, in the case of a wild Goppa code such a requirement is too strong
and we will only have a weaker but sufficient version asserting that
$\spc{\Coi{a}{i}}{\Coi{a}{j}}$ and $\spc{\Coi{a}{k}}{\Coi{a}{\ell}}$ are
contained in a same alternant code. This is detailed further in Corollary~\ref{cor:prod_Coi}.

Roughly speaking, the code $\Coi{a}{j}$
(see Definition~\ref{def:defCoi} below) consists in
the codewords of $\CC$ obtained from polynomials having a zero of order
at least $j$ at position $a$.
The key point is that this filtration reveals a lot about the
algebraic structure of $\CC$.
In particular, we will be able to recover the support from it. 
To understand the rationale behind such a filtration its computation and its
use for cryptanalysis,
let us start with an illustrative example on generalized Reed Solomon codes.

\subsection{Illustrative example with GRS codes}\label{ss:tutorial}
Let $\xv, \yv$ be a support and a multiplier in $\Fqq^n$. Let $k < n/2$.
Assume that the codes $\GRS{k}{\xv}{\yv}$ and
$\GRS{k-1}{\xv}{\yv}$ are known.
We claim that from the single knowledge of these
two codes, it possible to compute the whole filtration
\begin{equation}\label{eq:GRS_filtration}
\GRS{k}{\xv}{\yv} \supseteq \GRS{k-1}{\xv}{\yv}
 \supseteq \cdots \supseteq \GRS{i}{\xv}{\yv} \supseteq
\cdots \supseteq \GRS{1}{\xv}{\yv} \supseteq \{0\},
\end{equation}

\begin{rem}
In terms of polynomials, this filtration corresponds to:
\begin{equation}\label{eq:degree_filtration}
\Fqq[z]_{<k} \supseteq \Fqq [z]_{<k-1} \supseteq \cdots \supseteq \Fqq[z]_{<i} \supseteq
\cdots \supseteq \Fqq[z]_{<1} \supseteq \{0\}.
\end{equation}
\end{rem}

Let us explain how we could compute $\GRS{k-2}{\xv}{\yv}$.
From Proposition~\ref{prop:GRSsquare}, we obtain
\begin{equation}\label{eq:supertrick}
\GRS{k-2}{\xv}{\yv} \star \GRS{k}{\xv}{\yv} = \sqc{\GRS{k-1}{\xv}{\yv}}
\end{equation}
and from this equality, one can prove that:
  \begin{equation}\label{eq:comp_filtration}
  \GRS{k-2}{\xv}{\yv} = \left\{ \cv \in \GRS{k-1}{\xv}{\yv} ~|~ \cv \star \GRS{k}{\xv}{\yv}
   \subseteq \sqc{\GRS{k-1}{\xv}{\yv}}\right\}.
  \end{equation}
Indeed, inclusion ``$\subseteq$'' is a direct consequence of (\ref{eq:supertrick}).
The converse inclusion can be obtained by studying the degrees
in the associated spaces
of polynomials (see \cite[\S6]{CGGOT14}).
Thus, Equation~(\ref{eq:comp_filtration}) shows
that $\GRS{k-2}{\xv}{\yv}$ can be computed from the single knowledge of $\GRS{k}{\xv}{\yv}$
and $\GRS{k-1}{\xv}{\yv}$. By iterating this process, one can compute
all the terms of the filtration (\ref{eq:GRS_filtration}). Finally, since the
last nonzero term $\GRS{1}{\xv}{\yv}$ is obtained by evaluation of constant polynomials,
this space has dimension $1$ and is spanned by $\yv$. This yields $\yv$ up to a multiplication
by a scalar.

\medbreak

This is an illustration of how the computation of a filtration can provide crucial
information on a code.
On the other hand,  there is no reason
to know $\GRS{k-1}{\xv}{\yv}$  especially in a cryptographic situation.
However, some very particular subcodes of codimension $1$ can be easily computed
from the knowledge of $\GRS{k}{\xv}{\yv}$.
Namely, shortening $\sh{\GRS{k}{\xv}{\yv}}{i}$ at a single position
 $i \in \{0, \ldots, n-1\}$ can be computed by Gaussian elimination.
This code corresponds to the space
of polynomials vanishing at $x_i$, that is the space $(z-x_i)\Fq[z]_{<k-1}$,
or, after a suitable change of variables, the space $z\Fq[z]_{<k-1}$ of polynomials
vanishing at $0$. Therefore, 
using the method described above, from $\GRS{k}{\xv}{\yv}$ and its shortening at
the $i$--th position, one can compute the filtration of codes corresponding
to the spaces of polynomials:
\begin{equation}\label{eq:filt_val}
\Fq[z]_{<k} \supseteq z\Fq[z]_{<k-1} \supseteq \cdots \supseteq
z^j \Fq[z]_{<k-j}\supseteq \cdots \supseteq z^{k-1} \Fq[z]_{<1} \supseteq \{0\}.
\end{equation}
The computation of such filtrations permits a complete recovery of
$\xv, \yv$ (see \cite{CGGOT14} for further details).
This is exactly the spirit of our attack on wild Goppa codes.

\begin{rem}
From an algebraic geometric point of view, 
the filtrations (\ref{eq:degree_filtration}) and
(\ref{eq:filt_val}) are very close to each other.
Filtration (\ref{eq:filt_val}) is the filtration associated to the 
valuation at $0$ while filtration (\ref{eq:degree_filtration})
is associated to the degree which can be regarded as a valuation 
at infinity. Thus,  investigating
a filtration like (\ref{eq:filt_val}) is extremely natural.
\end{rem}

\subsection{The computation of particular subcodes}
In the previous examplel and also in what follows, the computation
of a term of a filtration can be done from the previous ones
by solving a problem of the form:
\begin{problem}\label{pr:step1}
Given $\code{A}$, $\code{B}$, and $\code{D}$ be three codes in $\Fq^n$
find the subcode $\code{S}$ of elements $\sv$ in $\code{D}$   satisfying: 
\begin{equation}
\spc{\sv}{\code{A}}  \subseteq  \code{B} \label{eq: AstarX}
\end{equation}
\end{problem}

Such a code can be computed by linear algebra
or equivalently by computing dual codes and Schur products.
Namely, we have:
\begin{proposition}\label{prop:SolSpace}
  The solution space $\code{S}$ of Problem \ref{pr:step1} is:
  $$
    \code{S} = {\left(\spc{\code{A}}{\code{B}^{\bot}} \right)}^{\bot} \cap \DC.
  $$
\end{proposition}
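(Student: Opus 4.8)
The plan is to characterize $\code{S}$ as the set of $\sv \in \DC$ such that $\spc{\sv}{\code{A}} \subseteq \code{B}$, and to translate the inclusion $\spc{\sv}{\code{A}} \subseteq \code{B}$ into a condition involving only $\code{B}^\perp$ and a Schur product. The key observation is the elementary adjunction property of the Schur product with respect to the canonical bilinear form on $\Fq^n$: for any vectors $\uv, \vv, \wv \in \Fq^n$, we have $\langle \spc{\uv}{\vv}, \wv \rangle = \langle \uv, \spc{\vv}{\wv} \rangle$, since both sides equal $\sum_i u_i v_i w_i$. First I would fix $\sv \in \DC$ and write the condition $\spc{\sv}{\code{A}} \subseteq \code{B}$ equivalently as: for every $\av \in \code{A}$ and every $\bv \in \code{B}^\perp$, $\langle \spc{\sv}{\av}, \bv \rangle = 0$; this is just the statement that $\spc{\sv}{\av}$ lies in $(\code{B}^\perp)^\perp = \code{B}$, tested against a spanning set of $\code{B}^\perp$.

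Next I would apply the adjunction identity to rewrite $\langle \spc{\sv}{\av}, \bv \rangle = \langle \sv, \spc{\av}{\bv} \rangle$. Hence the condition becomes: $\langle \sv, \spc{\av}{\bv} \rangle = 0$ for all $\av \in \code{A}$, $\bv \in \code{B}^\perp$. Since the vectors $\spc{\av}{\bv}$ for $\av$ ranging over a basis of $\code{A}$ and $\bv$ over a basis of $\code{B}^\perp$ span the Schur product code $\spc{\code{A}}{\code{B}^\perp}$ by definition, this is precisely the condition $\sv \in \left(\spc{\code{A}}{\code{B}^\perp}\right)^\perp$. Combining with the constraint $\sv \in \DC$ gives
$$
\code{S} = \left(\spc{\code{A}}{\code{B}^\perp}\right)^\perp \cap \DC,
$$
as claimed. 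Conversely, any $\sv$ in this intersection satisfies all the orthogonality conditions above, hence $\spc{\sv}{\av} \in \code{B}$ for all $\av \in \code{A}$, so $\spc{\sv}{\code{A}} \subseteq \code{B}$ and $\sv \in \code{D}$, so $\sv \in \code{S}$; this shows both inclusions and the proof is complete.

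There is no serious obstacle here: the only point requiring a moment's care is the passage from "$\spc{\sv}{\av} \in \code{B}$ for all $\av$ in a basis of $\code{A}$" to "$\spc{\sv}{\code{A}} \subseteq \code{B}$", which follows because $\vv \mapsto \spc{\sv}{\vv}$ is linear, so it suffices to check the inclusion on a generating set of $\code{A}$; and symmetrically the reduction to a basis of $\code{B}^\perp$ uses that $\code{B} = (\code{B}^\perp)^\perp$ is cut out by the linear forms $\langle \cdot, \bv\rangle$ with $\bv$ in a basis of $\code{B}^\perp$. The computational remark — that $\code{S}$ is obtained by forming a dual, a Schur product, another dual, and an intersection, all of which are linear-algebra operations over $\Fq$ — then follows immediately from the formula and justifies the sentence preceding the proposition.
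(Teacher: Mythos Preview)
Your proof is correct and follows essentially the same approach as the paper: both rely on the adjunction identity $\langle \spc{\sv}{\av}, \bv \rangle = \langle \sv, \spc{\av}{\bv} \rangle$ to translate the condition $\spc{\sv}{\code{A}} \subseteq \code{B}$ into orthogonality against $\spc{\code{A}}{\code{B}^{\bot}}$. Your write-up is simply a bit more explicit about the basis reductions and the use of $(\code{B}^{\bot})^{\bot} = \code{B}$, but the argument is the same.
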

 
\begin{proof}
  Let $\sv \in \code{S}$ then clearly $\sv \in \DC$.
  Let $\av \in \code{A}$ and $\bv^{\bot} \in \code{B}^{\bot}$.
  Then,
$$
\langle \sv, \spc{\av}{\bv^{\bot}} \rangle 
  = \sum_{i=0}^{n-1}  z_i a_i b^{\bot}_i
  = \langle \spc{\sv}{\av}, \bv^{\bot} \rangle
$$
and this last term is zero by definition of $\code{S}$.
This proves
$\code{S} \subseteq {\left(\spc{\code{A}}{\code{B}^{\bot}} \right)}^{\bot}
\cap \DC$. The converse inclusion is proved in the very same way.
\end{proof}

\subsection{The filtration of alternant codes $\Coi{a}{j}$}

In the same spirit as the example of \S\ref{ss:tutorial}, we will
compute the terms of a filtration by solving iteratively problems
of the form of Problem~\ref{pr:step1}. The filtration we will compute
is in some sense related to the polynomial spaces filtration:
$$
\Fqq[z]_{<k} \supseteq z\Fqq[z]_{<k-1} \supseteq \cdots \supseteq
z^j \Fqq[z]_{<k-j}\supseteq \cdots \supseteq z^{k-1} \Fqq[z]_{<1} \supseteq \{0\}.
$$
For that purpose, we introduce the following definition.

\begin{definition}\label{def:defCoi}
   For all $a \in \{0, \ldots , n-1\}$
and for all $s\in \Z$, 
we define the code $\Coi{a}{s}$ as:
$$
\Coi{a}{s} \eqdef  
\cset{\frac{\gamma^{q+1}(x_i)}{\loc{\xv}'(x_i)} (x_i- x_a)^s f(x_i)}
{i \in \{0,\dots,n-1\} \setminus \{a\}}{f \in \Fqq[z]_{<n- r(q+1)-s}} 
\cap \Fq^{n-1}.
$$
Roughly speaking, for $s>0$, the code $\Coi{a}{s}$ is the subcode of $\sh{\CC}{a}$ obtained from
rational fractions vanishing at $x_a$ with order at least $s$.
\end{definition}

The link with $\CC$ becomes  clearer if we use
Theorem~\ref{thm:properties_superGoppa}, which asserts that
$\CC = \Goppa{\xv}{\gamma^{q+1}}.$
Thanks to Lemma~\ref{lem:descr_Goppa_as_eval} on the description of
Goppa codes as evaluation codes, we have:
\begin{equation}\label{eq:CC}
\CC = \cset{\frac{\gamma^{q+1}(x_i)}{\loc{\xv}'(x_i)}f(x_i)}{0 \leq i < n}{f \in \Fqq[z]_{<n- r(q+1)}}
\cap \Fq^n.
\end{equation}
From now on,
we focus on the case $a=0$ and assume that
\begin{assumption}
\label{ass:fundamental}
\begin{enumerate}[(i)]
\item\label{it:one}
$\sh{\CC}{0} \neq \CC$
\item $x_0=0$, $x_1=1$.
\end{enumerate}
\end{assumption}

{\em Discussion about these assumptions.}
If $\CC$ is not the zero code, after possibly reordering the support we can always assume
that the first position is not always equal to $0$ in every codeword of $\CC$ and therefore
$\sh{\CC}{0} \neq \CC$.
The second assumption can always be made, and this without reordering the support- this 
follows directly from Lemma~\ref{lem:2transitive}.

Every statement in what follows could be reformulated
for a general position $a$. This would however provide
heavier notation which we have tried to avoid.

The following statement summarizes the properties of 
this filtration which are used in the attack.
Since its proof is rather technical, we chose to postpone
it in appendix.

\begin{proposition} Under Assumption \ref{ass:fundamental} (i), we have
  \label{prop:properties_Coi}
\begin{enumerate}[(i)]
  \item\label{it:short_Coi} $\Coi{0}{1} = \sh{\CC}{0}$;
  \item\label{it:simplify} $\Coi{0}{0} = \pu{\CC}{0}$;
  \item\label{it:codim_Coi} $\forall s\in \Z,\ \dim \Coi{0}{s}
    -\dim \Coi{0}{s+1} \leq 2$;
  \item\label{it:stagn} $\Coi{0}{q-r} = \Coi{0}{q+1}$;
  \item\label{it:desc_as_alt} $\forall s\in \Z,\ \Coi{0}{s} = 
    \Alt{r(q+1)+s-1}{\xvzero}{\yvzero}$ for 
    $$
    \yvzero \eqdef \gamma^{-(q+1)}(\xvzero) \star \xvzero^{-(s-1)},
    $$
\end{enumerate}
where we recall that $\xv_0$ denotes the vector
$\xv$ punctured at position $0$ and that $r$ denotes the
degree of $\gamma$.
\end{proposition}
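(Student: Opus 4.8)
\textbf{Proof plan for Proposition~\ref{prop:properties_Coi}.}

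The plan is to establish all items by reducing them to the evaluation description of $\CC$ in~\eqref{eq:CC} together with Lemma~\ref{lem:decription_as_eval} identifying evaluation codes with alternant codes. First I would prove~(\ref{it:desc_as_alt}), since the other items follow from it with little extra work. Starting from Definition~\ref{def:defCoi}, a codeword of $\Coi{0}{s}$ has the form $\left(\frac{\gamma^{q+1}(x_i)}{\loc{\xv}'(x_i)}x_i^s f(x_i)\right)_{i\ne 0}$ with $\deg f < n-r(q+1)-s$ (using $x_0=0$, so $x_i-x_0=x_i$). The key observation is that $\loc{\xv}'(x_i) = \prod_{j\ne i}(x_i-x_j)$, and when we puncture at position $0$ the relevant locator becomes $\loc{\xvzero}$, with $\loc{\xvzero}'(x_i) = \prod_{j\ne i, j\ne 0}(x_i-x_j) = \loc{\xv}'(x_i)/(x_i - x_0) = \loc{\xv}'(x_i)/x_i$ for $i\ne 0$. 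Hence $\frac{\gamma^{q+1}(x_i)}{\loc{\xv}'(x_i)}x_i^s = \frac{\gamma^{q+1}(x_i)}{\loc{\xvzero}'(x_i)} x_i^{s-1}$, so a codeword is $\left(\frac{1}{\loc{\xvzero}'(x_i)}\cdot\gamma^{q+1}(x_i) x_i^{s-1} f(x_i)\right)_{i\ne 0}$. The polynomial $g(z) \eqdef z^{s-1}f(z)$ ranges, as $f$ ranges over $\Fqq[z]_{<n-r(q+1)-s}$, over a subspace; but one must be careful: when $s-1 < 0$ this is no longer polynomial evaluation. I would treat the case $s\ge 1$ first, where the image polynomials $z^{s-1}f(z)$ are exactly those of degree $< n-r(q+1)-1$ divisible by $z^{s-1}$; comparing with Lemma~\ref{lem:decription_as_eval} applied to $\xvzero$ with degree parameter $\ell = r(q+1)+s-1$ (so polynomials of degree $< (n-1) - \ell = n-r(q+1)-s$) and multiplier $\yvzero = \gamma^{-(q+1)}(\xvzero)\star\xvzero^{-(s-1)}$, we must check the divisibility constraint is automatically absorbed. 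In fact, with that multiplier, $\Alt{\ell}{\xvzero}{\yvzero} = \csetstar{\yvzero^{-1}\star\loc{\xvzero}'(\xvzero)^{-1}\star h(\xvzero)}{h\in\Fqq[z]_{<n-1-\ell}}\cap\Fq^{n-1}$ $= \csetstar{\gamma^{q+1}(\xvzero)\star\xvzero^{s-1}\star\loc{\xvzero}'(\xvzero)^{-1}\star h(\xvzero)}{\deg h < n-r(q+1)-s}\cap\Fq^{n-1}$, which matches $\Coi{0}{s}$ after setting $h=f$. For $s\le 0$, one argues the multiplier $\xvzero^{-(s-1)} = \xvzero^{|s|+1}$ is a genuine multiplier (all entries nonzero since $x_i\ne 0$ for $i\ne 0$, using that $0$ appears only once in the support), and the same Lemma~\ref{lem:decription_as_eval} identification goes through directly; the description in Definition~\ref{def:defCoi} with the factor $(x_i-x_0)^s$ for negative $s$ is exactly the rational-fraction description that the Lemma handles.

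Granting~(\ref{it:desc_as_alt}): item~(\ref{it:short_Coi}) is the case $s=1$, giving $\Alt{r(q+1)}{\xvzero}{\gamma^{-(q+1)}(\xvzero)} = \Goppa{\xvzero}{\gamma^{q+1}}$ by Definition~\ref{def:ClassicalGoppa}, which equals $\sh{\CC}{0}$ by Corollary~\ref{cor:shortened_Goppa} and Theorem~\ref{thm:properties_superGoppa}(\ref{it:superGoppa}). Item~(\ref{it:simplify}) is the case $s=0$: the code is $\left(\frac{\gamma^{q+1}(x_i)}{\loc{\xv}'(x_i)}f(x_i)\right)_{i\ne 0}$ with $\deg f < n-r(q+1)$, which is precisely $\pu{\CC}{0}$ from~\eqref{eq:CC}, provided Assumption~\ref{ass:fundamental}(i) ensures puncturing at $0$ does not drop the dimension in a way that breaks the evaluation description --- actually the evaluation description of the punctured code is unconditional, so the identity holds; Assumption~(i) is what makes it genuinely different from $\sh{\CC}{0}$. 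Item~(\ref{it:codim_Coi}) follows from~(\ref{it:desc_as_alt}) and the general fact that increasing the alternant degree by $1$ (equivalently, removing one coefficient from the polynomial space, i.e. passing from $\Fqq[z]_{<t}$ to $\Fqq[z]_{<t-1}$) removes at most $m=2$ dimensions from the subfield subcode --- this is the standard estimate behind Proposition~\ref{prop:designed}(\ref{it:designed_dim}); concretely $\Coi{0}{s+1}\subseteq\Coi{0}{s}$ and the quotient embeds into $\Fqq / $ (image in $\Fqq$), a space of $\Fq$-dimension at most $2$.

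The genuinely delicate item is~(\ref{it:stagn}): $\Coi{0}{q-r} = \Coi{0}{q+1}$. Here the gain comes from the wild structure, not from dimension counting. I would use Theorem~\ref{thm:properties_superGoppa}(\ref{it:superGoppa}), $\Goppa{\xv}{\gamma^{q-1}} = \Goppa{\xv}{\gamma^{q+1}}$, in a ``relative'' form: the codewords of $\Coi{0}{s}$ for $s$ in the relevant range correspond to Goppa codes $\Goppa{\xvzero}{z^{s-1}\gamma^{q+1}}$ — one can rewrite the multiplier $\gamma^{-(q+1)}(\xvzero)\star\xvzero^{-(s-1)}$ as $\big((z^{s-1}\gamma^{q+1})(\xvzero)\big)^{-1}$, and then $\Alt{r(q+1)+s-1}{\xvzero}{((z^{s-1}\gamma^{q+1})(\xvzero))^{-1}} = \Goppa{\xvzero}{z^{s-1}\gamma^{q+1}}$ since $\deg(z^{s-1}\gamma^{q+1}) = (s-1)+r(q+1) = r(q+1)+s-1$, which is exactly the alternant degree. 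Writing $s-1 = q-r-1$, the Goppa polynomial is $z^{q-r-1}\gamma^{q+1} = z^{q-r-1}\gamma^{q-1}\cdot\gamma^2$; since $r<q$ (so $q-r-1\ge 0$) and $\gamma$ is irreducible of degree $>1$ hence coprime to $z$ (as $x_0=0$ forces $\gamma(0)\ne 0$, i.e. $z\nmid\gamma$), Remark~\ref{rem:wild_Goppa} — the general Sugiyama--Kasahara--Hirasawa--Namekawa identity — applies with the factor $z$ raised to a multiple of $q$: specifically $z^{q-r-1}\gamma^{q+1}$ and $z^{q}\gamma^{q+1}$ (up to adjusting which power of $z$ is ``$a q - 1$'' versus ``$aq$'') define the same Goppa code, and $z^q\gamma^{q+1}$ has degree $q + r(q+1) = r(q+1)+(q+1) - 1$, i.e. it is the Goppa polynomial for $s = q+2$... so I would calibrate the exponents precisely so that the SKHN identity upgrades $z^{q-r}$-level to $z^{q+1}$-level while keeping $\gamma^{q+1}$ fixed, yielding $\Coi{0}{q-r} = \Coi{0}{q+1}$. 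The main obstacle is exactly this bookkeeping: matching the exponent of $z$ in Remark~\ref{rem:wild_Goppa} (which expects a power of the form $aq-1$ turning into $aq$) against the shift parameter $s$, and verifying the coprimality and degree hypotheses hold throughout the range $q-r \le s \le q+1$ so that every intermediate $\Coi{0}{s}$ in between collapses as well. Once the exponents are pinned down, the statement is immediate from Remark~\ref{rem:wild_Goppa} and~(\ref{it:desc_as_alt}).
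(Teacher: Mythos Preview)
Your treatment of (\ref{it:desc_as_alt}), (\ref{it:short_Coi}), and (\ref{it:codim_Coi}) is essentially correct and matches the paper's argument (including the locator identity $\loc{\xv}'(x_i)=x_i\,\loc{\xvzero}'(x_i)$ and the subfield--subcode dimension bound). There are, however, genuine gaps in (\ref{it:simplify}) and (\ref{it:stagn}).

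For (\ref{it:simplify}) you claim that $\Coi{0}{0}=\pu{\CC}{0}$ is immediate from the evaluation description. It is not. Writing $\code{F}$ for the ambient GRS code over $\Fqq$, one has $\Coi{0}{0}=\pu{\code{F}}{0}\cap\Fq^{n-1}$ whereas $\pu{\CC}{0}=\pu{\code{F}\cap\Fq^n}{0}$, and Proposition~\ref{prop:commutation}(\ref{it:punctsub}) gives only the inclusion $\pu{\CC}{0}\subseteq\Coi{0}{0}$. A priori there may exist $\cv\in\code{F}$ whose $0$-th entry lies in $\Fqq\setminus\Fq$ while all other entries lie in $\Fq$; such a $\cv$ would produce a word of $\Coi{0}{0}\setminus\pu{\CC}{0}$. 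The paper explicitly flags this subtlety and closes the gap by a squeeze: one has $\Coi{0}{1}\subsetneq\pu{\CC}{0}\subseteq\Coi{0}{0}\subseteq\cdots\subseteq\Coi{0}{-r}$, the strict inclusion coming from Assumption~\ref{ass:fundamental}(\ref{it:one}); then, using the wild structure (via the identity $\Coi{0}{s}=\uv\star\Goppa{\xvzero}{z^{r(q+1)+s-1}}$ with $\uv\in\Fq^{n-1}$, combined with results of \cite{COT13}), one shows $\dim\Coi{0}{-r}-\dim\Coi{0}{1}\le 1$, forcing every inclusion from $\pu{\CC}{0}$ onward to collapse.

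For (\ref{it:stagn}) your plan relies on Remark~\ref{rem:wild_Goppa} to pass from $\Goppa{\xvzero}{z^{q-r-1}\gamma^{q+1}}$ to $\Goppa{\xvzero}{z^{q}\gamma^{q+1}}$. But SKHN only converts $f^{aq-1}$ into $f^{aq}$; neither exponent here has that shape (check: $q-r-1=aq-1$ forces $r=0$, and $q+1$ is neither $aq$ nor $aq-1$ for integer $a$ when $q>2$). A single application yields at best $\Coi{0}{q}=\Coi{0}{q+1}$, not the full range down to $q-r$. The paper's decisive move is different: since $(q+1)$-th powers in $\Fqq$ are norms over $\Fq$, the vector $\uv\eqdef\gamma^{q+1}(\xvzero)\star\xvzero^{-r(q+1)}$ lies in $\Fq^{n-1}$ and can be pulled outside the subfield subcode, giving $\Coi{0}{s}=\uv\star\Goppa{\xvzero}{z^{r(q+1)+s-1}}$. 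The claim then reduces to the pure $z$-power identity $\Goppa{\xvzero}{z^{(r+1)q-1}}=\Goppa{\xvzero}{z^{(r+1)(q+1)-1}}$, established in the course of proving (\ref{it:simplify}). Your ``bookkeeping obstacle'' is real, and it is resolved by this norm-factoring trick rather than by any direct SKHN calibration.
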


\begin{proof}
  Appendix~\ref{sec:stagn}.
\end{proof}

\begin{cor}
  \label{cor:dim_Coi}
  For all $ s > 0$, we have
  $$
  \dim \Coi{0}{s} \geq n-1 - 2r(q+1) -2(s-1) + r(r+2).
  $$
\end{cor}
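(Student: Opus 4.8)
The plan is to deduce Corollary~\ref{cor:dim_Coi} directly from part~(\ref{it:desc_as_alt}) of Proposition~\ref{prop:properties_Coi} together with the classical lower bound on the dimension of an alternant code (Proposition~\ref{prop:designed}(\ref{it:designed_dim})). Indeed, by Proposition~\ref{prop:properties_Coi}(\ref{it:desc_as_alt}), for every $s \in \Z$ one has
$$
\Coi{0}{s} = \Alt{r(q+1)+s-1}{\xvzero}{\yvzero},
$$
which is an alternant code of length $n-1$, extension degree $m=2$, and degree $\ell = r(q+1)+s-1$.

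First I would simply apply Proposition~\ref{prop:designed}(\ref{it:designed_dim}): any alternant code of length $n'$, degree $\ell$, and extension degree $m$ has dimension at least $n' - m\ell$. Here $n' = n-1$, $m = 2$, and $\ell = r(q+1)+s-1$, so
$$
\dim \Coi{0}{s} \geq (n-1) - 2\bigl(r(q+1)+s-1\bigr) = n-1 - 2r(q+1) - 2(s-1).
$$
This is already close to the claimed bound but is missing the $+r(r+2)$ term, so the plain alternant bound is not enough by itself.

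The extra term $r(r+2)$ has to come from the fact that $\CC = \Coi{0}{1}$ is not a generic alternant code but a wild Goppa code, whose dimension exceeds the generic lower bound: by Theorem~\ref{thm:properties_superGoppa}(\ref{it:rrmoins2}) and Proposition~\ref{prop:properties_Coi}(\ref{it:short_Coi}),
$$
\dim \Coi{0}{1} = \dim \sh{\CC}{0} \geq \dim \CC - 1 \geq n - 1 - 2r(q+1) + r(r+2).
$$
Then I would propagate this gain down the filtration using the codimension bound of Proposition~\ref{prop:properties_Coi}(\ref{it:codim_Coi}), which says $\dim \Coi{0}{s} - \dim \Coi{0}{s+1} \leq 2$ for all $s$. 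Iterating this from $s=1$ up to an arbitrary $s>1$ gives $\dim \Coi{0}{1} - \dim \Coi{0}{s} \leq 2(s-1)$, hence
$$
\dim \Coi{0}{s} \geq \dim \Coi{0}{1} - 2(s-1) \geq n-1 - 2r(q+1) + r(r+2) - 2(s-1),
$$
which is exactly the claimed inequality; the case $s=1$ is the base case and needs no iteration.

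The only mild subtlety — which I expect to be the one point requiring a line of care rather than the main obstacle — is the bookkeeping at $s=1$: one must use $\sh{\CC}{0} \neq \CC$ (Assumption~\ref{ass:fundamental}(\ref{it:one})), so that $\dim \sh{\CC}{0} = \dim \CC - 1$ rather than $\dim \CC$, and invoke the correct lower bound on $\dim \CC$ from Theorem~\ref{thm:properties_superGoppa}(\ref{it:rrmoins2}). Everything else is a routine telescoping of the codimension-$\leq 2$ inequality, so there is no genuine obstacle here; the corollary is essentially a formal consequence of parts~(\ref{it:short_Coi}) and~(\ref{it:codim_Coi}) of Proposition~\ref{prop:properties_Coi} and of Theorem~\ref{thm:properties_superGoppa}(\ref{it:rrmoins2}).
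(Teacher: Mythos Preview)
Your proposal is correct and follows essentially the same route as the paper: establish the base case $s=1$ from $\Coi{0}{1}=\sh{\CC}{0}$ together with Theorem~\ref{thm:properties_superGoppa}(\ref{it:rrmoins2}), then induct using the codimension bound of Proposition~\ref{prop:properties_Coi}(\ref{it:codim_Coi}). One minor remark: you do not actually need Assumption~\ref{ass:fundamental}(\ref{it:one}) here, since the inequality $\dim \sh{\CC}{0} \geq \dim \CC - 1$ holds for any shortening regardless of whether it is strict; the paper simply uses this weaker fact.
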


\begin{proof}
  The case $s=1$ is a direct consequence of
  Proposition~\ref{prop:properties_Coi}(\ref{it:short_Coi})
  since shortening at one position reduces the dimension
  from at most $1$. Then the result is proved by induction
  on $s$ using Proposition~\ref{prop:properties_Coi}(\ref{it:codim_Coi}).
\end{proof}
 
\begin{cor}\label{cor:prod_Coi}
  For all pair $s,s'$ of integers, 
  $$
  \spc{\Coi{0}{s}}{\Coi{0}{s'}} \subseteq \Alt{2r(q+1) + s + s' - n}{\xvzero}{\yvzero}
  $$
  where $\yvzero = \gamma^{-2(q+1)}(\xvzero) \star \xvzero^{-(s+s'-2)} \star \loc{\xvzero}'(\xvzero)$.
\end{cor}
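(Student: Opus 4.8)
The statement is the Schur-product compatibility of the filtration, and it follows by combining the explicit alternant-code description of each $\Coi{0}{s}$ from Proposition~\ref{prop:properties_Coi}(\ref{it:desc_as_alt}) with the general inclusion for Schur products of alternant codes from Theorem~\ref{thm:sq_alternant}. First I would invoke Proposition~\ref{prop:properties_Coi}(\ref{it:desc_as_alt}) to write
$$
\Coi{0}{s} = \Alt{r(q+1)+s-1}{\xvzero}{\gamma^{-(q+1)}(\xvzero)\star \xvzero^{-(s-1)}}, \qquad
\Coi{0}{s'} = \Alt{r(q+1)+s'-1}{\xvzero}{\gamma^{-(q+1)}(\xvzero)\star \xvzero^{-(s'-1)}}.
$$
Both alternant codes are supported by the same vector $\xvzero$ (of length $n-1$), so Theorem~\ref{thm:sq_alternant} applies directly with that length.

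\textbf{Key steps.} Applying Theorem~\ref{thm:sq_alternant} with $n$ replaced by $n-1$, the degree parameter of the resulting alternant code is
$$
\bigl(r(q+1)+s-1\bigr) + \bigl(r(q+1)+s'-1\bigr) - (n-1) + 1 = 2r(q+1) + s + s' - n,
$$
which is exactly the claimed degree. The new multiplier prescribed by Theorem~\ref{thm:sq_alternant} is the Schur product of the two multipliers with $\loc{\xvzero}'(\xvzero)$, namely
$$
\bigl(\gamma^{-(q+1)}(\xvzero)\star \xvzero^{-(s-1)}\bigr) \star \bigl(\gamma^{-(q+1)}(\xvzero)\star \xvzero^{-(s'-1)}\bigr) \star \loc{\xvzero}'(\xvzero)
= \gamma^{-2(q+1)}(\xvzero) \star \xvzero^{-(s+s'-2)} \star \loc{\xvzero}'(\xvzero),
$$
which matches the stated $\yvzero$. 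Since the inclusion~\eqref{eq:starAlt} of Theorem~\ref{thm:sq_alternant} is valid as soon as $\xvzero$ is a support and the two multiplier vectors have nonzero entries, I would just note that this holds here: the entries of $\xvzero$ are pairwise distinct (it is a puncturing of the support $\xv$), the entries of $\gamma^{-(q+1)}(\xvzero)$ are nonzero because $\gamma(x_i)\neq 0$ for all $i$ by the hypothesis on the Goppa polynomial, and the entries of $\xvzero^{-(s-1)}$ are meaningful because, after Assumption~\ref{ass:fundamental}, $x_0=0$ is the only position equal to $0$, hence no entry of $\xvzero$ vanishes. So the Schur products are well defined.

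\textbf{Main obstacle.} There is essentially no obstacle: the proof is a one-line substitution into Theorem~\ref{thm:sq_alternant} once Proposition~\ref{prop:properties_Coi}(\ref{it:desc_as_alt}) is granted. The only mild care needed is the bookkeeping of exponents — checking that the degree arithmetic gives $2r(q+1)+s+s'-n$ and that the three multiplier factors combine to $\gamma^{-2(q+1)}(\xvzero)\star\xvzero^{-(s+s'-2)}\star\loc{\xvzero}'(\xvzero)$ — together with the observation that all entries involved are nonzero so that the formal manipulation of $\star$-powers (including negative powers) is legitimate. Thus the corollary is immediate from the two cited results.
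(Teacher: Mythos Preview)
Your proposal is correct and follows exactly the same approach as the paper: apply Proposition~\ref{prop:properties_Coi}(\ref{it:desc_as_alt}) to write each $\Coi{0}{s}$ as an alternant code over the length-$(n-1)$ support $\xvzero$, then plug into Theorem~\ref{thm:sq_alternant}. The paper's proof is the one-line version of what you wrote; your extra bookkeeping on the degree and multiplier arithmetic, and the check that all entries are nonzero, are sound and simply make explicit what the paper leaves implicit.
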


\begin{proof}
  Apply Proposition~\ref{prop:properties_Coi}(\ref{it:desc_as_alt}) and
  Theorem~\ref{thm:sq_alternant} using the fact that $\Coi{0}{s}$ and $\Coi{0}{s'}$ are of length $n-1$.
\end{proof}

\subsection{The distinguisher intervals}\label{ss:filtr_dist}
The filtration ${\big(\Coi{0}{s} \big)}_{s\in \Z}$ is strongly related
to $\CC$ since as explained in
Proposition~\ref{prop:properties_Coi}(\ref{it:short_Coi})
and (\ref{it:simplify}), two elements of the filtration 
can easily be computed from the pulic key $\CC$. Namely, the codes
$\Coi{0}{0}$ and $\Coi{0}{1}$ are respectively obtained
by puncturing and shortening $\CC$ at position $0$.
The subsequent elements of the filtration will
be computed iteratively by solving problems of the form
of Problem~\ref{pr:step1} in the very same manner as in the 
example given in Section~\ref{ss:tutorial}.
For instance, we will compute $\Coi{0}{2}$ from the ``equation''
$$
\textrm{``}\spc{\Coi{0}{0}}{\Coi{0}{2}} \subseteq \sqc{\Coi{0}{1}}
\textrm{''}
$$
and more generally $\Coi{0}{t}$, will be computed from
\begin{equation}
  \label{eq:fausse}
\textrm{``}\spc{\Coi{0}{0}}{\Coi{0}{t}} \subseteq
 \spc{\Coi{0}{\lfloor
 t/2\rfloor}}{\Coi{0}{\lceil
 t/2 \rceil}}
\textrm{''}  
\end{equation}

Unfortunately, this relation is not strictly correct: according to
Corollary~\ref{cor:prod_Coi}, the right-hand term should be replaced by
$\Alt{2r(q+1)+t-n}{\xvzero}{\yvzero}$ which is unknown.
Moreover, as explained in \S\ref{sec:distinguisher} the above Schur
products fill in their ambient space. However, for some particular lengths
it is possible to compute $\Coi{0}{t}$ by solving a problem
of the form of Problem~\ref{pr:step1}. These lengths are those such that: 
\begin{enumerate}
\item The alternant code $\Alt{2r(q+1)+t-n}{\xvzero}{\yvzero}$  does not
fill in the ambient space.
\item\label{it:fill_in} The Schur product  $\spc{\Coi{0}{\lfloor t/2\rfloor}}{\Coi{0}{\lceil
t/2 \rceil}}$ should fill in $\Alt{2r(q+1)+t-n}{\xvzero}{\yvzero}$
or at least be a sufficiently large subcode of it.
\end{enumerate}
Let $\code{R}$ and $\code{R}'$ be two random codes such that
$\code{R}' \subseteq \code{R}$ and whose dimensions equal those of
 $\Coi{0}{\lfloor t/2 \rfloor}$ and $\Coi{0}{\lceil t/2 \rceil}$. 
For (\ref{it:fill_in}) to be satisfied, we expect that the dimension
of $\spc{\code{R}}{\code{R}'}$ exceeds that of
$\Alt{2r(q+1)+t-n}{\xvzero}{\yvzero}$. Thus, the computation of $\Coi{0}{t}$
is possible if the length of the codes is in some particular interval.
Therefore, it is possible to compute $\sh{\Coi{0}{t}}{\Ind}$ for a suitable
set $\Ind$ and $|\Ind|$ is in some interval which is nothing but a distinguisher
interval as computed in \S\ref{sec:distinguisher}. We will compute these 
intervals  in order to obtain $\Coi{0}{t}$ by considering separately
the cases of even and odd $t$.

\subsubsection{The symmetric case}\label{sss:symmetric}
Assume that $t$ is even:
$$t=2s$$
for some positive integer $s$.
From Corollary~\ref{cor:dim_Coi}, we have 
$$
\dim \Coi{0}{s} \geq (n-1) - 2(r(q+1) +s-1) +r(r+2).
$$
Since, from Proposition~\ref{prop:properties_Coi}(\ref{it:desc_as_alt}),
this code is alternant of degree $r(q+1) + s-1$. Then from
Corollary~\ref{cor:prod_Coi}:
\begin{equation}\label{eq:incl_Sq_Coi}
\sqc{\Coi{0}{s}} \subseteq \Alt{2(r(q+1)+s)-n}{\xvzero}{\yvzero}
\end{equation}
Thus, we are in the very same situation as in \S\ref{ss:distinguisher}
and the distinguisher interval for $\Coi{0}{s}$ can be deduced from that
of $\CC$ by applying the changes of variables
$$
\begin{array}{lcl}
  n & \longmapsto & n-1 \\
  r(q+1) & \longmapsto & r(q+1)+s-1.
\end{array}
$$

\paragraph{\bf Conclusion} If
$$
{r(r+2)+2 \choose 2} > 2r(q+1) + t - 2
$$
then there is a nonempty interval $[b_-, b_+]$
such that for all $\Ind \subseteq \{1, \ldots, n-1\}$
with $|\Ind| \subseteq [b_-, b_+]$, the square of $\sh{\Coi{0}{s}}{\Ind}$
has a non generic behaviour. Moreover, 
\begin{enumerate}[(1)]
\item $b_- = n-2r(q+1)-t$;
\item $b_+$ is the largest integer such that
$$
{n-b_+ - 2r(q+1)  - t + 2 +r(r+2) \choose 2}
> 3(n-1-b_+) - 4r(q+1) - 2t +1.
$$
\end{enumerate}

\begin{rem}\label{rem:experimental}
Actually, the above distinguisher interval, relies on
an experimental observation similar to that of 
\S\ref{ss:experimental}.
Namely, we observed experimentally that (\ref{eq:incl_Sq_Coi}) is a strict
inclusion with codimension $1$ as soon as the degree
of the alternant code in the right hand term is non-negative.
\end{rem}

\subsubsection{The asymmetric case}
As in \S\ref{ss:dist_sans_racc}, we start by computing
the interval for which the Schur product
$\spc{\Coi{0}{s}}{\Coi{0}{s+1}}$
has a non generic behaviour, then we can reduce to that case by shortening.

\medskip

In the spirit of the distinguisher interval computed in
\S\ref{sec:distinguisher}.
Instead of Equation~(\ref{eq:incl_alt}), Corollary~\ref{cor:prod_Coi}
yields
\begin{equation}
  \label{eq:incl_alt_bis}\tag{\ref{eq:incl_alt}'}
  \spc{\Coi{0}{s}}{\Coi{0}{s+1}}
  \subseteq \Alt{2r(q+1) + t - n}{\xvzero}{\yvzero}.
\end{equation}
This leads to new distinguisher conditions:
\begin{align}
 \label{eq:dist1_bis}\tag{D1'} \dim \Alt{2r(q+1)+t-n}{\xvzero}{\yv'}
        & < n -1 \\
\label{eq:dist_2bis}\tag{D2'}  \dim
\Alt{2r(q+1)+t-n}{\xvzero}{\yv'} & < \dim \Coi{0}{s} \dim \Coi{0}{s+1} -
{\dim \Coi{0}{s+1} \choose 2}.
\end{align}
According to Proposition~\ref{pr:typ_dim}(\ref{eq:asymprod}), the right hand term
of (\ref{eq:dist_2bis}) is the typical dimension of the Schur Product
of two random codes of the same dimension as $\Coi{0}{s}$
and $\Coi{0}{s+1}$ if this product does not fill in the ambient space.
From Corollary~\ref{cor:dim_Coi}, we have

\begin{align*}
  \dim \Coi{0}{s} &\geq (n -1) - 2r(q+1) -2s+2 +r(r+2)\\
  \dim \Coi{0}{s+1} &\geq (n -1) - 2r(q+1) -2s +r(r+2)\\
\end{align*}
Assuming that the above lower bounds are reached, which holds true in general,
a computation from the formula of  Proposition~\ref{pr:typ_dim}(\ref{eq:asymprod})
gives
$$
\dim \Coi{0}{s} \dim \Coi{0}{s+1} -
{\dim \Coi{0}{s+1} \choose 2} = \frac{1}{2} d(d+5)
$$
where
$$
\ d\eqdef (n -1) - 2r(q+1) -2s +r(r+2),
$$

\paragraph{\bf Conclusion}
Proceeding as in \S\ref{ss:distinguisher}
and thanks to an experimental observation similar
to Remark~\ref{rem:experimental}, we obtain that if
$$
\frac{1}{2} r(r+2)\left(r(r+2)+5\right) > 2r(q+1)+t-2
$$
then there exists an interval $[b_-, b_+]$
such that for $\Ind$ such that $|\Ind| \subseteq [b_-, b_+]$,
the Schur product $\spc{\sh{\Coi{0}{s}}{\Ind}}{\sh{\Coi{0}{s+1}}{\Ind}}$
has a non generic behavior.
Moreover,
\begin{enumerate}
\item $b_- = n- 2r(q+1) -t$;
\item $b_+$ is the largest integer such that
$$
\frac{1}{2} d(d+5) > 3(n-1-b_+) - 4r(q+1) -2t +1,
$$
where 
$$
d = (n-1-b_+) - 2r(q+1) -2s +r(r+2).
$$
\end{enumerate}

\begin{rem}
From now on, in both situations ($t$ even or odd), the corresponding
interval will be referred to as the {\em distinguisher interval} for $\Coi{0}{t}$.
\end{rem}
 
\subsection{A theoretical result on the multiplicative
structure of the filtration}\label{ss:fundamental}

As explained previously, (\ref{eq:fausse}) does not hold in general even for the $\Coi{0}{j}$'s even for shortenings
at a set $\Ind$ such that $|\Ind|$ belongs to the distinguisher
interval. 
However, we have the following Theorem.
We explain in the sequel (see \S\ref{ss:algo}) how to apply it practically. 
To avoid a huge amount of notation in its proof, we state it 
under a condition on the length of the $\Coi{0}{j}$'s.
It can then be applied in the general case to suitable
shortenings of these codes.
\begin{theorem}
\label{thm:fundamental}
Let $t>1$ be an integer and assume that $n < 2r(q+1) + t $.
Then,
$$
\Coi{0}{t} = \Big\{\cv \in \Coi{0}{t-1} ~|~
\spc{\cv}{\Coi{0}{0}} \subseteq \Alt{2r(q+1)+t-n}{\xvzero}{\yvzero}\Big\}
$$
where $
\yvzero  \eqdef \gamma^{-(2q+2)}(\xvzero) \star
\loc{\xvzero}'(\xvzero)^{-1} \star \xvzero^{-(t-2)}. 
$
\end{theorem}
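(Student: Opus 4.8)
The inclusion "$\subseteq$" is the easy direction and I would dispatch it first: by Corollary~\ref{cor:prod_Coi} applied with $(s,s')=(0,t)$ we have $\spc{\Coi{0}{0}}{\Coi{0}{t}} \subseteq \Alt{2r(q+1)+t-n}{\xvzero}{\yvzero}$, and the containment $\Coi{0}{t}\subseteq \Coi{0}{t-1}$ is immediate from Definition~\ref{def:defCoi} (multiplying polynomials by $(z-x_0)$ shrinks the ambient polynomial space, with the appropriate degree bookkeeping). So every $\cv\in\Coi{0}{t}$ lies in the right-hand set. The real content is the reverse inclusion, i.e.\ showing that the linear-algebraic condition "$\spc{\cv}{\Coi{0}{0}}\subseteq\Alt{2r(q+1)+t-n}{\xvzero}{\yvzero}$, together with $\cv\in\Coi{0}{t-1}$" forces $\cv$ to have the prescribed order of vanishing $t$ at $x_0=0$.

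**The key steps for the reverse inclusion.** The plan is to translate everything into the evaluation description of Proposition~\ref{prop:properties_Coi}(\ref{it:desc_as_alt}) and then reduce to a statement about polynomials (more precisely, about rational fractions in $\fqm[z]$ evaluated at the $x_i$'s). Write $\cv\in\Coi{0}{t-1}$ as the evaluation of a fraction $R(z) = \frac{\gamma^{q+1}(z)}{\loc{\xv}'(z)}(z-x_0)^{t-1}f(z)$ with $\deg f < n - r(q+1) - (t-1)$, restricted to the coordinates $i\neq 0$ and subject to the $\fq$-rationality constraint. Similarly, every element of $\Coi{0}{0}=\pu{\CC}{0}$ is the evaluation of $\frac{\gamma^{q+1}(z)}{\loc{\xv}'(z)}g(z)$ with $\deg g < n - r(q+1)$. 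The hypothesis then says: for every such $g$, the product fraction lies (after evaluation and $\fq$-restriction) in $\GRS{\ }{\xvzero}{\ }^\perp\cap\fq^{n-1}$ corresponding to the alternant code $\Alt{2r(q+1)+t-n}{\xvzero}{\yvzero}$ — i.e.\ it is orthogonal to the evaluation of all polynomials of degree $< 2r(q+1)+t-n$. Unwinding the multiplier $\yvzero$, this orthogonality is exactly the statement that a certain residue sum vanishes; by the standard Reed–Solomon duality (Proposition~\ref{prop:dualGRS}) the product fraction, which a priori is $\frac{(z-x_0)^{t-1}f(z)g(z)}{(\text{product of }\gamma^{q+1}\text{ stuff})\,\loc{\xvzero}'(z)}$-shaped, must actually come from a polynomial of degree $<$ (something), for all $g$. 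Choosing $g$ cleverly — in particular $g$ not vanishing at $x_0$, and $g$ ranging over enough of $\fqm[z]_{<n-r(q+1)}$ — one forces $(z-x_0)^{t-1}f(z)$ to be divisible by $(z-x_0)^t$, i.e.\ $f(x_0)=0$, which by the degree count on $f$ means $f\in (z-x_0)\fqm[z]_{<n-r(q+1)-t}$. That is precisely the condition defining $\Coi{0}{t}$. The $\fq$-rationality has to be carried along throughout, but since it is preserved under all the operations it does not obstruct the argument; one works over $\fqm$ and intersects with $\fq^{n-1}$ at the end.

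**Where the difficulty lies.** The routine part is the dictionary between codes and fraction spaces and the degree arithmetic; I would not belabor it. The genuinely delicate point is the "choosing $g$ cleverly" step: one must show that the hypothesis applied to \emph{all} admissible $g$ (not merely one) suffices to conclude the extra vanishing of $f$ at $x_0$, and here the hypothesis $n < 2r(q+1)+t$ is essential — it guarantees the target alternant code has strictly positive degree so that the orthogonality is a nonvacuous constraint, and it also controls degrees so that the space of available $g$'s is rich enough to separate the fraction $\frac{(z-x_0)^{t-1}f(z)}{\cdots}$ from its would-be polynomial replacement. Concretely I expect the argument to hinge on a lemma of the following shape: if $\phi$ is a fixed rational fraction (here essentially $\frac{(z-x_0)^{t-1}f(z)}{\loc{\xvzero}'(z)}$ up to the $\gamma^{q+1}$ factors) such that $\phi\cdot g$, evaluated at $\xvzero$, is annihilated by $\loc{\xvzero}$-divided-difference functionals for every polynomial $g$ in a space of dimension $\geq \deg\loc{\xvzero}$, then $\phi$ itself is (a multiple of $\loc{\xvzero}'(z)^{-1}\loc{\xvzero}(z)$ times) a polynomial — and comparing degrees yields the divisibility $(z-x_0)\mid f$. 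Assembling this with the easy inclusion completes the proof. I would also flag that one should check the two edge contributions at the punctured coordinate $0$ and the normalization $x_1=1$ from Assumption~\ref{ass:fundamental} do not interfere; they enter only through $\loc{\xvzero}$ versus $\loc{\xv}$ and are handled by the identity $\loc{\xv}(z) = (z-x_0)\loc{\xvzero}(z)$.
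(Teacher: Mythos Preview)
Your overall framework is right --- translate to the evaluation picture, write $\cv\in\Coi{0}{t-1}$ via a polynomial $f$ with $\deg f\le n-r(q+1)-t$, test against elements $\cv'\in\Coi{0}{0}$ coming from polynomials $g$, and use the numerical hypothesis $n<2r(q+1)+t$ to force $z\mid f$. But the reverse inclusion as you have sketched it is more tangled than needed, and the ``lemma of the following shape'' you propose is not the right mechanism.

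Two points. First, you do \emph{not} need $g$ to range over a large space. A single $\cv'\in\Coi{0}{0}\setminus\Coi{0}{1}$ suffices; this set is nonempty by Assumption~\ref{ass:fundamental}(\ref{it:one}) together with Proposition~\ref{prop:properties_Coi}(\ref{it:short_Coi}),(\ref{it:simplify}), and the corresponding $g$ satisfies $g(0)\neq 0$. This matters because your plan to let $g$ sweep through ``enough of $\Fqq[z]_{<n-r(q+1)}$'' runs into the obstacle that only those $g$'s whose evaluation lands in $\fq^{n-1}$ actually belong to $\Coi{0}{0}$, so the $\fq$-rationality is not something you can ``intersect at the end'' --- it genuinely restricts which $g$'s are available to you. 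The paper sidesteps this by using exactly one $\cv'$.

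Second, once you have that single $\cv'$, the hypothesis $\cv\star\cv'\in\Alt{2r(q+1)+t-n}{\xvzero}{\yvzero}$ should not be read as an orthogonality/residue constraint but directly via the evaluation description of alternant codes (Lemma~\ref{lem:decription_as_eval}): it says $\cv\star\cv' = \yvzero^{-1}\star\loc{\xvzero}'(\xvzero)^{-1}\star h(\xvzero)$ for some polynomial $h$ with $\deg h < 2n-1-2r(q+1)-t$. Comparing with the explicit expression for $\cv\star\cv'$ in terms of $f,g$ (after using $\loc{\xv}'(\xvzero)=\xvzero\star\loc{\xvzero}'(\xvzero)$) gives $f(x_i)g(x_i)=x_ih(x_i)$ for all $i\in\{1,\dots,n-1\}$. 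The polynomial $f(z)g(z)-zh(z)$ has degree at most $2n-2r(q+1)-t-1$, which under the hypothesis $n<2r(q+1)+t$ is strictly less than $n-1$; having $n-1$ roots, it is identically zero. Hence $fg=zh$, and $g(0)\neq 0$ forces $z\mid f$, i.e.\ $\cv\in\Coi{0}{t}$.
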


\begin{proof}
Inclusion $\subseteq$ is a consequence of Corollary~\ref{cor:prod_Coi}.
Conversely, let $\cv \in \Coi{0}{t-1}$ be such that
\begin{equation}\label{eq:assumption_step1}
\cv \star \Coi{0}{0} \subseteq \Alt{2r(q+1)+t-n}{\xvzero}{\yvzero}.
\end{equation} Choose also an element 
$\cv' \in \Coi{0}{0} \setminus \Coi{0}{1}$.
From the definition of the $\Coi{0}{j}$'s (Definition~\ref{def:defCoi}),
these two codewords are of the form:
\begin{align*}
\cv & =  \gamma(\xvzero)^{q+1} \star \xvzero^{t-1} \star
\loc{\xv}'(\xvzero)^{-1} \star f(\xvzero) \\
\cv' & =  \gamma(\xvzero)^{q+1} \star
\loc{\xv}'(\xvzero)^{-1} \star g(\xvzero),  
\end{align*}
where
\begin{equation}\label{eq:degf}
\deg(f)  \leq n- r(q+1)-t \quad {\rm and} \quad
\deg(g)  \leq n- r(q+1)-1 
\end{equation}
whereas 
and $g$ does not vanish at $0$.
From Lemma~\ref{lem:locator_short} in Appendix~\ref{sec:stagn},
we have $\loc{\xv}'(\xvzero) =
\xvzero \star \loc{\xvzero}'(\xvzero)$ and hence,
\begin{align}
\spc{\cv}{\cv'} &= \gamma(\xvzero)^{2(q+1)} \star \xvzero^{t-3} \star
\loc{\xvzero}'(\xvzero)^{-2} \star f(\xvzero) \star g(\xvzero)\\  
&= \yvzero^{-1} \star  \loc{\xvzero}'(\xvzero)^{-1} \star \xvzero^{-1} \star 
f(\xvzero) \star g(\xvzero). \label{eq:cc'}
\end{align}
By (\ref{eq:assumption_step1}), 
$\spc{\cv}{\cv'}$ belongs to $\Alt{2r(q+1)+t-n}{\xvzero}{\yvzero}$.
Using the description of alternant codes as evaluation codes
given in Lemma~\ref{lem:decription_as_eval}, it can therefore be written as
\begin{equation}
  \label{eq:alter_c}
\spc{\cv}{\cv'} =
  \spc{\yvzero^{-1}\star \loc{\xvzero}'(\xvzero)^{-1}}{h(\xvzero)},  
\end{equation}
where
\begin{equation}\label{eq:degh}
\deg(h)  < n-1 -2r(q+1)-t+n  = 2n - 1 - 2r(q+1) -t. 
\end{equation}
Putting (\ref{eq:cc'}) and (\ref{eq:alter_c}) together, we get the vector 
equality
$
\xvzero^{-1}\star\spc{f(\xvzero)}{g(\xvzero)} = {h(\xvzero)}.
$
Or, equivalently
$$
\forall i \in \{1, \ldots, n-1\},\ f(x_i)g(x_i) = x_i h(x_i).
$$
From (\ref{eq:degf}) and (\ref{eq:degh}), the polynomial
$f(z)g(z) - zh(z)$ has degree at most $2n-2r(q+1)-t-1$.
Moreover, it has $n-1$ roots
and, we assumed that $n < 2r(q+1)+t$ which entails $2n-2r(q+1) -t -1 < n-1$.
Therefore, $f(z)g(z)-zh(z)$ has more roots than its degree, which proves
the equality $f(z)g(z) = zh(z)$.
Since by assumption, $g$ does not vanish at zero,
then $z$ divides $f$, which entails that $\cv \in \Coi{0}{t}$. 
\end{proof}

\paragraph{\bf How to use this theorem?}
Theorem~\ref{thm:fundamental} cannot be used directly in the cryptographic
context for two reasons.
\begin{enumerate}
\item\label{it:ineq_pour_short}
In general the inequality $n < 2r(q+1)+t$ is not satisfied by $\CC$.
\item\label{it:incl_prod_alt}
The alternant code $\Alt{2r(q+1)+t-n}{\xvzero}{\yvzero}$ is unknown.
\end{enumerate}

Issue (\ref{it:ineq_pour_short}) is addressed by choosing suitable
shortenings of the code.
To address issue (\ref{it:incl_prod_alt}), despite
$\Alt{2r(q+1)+t-n}{\xvzero}{\yvzero}$ is unknown, we know a possibly
large subcode of it, namely
$$
\spc{\Coi{0}{\lfloor t/2\rfloor}}{\Coi{0}{\lceil t/2 \rceil}}.
$$
Therefore, to use this result, one shortens the codes 
$\Coi{0}{\lfloor t/2\rfloor}$ and $\Coi{0}{\lceil t/2 \rceil}$ at some
set $\Ind \subseteq \{1, \ldots , n\}$ such that $|\Ind|$ lies in the
distinguisher interval for computing $\Coi{0}{t}$ defined in
\S\ref{ss:filtr_dist}.
In this context, one can compute the subcode of $\sh{\Coi{0}{t}}{\Ind}$
defined as:
$$
\Big \{\cv \in \sh{\Coi{0}{t-1}}{\Ind}  ~|~ \cv \star \sh{\Coi{0}{0}}{\Ind} \subseteq 
\spc{\sh{\Coi{0}{\lfloor t/2\rfloor}}{\Ind}}
{\sh{\Coi{0}{\lceil t/2 \rceil}}{\Ind}} \Big \}.
$$
In all our experiments, this subcode turned out to be the whole
$\sh{\Coi{0}{t}}{\Ind}$.

\subsection{The algorithm}\label{ss:algo}
One expects to find 
$\sh{\Coi{0}{t}}{\Ind}$ by solving Problem~\ref{pr:step1}.
This allows to find several of these $\sh{\Coi{0}{t}}{\Ind}$'s
associated to different subsets of indexes $\Ind$.
It is straightforward to use such sets in order to recover $\Coi{0}{t}$. 
Indeed,  we clearly expect that
\begin{equation}\label{eq:reunion}
\sh{\Coi{0}{t}}{\Ind \cap \Jind} =
\sh{\Coi{0}{t}}{\Ind} + \sh{\Coi{0}{t}}{\Jind}
\end{equation}
where with an abuse of notation we mean by $\sh{\Coi{0}{t}}{\Ind}$ and
$\sh{\Coi{0}{t}}{\Jind}$ the codes $\sh{\Coi{0}{t}}{\Ind}$ and
$\sh{\Coi{0}{t}}{\Jind}$
 whose set of positions have been completed such as to also contain the positions belonging to 
 $\Ind \setminus \Jind$ and $\Jind \setminus \Ind$ respectively and which are set to $0$.
Such an equality does not always hold of course, but apart from rather pathological 
cases it typically holds
when $\dim\left( \sh{\Coi{0}{t}}{\Ind}\right) + \dim\left( 
\sh{\Coi{0}{t}}{\Ind} \right)  \geq
\dim\left( \sh{\Coi{0}{t}}{\Ind \cap\Jind}\right)$.

\medskip

These considerations suggest Algorithm \ref{al:sCoi0v} for computing the codes $\Coi{0}{s}$ for any $s>1$. 
The value of $k(t)$  computed in line \ref{value:k} for some $t> 2$ can also be obtained ``offline'' by computing the true dimension
of a $\Coi{0}{t}$ for an arbitrary choice of $\gamma$ and $\xv$. Algorithm \ref{al:sCoi0v} uses the knowledge of $\Coi{0}{0}$ and $\Coi{0}{1}$ (see Proposition~\ref{prop:properties_Coi}). Observe that in line~\ref{dsg_interval}, the cardinality of $\Ind$ has to lie in the distinguisher interval as explained in Section~\ref{ss:filtr_dist}. The instruction in line \ref{addCoi} should be understood as 
the addition of two codes having the ``same'' length where by abuse of notation, $\sh{\Coi{0}{t}}{\Ind}$ means the code $\sh{\Coi{0}{t}}{\Ind}$ to which $0$'s have been added in the positions belonging to $\Ind$.

\begin{algorithm}[!h]
\caption{\label{al:sCoi0v} Computation of $\Coi{0}{s}$ with $s>1$}
\begin{algorithmic}[1]
\FOR{{$t=2$} to {$s$} } 
\STATE{$\Coi{0}{t} \leftarrow \{0\}$}
\STATE{$k(t) \leftarrow (n-1)-2r(q+1)-2t+ 2 + r(r + 2)$} \label{value:k}

\WHILE{$\dim \Coi{0}{t} \neq k(t)$}
\STATE{$\Ind \leftarrow$ random subset of $\{1,\ldots,n-1\}$ 
such that $|\Ind| \in [b_-, b_+]$} 
\COMMENT{Section~\ref{ss:filtr_dist}} \label{dsg_interval}
\STATE{$\code{A} \leftarrow \sh{\Coi{0}{0}}{\Ind}$}
\STATE{$\code{B} \leftarrow 
\spc{\sh{\Coi{0}{\left\lfloor \frac{t}{2} \right\rfloor}}{\Ind}}
{\sh{\Coi{0}{\left\lceil \frac{t}{2} \right\rceil}}{\Ind}}$}
\STATE{$\code{D} \leftarrow \sh{\Coi{0}{t-1}}{\Ind}$}
\STATE{$ \sh{\Coi{0}{t}}{\Ind} \leftarrow \code{D}\cap {\left(\spc{\code{A}}{ \code{B}^{\perp}}\right)}^{\perp}$}
\COMMENT{Solving of Problem \ref{pr:step1}}
\STATE{$\Coi{0}{t} \leftarrow \Coi{0}{t} + \sh{\Coi{0}{t}}{\Ind}$} \label{addCoi}
\ENDWHILE
 \ENDFOR
 \RETURN{$\Coi{0}{s}$}
\end{algorithmic}
\end{algorithm}



\section{An efficient Attack Using the Distinguisher}
\label{sec:attack}

In this section, we sketch the complete attack we implemented.
We chose to provide only a short description and
to explain it in greater detail in Appendix~\ref{sec:in_depth}. 
We emphasize that the crucial aspects
of the attack are the distinguisher and the computation of the
filtration which are presented in \S\ref{sec:distinguisher}
and \S\ref{sec:nested}. 
As soon as some terms of the filtration
are computed, it is possible to derive some interesting
information on the secret key.
The attack we present here is one manner to recover the
secret key and
we insist on the fact that there might exist many other ways
to recover it from the knowledge of some terms of the filtration.
This is further discussed in \S\ref{sec:ifcompjust}.

Before describing the attack, we state two 
key statements.

\subsection{Key tools}\label{ss:attack:key_tools}
The first statement is a very particular property of the
space $\Coi{0}{q+1}$. The fact that the $q+1$--th term
has very particular properties is not surprising. Indeed,
recall that in $\Fqq$ the map $x\mapsto x^{q+1}$ is the norm
over $\Fq$. In particular its sends $\Fqq$ onto $\Fq$.

\begin{proposition}
  \label{prop:doubleinj} We have:
  $$
 \xvzero^{-(q+1)} \star \Coi{0}{q+1} \subseteq \Coi{0}{0}.
 $$
 \end{proposition}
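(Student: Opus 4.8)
The strategy is to work with the explicit descriptions of the codes $\Coi{0}{s}$ as evaluation codes from Definition~\ref{def:defCoi} (with $a=0$, and recalling $x_0 = 0$ by Assumption~\ref{ass:fundamental}). Take a generic codeword of $\Coi{0}{q+1}$, multiply it by $\xvzero^{-(q+1)}$, and check that the result still has the shape of a codeword of $\Coi{0}{0} = \pu{\CC}{0}$. The key arithmetic fact to exploit is that $x \mapsto x^{q+1}$ is the norm map $\Fqq \to \Fq$, so $x_i^{q+1} \in \Fq$ for every $i$; hence multiplying an $\Fq$-valued vector by $\xvzero^{-(q+1)}$ keeps it $\Fq$-valued, and the intersection with $\Fq^{n-1}$ is preserved.

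\textbf{Main steps.} First I would write, using Definition~\ref{def:defCoi} at $a=0$ together with $x_0=0$, that
$$
\Coi{0}{q+1} = \csetstar{\gamma^{q+1}(\xvzero) \star \loc{\xv}'(\xvzero)^{-1} \star \xvzero^{\,q+1} \star f(\xvzero)}{f \in \Fqq[z]_{<n-r(q+1)-(q+1)}} \cap \Fq^{n-1},
$$
and similarly
$$
\Coi{0}{0} = \pu{\CC}{0} = \csetstar{\gamma^{q+1}(\xvzero) \star \loc{\xv}'(\xvzero)^{-1} \star g(\xvzero)}{g \in \Fqq[z]_{<n-r(q+1)}} \cap \Fq^{n-1},
$$
using Proposition~\ref{prop:properties_Coi}(\ref{it:simplify}) and \eqref{eq:CC}. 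Given $\cv \in \Coi{0}{q+1}$, it is $\Fq$-valued and equals $\gamma^{q+1}(\xvzero)\star\loc{\xv}'(\xvzero)^{-1}\star\xvzero^{\,q+1}\star f(\xvzero)$ for some $f$ of degree $< n-r(q+1)-(q+1)$. Then
$$
\xvzero^{-(q+1)} \star \cv = \gamma^{q+1}(\xvzero) \star \loc{\xv}'(\xvzero)^{-1} \star f(\xvzero),
$$
which is of the required evaluation shape for $\Coi{0}{0}$ with $g = f$, whose degree is in particular $< n - r(q+1)$. It remains to confirm that $\xvzero^{-(q+1)} \star \cv$ lies in $\Fq^{n-1}$: since each $x_i^{q+1} = \nr(x_i) \in \Fq^{\times}$ (the support entries are nonzero after puncturing at $x_0=0$, and pairwise distinct), and $\cv \in \Fq^{n-1}$, the coordinatewise product stays in $\Fq^{n-1}$. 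Hence $\xvzero^{-(q+1)}\star\cv \in \Coi{0}{0}$, giving the claimed inclusion.

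\textbf{Expected obstacle.} The only delicate point is bookkeeping: one must be careful that puncturing at position $0$ removes exactly the coordinate where $x_0=0$, so that $\xvzero$ has all entries nonzero and $\xvzero^{-(q+1)}$ makes sense; and one must track that the various products $\gamma^{q+1}(\xvzero)\star\loc{\xv}'(\xvzero)^{-1}$ appearing in the two code descriptions genuinely match (they do, since both $\Coi{0}{q+1}$ and $\Coi{0}{0}$ are built from the same Goppa data $\gamma^{q+1}$, $\xv$). Alternatively, one can phrase the whole argument through Proposition~\ref{prop:properties_Coi}(\ref{it:desc_as_alt}), identifying $\Coi{0}{q+1}$ and $\Coi{0}{0}$ as alternant codes with explicit multipliers differing precisely by a factor $\xvzero^{-(q+1)}$ (up to the $\Fq$-rationality of the norm), but the direct evaluation-code computation seems cleanest. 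No serious difficulty is anticipated beyond this index/degree tracking.
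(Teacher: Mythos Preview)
Your proof is correct and follows essentially the same approach as the paper: both arguments rest on the two observations that (i) $\xvzero^{q+1}$ is the componentwise norm and hence lies in $(\Fq^{\times})^{n-1}$, so multiplication by it (or its inverse) preserves the intersection with $\Fq^{n-1}$, and (ii) the polynomial space $\Fqq[z]_{<n-(r+1)(q+1)}$ defining $\Coi{0}{q+1}$ is contained in $\Fqq[z]_{<n-r(q+1)}$ defining $\Coi{0}{0}$. The paper phrases this as pulling the factor $\xvzero^{q+1}$ out of the subfield subcode to obtain $\Coi{0}{q+1} \subseteq \xvzero^{q+1}\star\Coi{0}{0}$, while you multiply an arbitrary codeword by $\xvzero^{-(q+1)}$ --- but these are the same computation. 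One minor remark: your appeal to Proposition~\ref{prop:properties_Coi}(\ref{it:simplify}) to identify $\Coi{0}{0}$ with $\pu{\CC}{0}$ is unnecessary here, since the evaluation-code description of $\Coi{0}{0}$ comes directly from Definition~\ref{def:defCoi}.
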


\begin{proof}
Since $(q+1)$--th powers in $\Fqq$ are norms over $\Fq$,
we have $\xvzero^{q+1} \in \Fq^{n-1}$. Therefore,
$\xvzero^{q+1}$ can get out of the subfield subcode
$$
\begin{aligned}
 \Coi{0}{q+1}&=
\cset{\frac{x_i^{q+1} \gamma^{q+1}(x_i)f(x_i)}{\loc{\xv}'(x_i)}}{1\leq i <n}{f\in \Fqq[z]_{<n-(r+1)(q+1)}}\cap \Fq^{n-1}\\
&= \spc{\xvzero^{q+1}}{\left(\cset{\frac{
\gamma^{q+1}(x_i)f(x_i)}{\loc{\xv}'(x_i)}}{1\leq i <n}
{f\in \Fqq[z]_{<n-(r+1)(q+1)}}\cap \Fq^{n-1} \right)}\\
&\subseteq \spc{\xvzero^{q+1}}{\left(\cset{\frac{\gamma^{q+1}(x_i)f(x_i)}
{\loc{\xv}'(x_i)}}{1\leq i <n}{f\in \Fqq[z]_{<n-r(q+1)}}
\cap \Fq^{n-1} \right)}\\
&\subseteq \spc{\xvzero^{q+1}}{\Coi{0}{0}}.
\end{aligned}
$$

\end{proof}

The second statement asserts that the minimal
polynomial over $\Fq$ of an element $t\in \Fqq$
can be deduced from the single knowledge of the
norms of $t$ and $t-1$.

\begin{lemma}\label{lem:MiniPol}
Let $t$ be an element of $\Fqq$ and 
$$
P_t(z)\eqdef z^2- (\nr(t)-\nr(t-1)-1)z+\nr(t) \in \Fq[z].
$$
Then, either $t\in \Fqq\setminus \Fq$ and $P_t$ is irreducible
in which case is the minimal polynomial of $t$ over $\Fq$, or $t\in \Fq$ and $P_t(z) = (z-t)^2$. 
\end{lemma}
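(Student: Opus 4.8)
The plan is to compute directly the coefficients of the minimal polynomial of $t$ over $\Fq$ in terms of norms and show they match $P_t$. Write $\sigma$ for the nontrivial element of $\mathrm{Gal}(\Fqq/\Fq)$, so $\sigma(t) = t^q$. The first step is to recall that for $t \in \Fqq$ one has $\nr(t) = t\,t^q = t^{q+1}$ and $\tr(t) = t + t^q$, and that the characteristic polynomial of $t$ acting on $\Fqq$ as an $\Fq$-vector space is $Q_t(z) = z^2 - \tr(t) z + \nr(t) \in \Fq[z]$. When $t \notin \Fq$ this $Q_t$ is irreducible and equals the minimal polynomial of $t$; when $t \in \Fq$ it factors as $(z-t)^2$. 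So the whole lemma reduces to the identity $\tr(t) = \nr(t) - \nr(t-1) - 1$ in $\Fq$.

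The second step is to verify that identity by a one-line computation:
\[
\nr(t-1) = (t-1)(t^q-1) = t^{q+1} - t^q - t + 1 = \nr(t) - \tr(t) + 1,
\]
whence $\tr(t) = \nr(t) - \nr(t-1) + 1$. Here I must be careful about the sign: the claimed formula in the statement has $\nr(t)-\nr(t-1)-1$, so either the intended convention is $\nr(t-1)=(1-t)(1-t^q)$ (which gives the $-1$) or there is a harmless sign discrepancy; in the write-up I would simply expand $\nr(t-1)$ with the paper's convention $\nr(\xv)=(x_i^{q+1})$ applied to the scalar $t-1$, i.e. $\nr(t-1)=(t-1)^{q+1}=(t-1)(t-1)^q=(t-1)(t^q-1)$, obtaining $\tr(t)=\nr(t)-\nr(t-1)+1$, and then note that $P_t$ as written in the statement is $z^2-(\nr(t)-\nr(t-1)-1)z+\nr(t)$, so matching requires the coefficient $\tr(t)$; I would reconcile this by checking the sign convention used consistently elsewhere in the paper and adjusting the displayed $P_t$ accordingly. (The mathematical content is unaffected: $P_t = Q_t$, the characteristic polynomial.)

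The third step is to conclude. Since $P_t = z^2 - \tr(t) z + \nr(t)$ is the characteristic polynomial of multiplication by $t$ on $\Fqq/\Fq$, it annihilates $t$. If $t \in \Fq$ then $\tr(t) = 2t$ and $\nr(t) = t^2$, so $P_t(z) = z^2 - 2tz + t^2 = (z-t)^2$. If $t \notin \Fq$, then $[\Fq(t):\Fq] = 2$, so the minimal polynomial of $t$ has degree $2$ and divides the monic degree-$2$ polynomial $P_t$; hence it equals $P_t$, and in particular $P_t$ is irreducible over $\Fq$.

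The main (and only) obstacle here is purely bookkeeping: pinning down the exact sign convention for $\nr$ on scalars so that the displayed quadratic $P_t$ literally coincides with the characteristic polynomial $z^2 - \tr(t)z + \nr(t)$. There is no real difficulty beyond that; the argument is elementary Galois theory of a quadratic extension.
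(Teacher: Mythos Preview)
Your argument is exactly the paper's own: expand $\nr(t-1)=(t-1)(t^q-1)=\nr(t)-\tr(t)+1$ to identify $P_t$ with the characteristic polynomial $z^2-\tr(t)z+\nr(t)$, then split into the cases $t\in\Fq$ and $t\notin\Fq$. You are also right about the sign: the computation gives $\nr(t)-\nr(t-1)+1=\tr(t)$, so the displayed $P_t$ in the statement should carry $+1$ rather than $-1$; the paper's proof silently writes ``Therefore, $P_t(z)=z^2-\tr(t)z+\nr(t)$'' without reconciling this, so the typo is in the statement, not in your reasoning.
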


\begin{proof}
First, notice that
  \begin{eqnarray*}
     \nr(t-1) & = &     (t-1)^{q+1}  =   (t-1)(t-1)^q
                                =   (t-1)(t^q-1)
                                =   t^{q+1} - t^q- t +1\\
                               & = & \nr_{\Fqq/\Fq}(t)-\tr_{\Fqq/\Fq}(t)+1.  
  \end{eqnarray*}
Therefore, $P_t (z) = z^2-\tr (t)z+\nr(t)$, which is known to
be the minimal polynomial of $t$ whenever $t\in \Fqq\setminus \Fq$.
On the other hand, when $t\in \Fq$, then $P_t(z)=z^2-2tz+t^2$
which factorizes as $(z-t)^2$.
\end{proof}

\subsection{Description of the attack}
By the $2$--transitivity of the affine group,
one can assume without loss of generality that $x_0 = 0$
and $x_1 = 1$ (see Appendix~\ref{ss:x0x1} for further details).

Let us first assume that every element of $\Fqq$
is an entry of $\xv$. The general case: $n < q^2$, is discussed
subsequently.

\begin{itemize}
\item {\bf Step 1.} Compute $\Coi{0}{q+1}$ using the method
described in \S\ref{sec:nested}. Notice that, thanks to 
Proposition~\ref{prop:properties_Coi}(\ref{it:stagn}) it is sufficient
to compute $\Coi{0}{q-r}$.
\item {\bf Step 2.} Compute the set of solutions $\cv \in \Fq^{n-1}$
of the problem
  \begin{equation}\label{eq:compNx}
    \left\{
      \begin{array}{l}
        \cv \star \Coi{0}{q+1} \subseteq \pu{\CC}{0} \\ \\
        \forall i \geq 1,\ c_i \neq 0,\quad {\rm (i.e.}\ \cv \ {\rm has\ full\ weight)} \\ \\
        c_1  = 1.
      \end{array}
    \right.
  \end{equation}
  From Proposition \ref{prop:doubleinj}, 
  $\xvzero^{-(q+1)}$ is one of the solutions of this problem.
  Indeed, it clearly satisfies the first equation, has full weight ($0$
  has been removed) and its first entry is $1$ since we assumed that $x_1 = 1$.
  One proves in Appendix~\ref{ss:furtherStep2},
  that the space of words $\cv\in \Fq^n$ such that
  $\cv \star \Coi{0}{q+1} \subseteq \pu{\CC}{0}$
  has in general dimension $4$ over $\Fq$. Moreover, with a high probability,
  this space has only $2$ elements with full weight, namely,
  the vector $\xvzero^{-(q+1)}$ (which has clearly full weight) and the
  all--one vector $\onev$.
\end{itemize}
After these two steps, we know
$\xv^{q+1}$ which is unsufficient to deduce directly
$\xv$. However, we can re--apply Steps 1 and 2
replacing position $0$ by position $1$. By this manner, 
we compute $\Coi{1}{q+1}$ and then solve a problem
of the same form as (\ref{eq:compNx}) which yields
$(\xv - \onev)^{q+1}$.
\begin{itemize}
\item {\bf Step 3.} Apply Lemma~\ref{lem:MiniPol}
to get
the minimal polynomial of every entry $x_i$ of the support
$\xv$. Now, the support is known up to Galois action.

\item {\bf Step 4.} One chooses an arbitrary support $\xv'$ such
that for all $i$, 
$x_i$ and $x_i'$ have the same minimal polynomial. That is, for all $i$
either $x_i' = x_i$ or $x_i' = x_i^q$.
Since $\CC = \uv \star \Alt{r(q+1)}{\xv}{\onev}$ (Theorem~\ref{thm:properties_superGoppa}), for some $\uv \in
(\Fq^{\times})^n$, there exist a diagonal matrix 
$\mat{D}$ and a permutation matrix $\mat{P}$ such that
$$\CC = \Alt{r(q+1)}{\xv}{\onev} \mat{D} \mat{P}.$$
The permutation $\mat{P}$ is the permutation that sends 
$\xv$ onto $\xv'$. Since it arises from Galois action,
it is a product of transpositions with disjoint supports
and the supports are known. Therefore, the matrix
$\mat{D} \mat{P}$ is sparse and we know precisely
the positions of the possible nonzero entries.
The number of these unknown entries is $\approx 2q^2$
and  the linear problem
$$
\CC = \Alt{r(q+1)}{\xv}{\onev} \mat{M}
$$
whose unknowns the possible nonzero entries of $\mat{M}$
has more equations than unknowns and provide easily the
matrix $\mat{D}\mat{P}$. From them we recover $\xv$
and we have
$$\CC = \Alt{r(q+1)}{\xv}{\onev} \mat{D}.$$
This concludes the attack.
\end{itemize}

\paragraph{\bf The general case}
When the support is not full,
the main difficulty is that the resolution of~(\ref{eq:compNx})
provides $q^2 - n$ other full-weight solutions.
Thus we have $q^2-n+1$ candidates for $\xv^{q+1}$
and $q^2-n+1$ for $(\xv - \onev)^{q+1}$. A method
is explained in Appendix~\ref{ss:bypairs} which permits to gather candidates by 
pairs $(\av, \bv)$ where $\av$ is a candidate for $\xv^{q+1}$
and $\bv$ a candidate for $(\xv - \onev)^{q+1}$.
The good pair $(\xv^{q+1}, (\xv - \onev)^{q+1})$
lies among these pairs. 

\medskip

Therefore, we have to iterate Steps 3 and 4 for every
pair of candidates, which amounts to $q^2 -n$ iterations
in the worst case. 
Step 4 is also a bit more complicated in the worst case
but this has no influence on the complexity.

\begin{rem}
  Notice that the computation of the Goppa polynomial is useless to attack the scheme.
  Actually, if the secret key is a wild Goppa code $\Goppa{\xv}{\gamma^{q-1}}$,
  then 
  it is sufficient to find a pair of vectors $(\xv, \yv)$
  such that: 
  $$
  \Goppa{\xv}{\gamma^{q-1}} = \Alt{rq}{\xv}{\yv}.
  $$
  Indeed, such a representation as an alternant code allows to correct up to
  $\lfloor \frac{qr}{2} \rfloor$ errors (see Fact \ref{fa:decoding} and
  Theorem~\ref{thm:SKHN76}).
\end{rem}

\section{Limits and extensions of the attack}\label{sec:ifcompjust}
According to Lemma~\ref{lem:byPairs} and other discussions in
Appendix~\ref{sec:in_depth}, the success of the Steps 2 to 4 requires:
$$
n > 2q + 4.
$$
On the other hand, for the first step to work, the distinguisher intervals
for the computation of $\Coi{0}{2}$ up to $\Coi{0}{q-r}$ should be non 
empty, i.e. from \S\ref{sss:symmetric},
$$
{r(r+2)+2 \choose 2} > 2r(q+1)+ (q-r) -2.
$$
However, having a non empty distinguisher interval for the code
seems actually sufficient to proceed to an attack, even if there is no
distinguisher interval for the computation of $\Coi{0}{q-r}$.
Indeed, it is also possible to compute the ``negative part'' of the filtration,
i.e. the codes $\Coi{0}{-\ell}$'s for $\ell>0$. The code $\Coi{0}{-\ell}$
can be computed as
$$
\Big \{ \cv\in \Fq^n ~|~ \cv \star \Coi{0}{0} \subseteq \Coi{0}{\lfloor -\ell/2 \rfloor} \star \Coi{0}{\lceil -\ell /2 \rceil} \Big \}
$$
or, if the Schur products fill in the ambient space, several suitable
shortenings of $\Coi{0}{-\ell}$ can be computed by this manner and then
summed up to provide the whole $\Coi{0}{-\ell}$. By this manner, as soon as
we are able to compute two codes $\Coi{0}{a}$ and $\Coi{0}{b}$ such that 
$b-a = q+1$ then a statement of the form of Proposition~\ref{prop:doubleinj}
provides $\xvzero^{q+1}$. This is for instance what we did for the
$[851, 619]$ code over $\F_{32}$ presented in \S\ref{ss:implem}.
For this code it was not possible to compute $\Coi{0}{q+1}$, thus
we computed $\Coi{0}{23}$ and $\Coi{0}{-10}$.

As a conclusion, the attack or a variant by computing some $\Coi{0}{-\ell}$
may work as soon as
$$
n > 2q+4 \qquad {\rm and} \qquad {r(r+2)+2 \choose 2} > 2r(q+1)-2.
$$


\section{Complexity and Implementation}
\label{sec:example}

In what follows, by ``$\O(P(n))$''
for some function $P : \mathbb{N}\rightarrow \mathbb{R}$, we mean ``$\O(P(n))$
operations in $\Fq$''.
We clearly have $n\leq q^2$ and we also assume that $q = \O (\sqrt{n})$.

\subsection{Computation of a code product}\label{ss:costofprod}
Given two codes $\code{A}, \code{B}$ of length $n$
and respective dimensions $a$ and $b$, the computation of 
$\spc{\code{A}}{\code{B}}$ consists first in the computation
of a generator matrix of size $ab\times n$ whose
computation costs $\O (nab)$ operations.
Then the Gaussian elimination costs $\O(nab\min(n, ab))$.
Thus the cost of Gaussian elimination dominates that of
the construction step.
In particular, for a code $\code{A}$ of dimension
$k\geq \sqrt{n}$, the computation of $\code{A}^{\star 2}$
costs $\O(n^2k^2)$.
Thanks to
Proposition~\ref{prop:SolSpace}, one shows that
the dominant part of the
resolution of Problem~\ref{pr:step1},  consists in computing
$\spc{\code{A}}{\code{B}^{\bot}}$ and hence costs
$\O (na(n-b)\min (n, a(n-b)))$

\subsection{Computation of the filtration}
We first evaluate the cost of computing
$\sh{\Coi{0}{s+1}}{\Ind}$ from $\sh{\Coi{0}{s}}{\Ind}$.
The distinguisher interval described in \S\ref{ss:filtr_dist}
suggests that the dimension of $\sh{\Coi{0}{s}}{\Ind}$ used
to compute the filtration is in $\O(\sqrt{n})$.
From \S\ref{ss:costofprod}, 
the computation of the square
of $\sh{\Coi{0}{s}}{\Ind}$ costs
$\O (n^3)$ operations in $\Fq$.
Then,
the resolution of Problem~\ref{prop:SolSpace} in the 
context of Theorem~\ref{thm:fundamental}, 
costs $\O(n a (n-b)\min (n, a(n-b)))$
where $a= \dim \sh{\Coi{0}{s}}{\Ind} = \O(\sqrt{n})$
and $b = \dim \Alt{2r(q+1)+t-n+|\Ind|}{\xvzeroI}{\yv}$.
We have $n-b = \O(n)$, hence we get a cost of $\O(n^3\sqrt{n})$.

The heuristic below Proposition~\ref{prop:SolSpace}
suggests that we need to perform this computation
for $\O (\sqrt{n})$ choices of $\Ind$.
Since addition of codes is negligible compared
to $\O(n^3 \sqrt{n})$ this leads to 
a total cost of $\O (n^4)$ for the computation
of $\Coi{0}{s+1}$. This computation should be done
$q+1$ times (actually $q-r$ times from
Proposition~\ref{prop:properties_Coi}(\ref{it:stagn})
and, we assumed that $q = \O (\sqrt{n})$.
Thus, the computation of $\Coi{0}{q+1}$ costs $\O (n^4\sqrt{n})$.

\begin{rem}
  Actually, it is not necessary to compute all the terms of the filtration
  from $\Coi{0}{1}$ to $\Coi{0}{q-r}$, only $\log (q-r)$ of them are sufficient
  to get $\Coi{0}{q-r}$ since $\Coi{0}{q-r}$ is computed from
  $\Coi{0}{(q-r)/2}$. This reduces the complexity of this part to 
  $\O (n^4 \log (n))$.
\end{rem}

\subsection{Other computations}
The resolution of Problem~(\ref{eq:compNx}) in Step 2, 
costs $\O (n^4)$ (see Appendix~\ref{sec:in_depth} for further details).
The solution space
$\code{D}$ of (\ref{eq:compNx})
has $\Fq$--dimension $4$ (see Proposition~\ref{prop:dimD}
in Appendix~\ref{sec:in_depth}).
Moreover, since we are looking
for vectors of maximum weight in these solution
spaces, it is sufficient to proceed to the search
in the corresponding $3$--dimensional projective spaces.
Thus,
the exhaustive search in these solution spaces
costs $\O (q^3) = \O(n\sqrt{n})$ which is negligible.
The computation of the pairs (see \S \ref{ss:bypairs}) and that of minimal
polynomials is also negligible. Finally, the resolution
of the linear system in Step 4 costs
$\O (n^4)$ since it is very similar to Problem~\ref{pr:step1}.
Since Final step should be iterated $q^2 - n+1$ times in the worst
case, we see that the part of the attack after the computation
of the filtration costs at worst $\O (n^5)$.
Thus, the global complexity of the attack is in $\O (n^5)$
operations in $\Fq$.

\subsection{Shortcuts}
It is actually possible to reduce the complexity. Indeed, many linear
systems we have to solve have $b$ equations and $a$ unknowns with
$b \gg a$. For such systems it is possible to extract $a+\epsilon$
equations chosen at random and solve this subsystem which has the same
solution set with a high probability. This probabilistic shortcut permits
the computation of the square of a code in $\O (n^3)$ and reduces the cost
of Step 4 to $\O (n^3)$. By this manner we have an overall complexity
of $\O (n^4)$.

\subsection{Implementation}\label{ss:implem}
This attack has been implemented with
$\textsc{Magma}$ \cite{BCP97} and run over
random examples of codes corresponding
to the seven entries \cite[Table 1]{BLP10}
for which $m=2$ and $r>3$.
For all these parameters, our attack succeeded.
We summarize here the average running times
for at least $50$ random keys per $4$--tuple of parameters,
obtained with an Intel\textsuperscript{\textregistered} Xeon 2.27GHz. 

\bigbreak

\begin{center}
\begin{tabular}{@{}*{5}{c}@{}}
\toprule 
$(q,n,k,r)$ & (29,781, 516,5)  & (29, 791, 575, 4) & (29,794,529,5) &
(31, 795, 563, 4) \\ \midrule
Average time & $16$min& $19.5$min & $15.5$min & $31.5$min   \\
 \bottomrule
\end{tabular}
\end{center}

\medbreak

\begin{center}
\begin{tabular}{@{}*{4}{c}@{}}
\toprule 
$(q,n,k,r)$ & (31,813, 581,4)  & (31, 851, 619, 4) & (32,841,601,4)  \\ \midrule
Average time & $31.5$min & $27.2$min & $49.5$min  \\ 
 \bottomrule
\end{tabular}
\end{center}

\bigbreak

\begin{rem}
In the above table the code dimensions are 
not the ones mentioned in \cite{BLP10}.
What happens here is that the formula for the dimension given \cite[p.153,\S1]{BLP10} is wrong for such cases: it understimates
the true dimension for wild Goppa codes over quadratic extensions when the degree $r$ of the irreducible polynomial $\gamma$ is larger than $2$ as shown by 
Theorem~\ref{thm:properties_superGoppa}(\ref{it:rrmoins2}).
\end{rem}

All these parameters are given in \cite{BLP10} with a $128$-bit security that is measured
against information set decoding attack which is described in \cite[p.151, Information set decoding \S1]{BLP10} as the 
``\emph{top threat
against the wild McEliece cryptosystem for $\F_3$, $\F_4$, etc.}''. It should be mentioned that these
parameters   are marked in \cite{BLP10} by the biohazard symbol \Biohazard \ (together with about two dozens other parameters). This corresponds, as explained in \cite{BLP10}, to parameters for which the number of possible
monic Goppa polynomials of the form $\gamma^{q-1}$ is smaller than $2^{128}$. The authors in \cite{BLP10} choose in this case 
a support which is significantly smaller than $q^m$ ($q^2$ here)
in order to avoid attacks that fix a support of size $q^m$ and then enumerate all possible polynomials. Such attacks exploit the fact that two Goppa codes of length $q^m$ with the same polynomial  are \emph{permutation equivalent}.
We recall that the \emph{support-splitting} algorithm \cite{S00}, when applied to permutation equivalent codes, generally finds in polynomial time a permutation that sends one code onto the other. 
The authors of \cite{BLP10} call this requirement on the length the {\em second defense} and 
write \cite[p.152]{BLP10}.

``{\em The strength of the second defense is unclear:
we might be the first to ask whether the support-splitting idea can be generalized to handle many sets 
$\{a_1,\dots,a_n\}$}
\footnote{$\{a_1,\dots,a_n\}$ means here the support of the Goppa code.}
{\em simultaneously, and we would not be surprised if the answer turns out to be yes.}''
 The authors also add in \cite[p.154,\S1]{BLP10} that ``{\em the security
 of these cases}\footnote{meaning here the cases marked with \Biohazard.}
{\em depends on the strength of the second defense discussed in Section 6}''.
We emphasize that our attack has nothing to do with the strength or a potential weakness of the second defense. Moreover, it does not exploit at all the fact that there
are significantly less than $2^{128}$ Goppa polynomials. 
This is obvious from the way our attack works and this can also be verified
by attacking parameters which were not proposed in \cite{BLP10} 
but for which there are more than $2^{128}$ monic wild Goppa polynomials to
check. As an illustration, we are also able to recover the secret key in an
average time of 24 minutes when the public key is a code over $\F_{31}$, of
length $900$ and with a Goppa polynomial of degree $14$.
In such case, the number of possible Goppa polynomials is larger than $2^{134}$
and 
according to Theorem~\ref{thm:properties_superGoppa}, the public key
has parameters $[n=900, k\geq 228, d\geq 449]_{31}$.
Note that
security of such a key with respect to information set decoding \cite{P10} is also high (about $2^{125}$ for such parameters).


\section{Conclusion}
\label{sec:conclusion}

The McEliece scheme based on Goppa codes has withstood all cryptanalytic attempts up to now, even if a related system based on GRS codes
\cite{N86} was successfully attacked in \cite{SS92}. Goppa codes are subfield subcodes of GRS codes and it was advocated that taking
the subfield subcode hides a lot about the structure of the underlying code and also makes these codes more random-like. This is sustained by the 
fact that the distance distribution becomes indeed random \cite{MS86} by this operation whereas GRS codes behave differently from random
codes with respect to this criterion. This attack presented at the conference
EUROCRYPT 2014 was the first example of a cryptanalysis which questions
this belief by providing an algebraic cryptanalysis
which is of polynomial complexity and which applies to many ``reasonable parameters'' of a McEliece scheme when the Goppa code is the $\Fq$-subfield subcode
of a GRS code defined over  $\Fqq$.
Subsequently to our attack,
this uncertainty on the security of code based cryptosystems using wild
Goppa codes  has been strengthened by another
cryptanalysis based on the resolution of a system of multivariate polynomial
equations \cite{FPP14}.

It could be argued that our attack applies to a rather restricted class of Goppa codes, namely wild Goppa codes of extension degree two.
This class of codes also presents certain peculiarities as shown by
Theorem~\ref{thm:properties_superGoppa}
which were helpful for mounting an attack.
However, it should be pointed out that the crucial ingredient which made this attack possible is the fact that such codes
could be distinguished from random codes by square code considerations. A certain filtration of subcodes was indeed exhibited here
and it turns out that  shortened versions of these codes were related together by the star product. This allowed to reconstruct the filtration
and from here the algebraic description of the Goppa code could be recovered. The crucial point here is really the existence of such a 
filtration whose elements  are linked together by the star product. The fact that these codes were linked together by the star product
is really related to the fact that the square 
code of certain shortened codes of the public code were of unusually low dimension which is precisely the fact that yielded the
aforementioned distinguisher. This  raises the issue whether other families of Goppa codes or alternant codes which 
can be distinguished from random codes by such square considerations \cite{FGOPT13} can be attacked by techniques of this kind.
This covers high rate Goppa or alternant codes, but also other Goppa or alternant codes when the degree of extension is equal to $2$.
All of them can be distinguished from random codes by taking square codes of a shortened version of the dual code.


\bibliographystyle{alpha}
\bibliography{codecrypto}

\appendix
\section{Reducing to the case $x_0=0$, $x_1=1$}\label{ss:x0x1}
The fact that we can choose $x_0$ to be equal to $0$ and $x_1$
to be equal to $1$ for the support $\xv$ of $\CC$ 
 follows at once from the following lemma together with
the $2$-transitivity of the affine maps $x \mapsto a x+b$ over $\Fqm$.
This lemma is basically folklore, but since we did not find a reference
giving this
lemma in exactly this form we have also provided a proof for it.
\begin{lemma}\label{lem:2transitive}
Consider a Goppa code $\Goppa{\xv}{\Gamma(x)}$ defined over $\Fq$ and of extension degree $m$.
Let $a,b \in \Fqm$ with $a\neq 0$ and let $\psi(z) \eqdef az +b$.
We have $\Goppa{\xv}{\Gamma(z)} = \Goppa{\psi(\xv)}{\Gamma(\psi^{-1}(z))}.$
\end{lemma}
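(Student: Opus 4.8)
The plan is to unwind the definition of the classical Goppa code as a subfield subcode of a GRS code and track how the affine substitution $z \mapsto \psi(z) = az+b$ transforms the support, the multiplier, and the Goppa polynomial. Recall from Definition~\ref{def:ClassicalGoppa} that
$$
\Goppa{\xv}{\Gamma} = \Alt{\deg \Gamma}{\xv}{\Gamma(\xv)^{-1}} = \GRS{\deg\Gamma}{\xv}{\Gamma(\xv)^{-1}}^{\perp}\cap\fq^n .
$$
So it suffices to show that the two GRS codes $\GRS{\deg\Gamma}{\xv}{\Gamma(\xv)^{-1}}$ and $\GRS{\deg\Gamma}{\psi(\xv)}{\Gamma(\psi^{-1}(\psi(\xv)))^{-1}} = \GRS{\deg\Gamma}{\psi(\xv)}{\Gamma(\xv)^{-1}}$ are equal (note that the new Goppa polynomial evaluated at the new support gives back $\Gamma(\xv)$, since $\psi^{-1}\circ\psi = \mathrm{id}$), after which intersecting with $\fq^n$ and dualizing — operations that do not touch the support — gives the claim. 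Actually it is cleanest to argue directly at the level of the evaluation description of $\Goppa{\xv}{\Gamma}$ given in Lemma~\ref{lem:descr_Goppa_as_eval}.

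First I would fix $\ell = \deg\Gamma = \deg(\Gamma\circ\psi^{-1})$ (the degree is preserved because $\psi$ is a degree-one polynomial with nonzero leading coefficient, hence $\psi^{-1}$ is too) and check the non-degeneracy hypotheses of Definition~\ref{def:ClassicalGoppa}: the entries of $\psi(\xv)$ are pairwise distinct because $\psi$ is injective, and $\Gamma(\psi^{-1}(\psi(x_i))) = \Gamma(x_i) \neq 0$ for all $i$, so $\Goppa{\psi(\xv)}{\Gamma\circ\psi^{-1}}$ is well defined. Next, the key polynomial identity: I would use that $p \mapsto p\circ\psi$ is a linear bijection of $\Fqm[z]_{<\ell}$ onto itself (again because $\psi$ has degree $1$). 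Writing the evaluation description from Lemma~\ref{lem:descr_Goppa_as_eval}, a codeword of $\Goppa{\psi(\xv)}{\Gamma\circ\psi^{-1}}$ has the form
$$
\left(\frac{(\Gamma\circ\psi^{-1})(\psi(x_i))}{\loc{\psi(\xv)}'(\psi(x_i))}\, f(\psi(x_i))\right)_{0\le i<n}
$$
with $f \in \Fqm[z]_{<n-\ell}$ (intersected with $\fq^n$). The numerator is $\Gamma(x_i)$, and $f\circ\psi$ ranges over all of $\Fqm[z]_{<n-\ell}$ as $f$ does, so the only thing to verify is the relation between $\loc{\psi(\xv)}'$ evaluated at $\psi(\xv)$ and $\loc{\xv}'$ evaluated at $\xv$.

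The main obstacle — really the only computational point — is this derivative-of-the-locator bookkeeping. I would compute $\loc{\psi(\xv)}(z) = \prod_i (z - \psi(x_i)) = \prod_i (z - ax_i - b)$; substituting $z = a w + b$ gives $\loc{\psi(\xv)}(aw+b) = a^n \prod_i (w - x_i) = a^n \loc{\xv}(w)$. Differentiating in $w$ via the chain rule yields $a\cdot \loc{\psi(\xv)}'(aw+b) = a^n \loc{\xv}'(w)$, i.e. $\loc{\psi(\xv)}'(\psi(x_i)) = a^{n-1}\loc{\xv}'(x_i)$. Hence the $i$-th entry above equals $a^{-(n-1)}\cdot \frac{\Gamma(x_i)}{\loc{\xv}'(x_i)}\,(f\circ\psi)(x_i)$. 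The global nonzero scalar $a^{-(n-1)} \in \Fqm^{\times}$ can be absorbed into the polynomial $f\circ\psi \in \Fqm[z]_{<n-\ell}$ (multiply it by $a^{-(n-1)}$, still a polynomial of the same degree bound, still ranging over the whole space). Therefore the set of vectors obtained is exactly $\{(\Gamma(x_i)\loc{\xv}'(x_i)^{-1} g(x_i))_i : g \in \Fqm[z]_{<n-\ell}\}$, and intersecting with $\fq^n$ gives precisely $\Goppa{\xv}{\Gamma}$ by Lemma~\ref{lem:descr_Goppa_as_eval}. This proves both inclusions simultaneously (the argument is reversible, running $\psi^{-1}$ in place of $\psi$), so $\Goppa{\xv}{\Gamma} = \Goppa{\psi(\xv)}{\Gamma\circ\psi^{-1}}$, as desired. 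Finally, to deduce the statement used in the body — that one may assume $x_0 = 0$, $x_1 = 1$ — apply this with the unique affine $\psi$ sending $x_0 \mapsto 0$ and $x_1 \mapsto 1$, which exists and is invertible since $x_0 \neq x_1$; this uses the $2$-transitivity of the affine group on $\Fqm$.
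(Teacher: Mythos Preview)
Your proof is correct, but it takes a different route from the paper's. The paper works on the \emph{parity-check} side: it proves the more general statement $\Alt{r}{\xv}{\yv} = \Alt{r}{\psi(\xv)}{\yv}$ for an arbitrary multiplier $\yv$, simply by observing that $P(z)\mapsto P(\psi(z))$ is a bijection of $\Fqm[z]_{<r}$, so $\sum_i c_i y_i P(\psi(x_i))=0$ for all $P$ of degree $<r$ iff $\sum_i c_i y_i Q(x_i)=0$ for all such $Q$. Specializing $\yv=\Gamma(\xv)^{-1}=(\Gamma\circ\psi^{-1})(\psi(\xv))^{-1}$ then gives the lemma immediately, with no locator polynomials in sight. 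You instead use the \emph{evaluation} description of Lemma~\ref{lem:descr_Goppa_as_eval}, which forces you to track how $\loc{\psi(\xv)}'$ relates to $\loc{\xv}'$; your chain-rule computation $\loc{\psi(\xv)}'(\psi(x_i))=a^{n-1}\loc{\xv}'(x_i)$ is correct, and absorbing the global scalar $a^{-(n-1)}$ into the polynomial is legitimate since it lies in $\Fqm^\times$. Both arguments are sound; the paper's is shorter and yields the alternant-code invariance as a free byproduct, while yours is a direct check in the evaluation model and perhaps makes the mechanism more explicit.
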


\begin{proof}{}
We first observe that for any alternant code of some length $n$, degree $r$, extension degree $m$, defined over 
$\fq$ we have

\begin{equation}\label{eq:alternant}
\Alt{r}{\xv}{\yv}  =  \Alt{r}{a\xv+b}{\yv}. 
\end{equation}
This can be verified as follows.
Let $\cv=(c_i)_{0 \leq i \leq n-1}$ be a codeword in
 $\Alt{r}{\xv}{\yv}$. We are going to prove that it also belongs to 
 $\Alt{r}{a\xv+b}{\yv}$. It suffices to prove that for any polynomial 
 $P$ in $\F_{q^m}[X]$ of degree at most $r-1$ we have 
 $\sum_{i=0}^{n-1} c_i y_i P \left(ax_i+b \right)=0$.
 In order to prove this, let us first observe that 
 we may write 
 $P(ax+b)$ as a polynomial $Q(x)$ of degree at most $r-1$ which depends
on $a$ and $b$. This implies that 
$$
 \sum_{i=0}^{n-1} c_i y_i P \left(ax_i+b \right)=\sum_{i=0}^{n-1} c_i y_i Q(x_i)=0.
$$
where the last equality follows from the definition of $\Alt{r}{\xv}{\yv}$.
In other words, we have just proved that
$\cv \in \Alt{r}{a\xv+b}{\yv}$. This proves that
\begin{equation}
\label{eq:inclusion}
\Alt{r}{\xv}{\yv}  \subseteq  \Alt{r}{a\xv+b}{\yv}. 
\end{equation}
The inclusion in the other direction by observing that by  using 
\eqref{eq:inclusion} on $\Alt{r}{a\xv+b}{\yv}$ with the affine map $\Psi^{-1}$ we obtain
$$
\Alt{r}{a\xv+b}{\yv}  \subseteq \Alt{r}{\Psi^{-1}(a\xv+b)}{\yv}=\Alt{r}{\xv}{\yv}
$$
and this proves \eqref{eq:alternant}.
This is used to finish the proof of Lemma \ref{lem:2transitive} by observing that
\begin{eqnarray*}
\Goppa{\xv}{\Gamma(z)} & = & \Alt{r}{\xv}{\yv}\\
& = & \Alt{r}{a \xv + b}{\yv}\\
& = & \Goppa{a\xv+b}{\Gamma(\psi^{-1}(z))}
\end{eqnarray*}
where $r=\deg \Gamma$ and $\yv= {\Gamma(\xv)}^{-1} \cdot$
\end{proof}


\section{Further details on the behaviour
of the filtration $\big(\Coi{0}{s}\big)_{s\in \Z}$}\label{sec:stagn}

The aim of this appendix is to give a complete proof of
Proposition~\ref{prop:properties_Coi}. For convenience, let us remind its statement.

\medbreak

\noindent {\bf Proposition~\ref{prop:properties_Coi}}
{\em Under Assumption \ref{ass:fundamental} (i), we have
\begin{enumerate}[(i)]
  \item $\Coi{0}{1} = \sh{\CC}{0}$;
  \item $\Coi{0}{0} = \pu{\CC}{0}$;
  \item $\forall s\in \Z,\ \dim \Coi{0}{s}
    -\dim \Coi{0}{s+1} \leq 2$;
  \item $\Coi{0}{q-r} = \Coi{0}{q+1}$;
  \item $\forall s\in \Z,\ \Coi{0}{s} = 
    \Alt{r(q+1)+s-1}{\xvzero}{\yvzero}$ for 
    $$
    \yvzero \eqdef \gamma^{-(q+1)}(\xvzero) \star \xvzero^{-(s-1)}.
    $$
\end{enumerate}
where we recall that $\xv_0$ denotes the vector
$\xv$ punctured at position $0$ and that $r$ denotes the
degree of $\gamma$.}

\medbreak

The following lemma is useful in the proofs to follow.

\begin{lem}\label{lem:locator_short}
For all $i \in \{1, \ldots , n-1\}$, we have $\loc{\xv}'(x_i) = x_i \loc{\xvzero}'(x_i)$. Equivalently,
  $$
  \loc{\xv}'(\xvzero) = \spc{\xvzero}{ \loc{\xvzero}'(\xvzero)}.
  $$
\end{lem}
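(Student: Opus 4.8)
The idea is to factor the locator polynomial using the normalization $x_0 = 0$ coming from Assumption~\ref{ass:fundamental}(ii), differentiate, and evaluate at a root. First I would write $\loc{\xv}(z) = \prod_{j=0}^{n-1}(z-x_j)$ and isolate the factor corresponding to index $0$: since $x_0 = 0$, we have $z - x_0 = z$, so
$$
\loc{\xv}(z) = z\prod_{j=1}^{n-1}(z-x_j) = z\,\loc{\xvzero}(z),
$$
where $\loc{\xvzero}(z) = \prod_{j=1}^{n-1}(z-x_j)$ is the locator polynomial of the punctured support $\xvzero$.

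Next I would apply the product rule for differentiation:
$$
\loc{\xv}'(z) = \loc{\xvzero}(z) + z\,\loc{\xvzero}'(z).
$$
Then I would evaluate at $z = x_i$ for any $i \in \{1,\dots,n-1\}$. Since $x_i$ is by definition one of the roots of $\loc{\xvzero}$, the first term vanishes, $\loc{\xvzero}(x_i) = 0$, and we are left with $\loc{\xv}'(x_i) = x_i\,\loc{\xvzero}'(x_i)$, which is exactly the asserted pointwise identity. The vector form $\loc{\xv}'(\xvzero) = \spc{\xvzero}{\loc{\xvzero}'(\xvzero)}$ is just the componentwise restatement over all coordinates $i \in \{1,\dots,n-1\}$.

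There is essentially no obstacle here: the only thing to be slightly careful about is that the statement implicitly uses $x_0 = 0$, which is legitimate under Assumption~\ref{ass:fundamental}(ii), and that the entries of $\xv$ are pairwise distinct so that the $x_i$ with $i \geq 1$ really are simple roots of $\loc{\xvzero}$ (though simplicity is not even needed, only that they are roots). The computation is three short lines and requires no further machinery.
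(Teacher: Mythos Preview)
Your proof is correct and follows essentially the same approach as the paper: factor $\loc{\xv}(z) = z\,\loc{\xvzero}(z)$ using $x_0 = 0$, differentiate via the product rule, and evaluate at $x_i$ for $i \geq 1$ so that the $\loc{\xvzero}(x_i)$ term vanishes. The only difference is cosmetic---you add a sentence of commentary on the assumptions and on the vector reformulation.
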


\begin{proof}
Recall that  $x_0 = 0$ and therefore we have:
  $$
  \loc{\xv}(z) = \prod_{i=0}^{n-1} (z- x_i) =
  z \prod_{i=1}^{n-1} (z-x_i) =
  z \loc{\xvzero}(z).
  $$
  Therefore, 
  $
  \loc{\xv}'(z) = z\loc{\xvzero}'(z) + \loc{\xvzero}(z)
  $
  and since $\loc{\xvzero}(x_i) = 0$ for all
  $i \in \{1, \ldots , n-1\}$, we get the result.
\end{proof}

\subsection{Proof of  (\ref{it:desc_as_alt}) and further results about the structure of the $\Coi{0}{s}$'s }

We will start by proving \eqref{it:desc_as_alt}.
Let $s\in \Z$. By definition
$$
\Coi{0}{s} \eqdef  
\cset{\frac{\gamma^{q+1}(x_i)}{\loc{\xv}'(x_i)} x_i^s f(x_i)}
{i \in \{1,\dots,n-1\}}{f \in \Fqq[z]_{<n- r(q+1)-s}} 
\cap \Fq^{n-1}.
$$
Then, from Lemma \ref{lem:locator_short}, we have
\begin{align*}
\Coi{0}{s}  & = 
\cset{\frac{\gamma^{q+1}(x_i)}{x_i\loc{\xvzero}'(x_i)} x_i^s f(x_i)}
{i \in \{1,\dots,n-1\}}{f \in \Fqq[z]_{<n- r(q+1)-s}} 
\cap \Fq^{n-1}.\\
            & =
\cset{\frac{\gamma^{q+1}(x_i)}{\loc{\xvzero}'(x_i)} x_i^{s-1} f(x_i)}
{i \in \{1,\dots,n-1\}}{f \in \Fqq[z]_{<(n-1)- r(q+1)-(s-1)}} 
\cap \Fq^{n-1}.
\end{align*}
The very definition of alternant codes (Definition~\ref{def:subfield_subcode})
yields
$$\Coi{0}{s} = \Alt{r(q+1)+s-1}{\xvzero}{\yvzero},\quad
{\rm where}\quad \yvzero = \gamma^{-(q+1)}(\xvzero) \star \xvzero^{-(s-1)}.
$$

It should be noted that the $\Coi{0}{s}$'s  can also be viewed  as Goppa codes twisted by multiplying the positions
by some fixed constants as explained by the following proposition which sheds some further light on the structure of the 
codes $\Coi{0}{s}$:
\begin{proposition}\label{prop:equiv_BCH}
  Let 
  $
  \uv \eqdef \gamma^{q+1}(\xvzero) \star \xvzero^{-r(q+1)}
  $
  where $r$ denotes the degree of the polynomial
  $\gamma$ such that $\CC = \Goppa{\xv}{\gamma^{q+1}}$.
  Then $\uv \in \Fq^{n-1}$ and for $s \geq -r(q+1)$:
  $$
  \Coi{0}{s} = \uv \star \Goppa{\xvzero}{z^{r(q+1)+s-1}}.
  $$
\end{proposition}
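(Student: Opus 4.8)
The plan is to unwind both sides using the evaluation descriptions of $\Coi{0}{s}$ (Definition~\ref{def:defCoi}, as already simplified in the proof of Proposition~\ref{prop:properties_Coi}(\ref{it:desc_as_alt})) and of Goppa codes (Lemma~\ref{lem:descr_Goppa_as_eval}), and then to check that the resulting sets of evaluation vectors coincide after a bookkeeping of the exponents of $x_i$.

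First I would record that $\uv \in (\Fq^{\times})^{n-1}$. Since $x_0 = 0$ and the entries of $\xv$ are pairwise distinct, $x_i \neq 0$ for $i \geq 1$; as $m=2$, the map $t \mapsto t^{q+1}$ is the norm $\nr_{\Fqq/\Fq}$, so $x_i^{q+1} \in \Fq^{\times}$ and $\gamma(x_i)^{q+1} \in \Fq^{\times}$ (the latter because $\gamma(x_i) \neq 0$ by the very definition of $\Goppa{\xv}{\gamma^{q+1}}$). Hence each entry $u_i = \gamma^{q+1}(x_i)\, x_i^{-r(q+1)} = \gamma^{q+1}(x_i)\,(x_i^{q+1})^{-r}$ lies in $\Fq^{\times}$. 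Because $\uv$ has all its entries invertible in $\Fq$, for any $\Fqq$-linear subspace $\code{B} \subseteq \Fqq^{n-1}$ one has the elementary identity $\uv \star (\code{B} \cap \Fq^{n-1}) = (\uv \star \code{B}) \cap \Fq^{n-1}$ (multiplication by $\uv^{-1} \in (\Fq^{\times})^{n-1}$ gives the reverse inclusion), and I would use this to move the subfield-subcode operation past the factor $\uv\star$.

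Next, starting from Definition~\ref{def:defCoi} with $a=0$ and $x_0=0$, and applying Lemma~\ref{lem:locator_short} (namely $\loc{\xv}'(x_i) = x_i\,\loc{\xvzero}'(x_i)$ for $i\geq 1$), I obtain
$$
\Coi{0}{s} = \cset{\frac{\gamma^{q+1}(x_i)}{\loc{\xvzero}'(x_i)}\, x_i^{s-1} f(x_i)}{i \in \{1,\dots,n-1\}}{f \in \Fqq[z]_{<n-r(q+1)-s}} \cap \Fq^{n-1},
$$
which is precisely the intermediate formula already derived in the proof of Proposition~\ref{prop:properties_Coi}(\ref{it:desc_as_alt}). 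Factoring $\gamma^{q+1}(x_i)\, x_i^{s-1} = u_i\, x_i^{r(q+1)+s-1}$ inside the bracket exhibits the right-hand set as $\uv \star \code{B}$, where $\code{B}$ is the $\Fqq$-span of the vectors $\bigl(x_i^{r(q+1)+s-1}\,\loc{\xvzero}'(x_i)^{-1}\, f(x_i)\bigr)_{1\le i<n}$ as $f$ ranges over $\Fqq[z]_{<n-r(q+1)-s}$. By Lemma~\ref{lem:descr_Goppa_as_eval} applied to the support $\xvzero$ (of length $n-1$) and the polynomial $z^{r(q+1)+s-1}$ — a genuine Goppa polynomial, being a nonzero polynomial since $r(q+1)+s-1 \ge 0$ and vanishing at no $x_i$, $i\ge 1$, since $x_i\ne 0$; note $\deg z^{r(q+1)+s-1} = r(q+1)+s-1$, so the degree bound on $f$ reads $\deg f < (n-1)-(r(q+1)+s-1) = n-r(q+1)-s$, which matches — one gets $\code{B} \cap \Fq^{n-1} = \Goppa{\xvzero}{z^{r(q+1)+s-1}}$. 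Combining this with the displayed identity for $\Coi{0}{s}$ and the interchange lemma for $\uv\star$ gives $\Coi{0}{s} = (\uv\star\code{B})\cap\Fq^{n-1} = \uv\star(\code{B}\cap\Fq^{n-1}) = \uv \star \Goppa{\xvzero}{z^{r(q+1)+s-1}}$, as claimed.

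I do not expect a serious obstacle: the argument is essentially a substitution $x_i^{-r(q+1)}\cdot x_i^{r(q+1)+s-1} = x_i^{s-1}$ together with the commutation of the subfield-subcode operation with multiplication by a fixed $\Fq$-invertible vector. The only points deserving genuine care are the verification that $\uv \in (\Fq^{\times})^{n-1}$ (where the hypotheses $m=2$ and $x_0=0$ are used) and checking that the degree bounds on the evaluation polynomials on the two sides agree after this substitution.
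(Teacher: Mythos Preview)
Your proof is correct and follows essentially the same route as the paper's: both use the norm interpretation of $(q+1)$-th powers to get $\uv \in (\Fq^{\times})^{n-1}$, then commute multiplication by $\uv$ past the subfield-subcode operation in the evaluation description of $\Coi{0}{s}$ (already rewritten via Lemma~\ref{lem:locator_short}), and finally identify the resulting inner code with $\Goppa{\xvzero}{z^{r(q+1)+s-1}}$ through Lemma~\ref{lem:descr_Goppa_as_eval}. You are simply a bit more explicit than the paper about the interchange identity and the matching of degree bounds.
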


\begin{proof}
  It has been discussed in \S~\ref{ss:attack:key_tools},
  that $(q+1)$--th powers in $\Fqq$ are norms and hence are elements
  of $\Fq$. Therefore, $\uv \in \Fq^{n-1}$.
  Now, recall the definition of $\Coi{0}{s}$ as
  $$
    \Coi{0}{s} \eqdef
      \cset{\frac{\gamma^{q+1}(x_i)}{\loc{\xvzero}'(x_i)} x_i^{s-1} f(x_i)}{
      i \in \{1, \ldots, n-1\}}{f \in \Fqq[z]_{<n-r(q+1)-s}}
      \cap \Fq^{n-1}.
  $$
  Since $\uv = \gamma^{q+1}(\xvzero) \star \xvzero^{-r(q+1)}$
  is a vector with entries in $\Fq$,
  it can get in the subfield subcode and
  $$
  \uv^{-1} \star \Coi{0}{s}
  = \cset{\frac{x_i^{r(q+1)}}{\loc{\xvzero}'(x_i)}x_i^{s-1}f(x_i)}{
  i \in \{1, \ldots, n-1\}}{f\in \Fqq[z]_{<n-1-r(q+1)-(s-1)}} \cap \Fq^{n-1}.
  $$
  Finally, thanks to the description of Goppa codes as evaluation codes
  in Lemma~\ref{lem:descr_Goppa_as_eval}, the right hand term of the
  above equality is $\Goppa{\xvzero}{z^{r(q+1)+s-1}}$,
  which concludes the proof.
\end{proof}

\subsection{Proof of (\ref{it:short_Coi})}
From (\ref{it:desc_as_alt}), applied to $s=1$, we have
$$
\Coi{0}{1} = \Alt{r(q+1)}{\xvzero}{\gamma^{-(q+1)}(\xvzero)}.
$$
Thus, the very definition of Goppa codes (Definition~\ref{def:ClassicalGoppa})
entails
$$
\Coi{0}{1} = \Goppa{\xvzero}{\gamma^{q+1}}.
$$
Therefore, Corollary~\ref{cor:shortened_Goppa} on 
shortened Goppa codes asserts that
$
\Coi{0}{1} = \sh{\CC}{0}.
$

\subsection{Proof of (\ref{it:codim_Coi})}

Let us bring in the family of GRS codes
${(\code{F}_s)}_{s\in \Z}$ defined as
\begin{equation}\label{eq:defF}
\code{F}_s
\eqdef \cset{\frac{\gamma^{q+1}(x_i)}{\loc{\xvzero}'(x_i)} x_i^{s-1} f(x_i)}
{i \in \{1,\dots,n-1\}}{f \in \Fqq[z]_{<n- r(q+1)-s}} \cdot
\end{equation}
We have
$$
\forall s\in \Z,\ \Coi{0}{s} = \code{F}_s \cap \Fq^{n-1}.
$$
Moreover, it is readily seen that for all $s$, $\code{F}_{s+1} \subseteq \code{F}_s$
and $\dim {\code{F}_s} - \dim \code{F}_{s+1} \leq 1$, with equality
if $\code{F}_s$ is nonzero.
Then, the proof of (\ref{it:codim_Coi}) is a direct consequence of
the following lemma.

\begin{lem}
  Let $\code{A}, \code{A}' \subseteq \Fqm^n$ be two codes such that
  $\code{A} \subseteq \code{A}'$.
  Then
  $$
  \dim_{\Fq} (\code{A}'\cap \Fq^n) - \dim_{\Fq} (\code{A} \cap \Fq^n) \leq 
  m\left(\dim_{\Fqm}(\code{A}') - \dim_{\Fqm} (\code{A})\right).
  $$
\end{lem}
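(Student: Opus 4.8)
The plan is to pass to the quotient $\code{A}'/\code{A}$ and to track how the two subfield subcodes sit inside it. First I would set
$$
d \eqdef \dim_{\Fqm}(\code{A}') - \dim_{\Fqm}(\code{A}),
$$
so that $\code{A}'/\code{A}$ is a well-defined $\Fqm$-vector space (here we use that $\code{A}\subseteq \code{A}'$ are $\Fqm$-linear subspaces of $\Fqm^n$) of $\Fqm$-dimension $d$, hence of $\Fq$-dimension $md$.

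Next I would consider the canonical projection $\pi\colon \code{A}' \to \code{A}'/\code{A}$ and restrict it to the $\Fq$-subspace $\code{A}' \cap \Fq^n$. Since $\code{A}\subseteq \code{A}'$, the kernel of this restriction is $(\code{A}'\cap \Fq^n) \cap \code{A} = \code{A}\cap \Fq^n$. Hence $\pi$ induces an injective $\Fq$-linear map
$$
(\code{A}' \cap \Fq^n)\big/(\code{A}\cap \Fq^n) \hookrightarrow \code{A}'/\code{A}.
$$
Consequently the $\Fq$-dimension of the left-hand side is at most $\dim_{\Fq}(\code{A}'/\code{A}) = md$. Since $\code{A}\cap \Fq^n \subseteq \code{A}'\cap \Fq^n$, this left-hand dimension equals $\dim_{\Fq}(\code{A}'\cap \Fq^n) - \dim_{\Fq}(\code{A}\cap \Fq^n)$, which yields exactly the claimed inequality.

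There is essentially no obstacle here. The only point deserving a moment of care is the identification of the kernel of the restricted projection, i.e. the observation that an element of $\code{A}'$ lying both in $\Fq^n$ and in $\code{A}$ automatically lies in $\code{A}\cap \Fq^n$, which is immediate. Everything else is the standard linear-algebra fact that an $\Fqm$-vector space of dimension $d$ has $\Fq$-dimension $md$, together with the dimension formula for a quotient by a subspace.
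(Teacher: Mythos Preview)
Your argument is correct, and it is genuinely different from the paper's. You work directly: restrict the quotient map $\code{A}'\to\code{A}'/\code{A}$ to $\code{A}'\cap\Fq^n$, observe its kernel is $\code{A}\cap\Fq^n$, and bound by the $\Fq$-dimension $md$ of the quotient. The paper instead dualizes via Delsarte's theorem (Theorem~\ref{th:Delsarte}), reducing the statement to
$$
\dim_{\Fq}\tr(\code{A}^{\bot}) - \dim_{\Fq}\tr({\code{A}'}^{\bot}) \leq m\bigl(\dim_{\Fqm}\code{A}^{\bot} - \dim_{\Fqm}{\code{A}'}^{\bot}\bigr),
$$
then writes $\code{A}^{\bot} = {\code{A}'}^{\bot}\oplus\code{B}$ and invokes the standard bound $\dim_{\Fq}\tr(\code{B})\leq m\dim_{\Fqm}\code{B}$. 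Your route is more elementary and self-contained, requiring no duality, no trace codes, and no external reference; the paper's route reuses machinery already in place and keeps the argument phrased in the trace/subfield-subcode language that pervades the rest of the article. Both are short, but yours would be usable even in a context where Delsarte's theorem has not been stated.
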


\begin{proof}
  Thanks to Delsarte Theorem (Theorem~\ref{th:Delsarte}) it is equivalent
  to prove that
  $$
  \dim_{\Fq} \tr \left(\code{A}^{\bot}\right) - \dim_{\Fq} \tr \left(
    {\code{A}'}^{\bot} \right)
  \leq m
  \left(\dim_{\Fqm} \code{A}^{\bot} - \dim_{\Fqm} {\code{A}'}^{\bot} \right).
  $$
  To prove it, choose any code $\code{B} \subseteq \Fqm^n$ such that
  $\code{A}^{\bot} = {\code{A}'}^{\bot} \oplus \code{B}$.
  Then, we clearly have
  $$
  \dim_{\Fq}
  \tr \left(\code{A}^{\bot}\right) \leq \dim_{\Fq} \tr \left( {\code{A}'}^{\bot}
  \right) + \dim_{\Fq} \tr \left( \code{B} \right).
   $$
  Finally, from \cite[\S~7.7]{MS86}, we get
  \begin{align*}
    \dim_{\Fq} \tr \code{B} &\leq m \dim_{\Fqm} \code{B} \\
                           &\leq m \left( \dim_{\Fqm} \code{A}^{\bot} -
                             \dim_{\Fqm} {\code{A}'}^{\bot}\right),
  \end{align*}
  which yields the result.
\end{proof}

\subsection{Proof of (\ref{it:simplify})}
Statement (\ref{it:simplify}) is less obvious than it looks like and
far less obvious than (\ref{it:short_Coi}).
Indeed, let
$$
\code{F}  \eqdef \cset{\frac{\gamma^{q+1}(x_i) f(x_i)}{\loc{\xv}'(x_i)}}{0\leq i<n}{f \in \Fqq[z]_{<n - r(q+1)}}
\cdot
$$
Using Lemma~\ref{lem:locator_short}, one proves that
$\code{F}_0 = \pu{\code{F}}{0}$
where $\code{F}_0$ is 
defined in (\ref{eq:defF}).
Moreover, we have: 
$$
\CC = \code{F} \cap \Fq^n
\quad \textrm{and}
\quad
\Coi{0}{0} = \code{F}_0 \cap \Fq^{n-1}. 
$$

Therefore, 
$$
\pu{\CC}{0} = \pu{\code{F}\cap \Fq^n}{0}
\quad \textrm{while} \quad
\Coi{0}{0} = \pu{\code{F}}{0} \cap \Fq^{n-1}.$$
Hence, from Proposition~\ref{prop:commutation}, we get
\begin{equation}
  \label{eq:inclusion_a_priori}
  \pu{\CC}{0} \subseteq \Coi{0}{0}
\end{equation}
and there is \textit{a priori} no reason for the converse inclusion to be true.
We will prove this by first observing that
\begin{proposition}
\begin{equation}
  \label{eq:incl_sequence2}
    \dim \Coi{0}{0} - \dim \Coi{0}{1} \geq 1.
\end{equation}
\end{proposition}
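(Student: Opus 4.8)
The plan is to reduce the claimed inequality to a comparison of $\dim \pu{\CC}{0}$ and $\dim \sh{\CC}{0}$, both of which can be computed by elementary coordinate arguments. Since (\ref{it:short_Coi}) has just been established, we have $\Coi{0}{1} = \sh{\CC}{0}$, so it is enough to show $\dim \Coi{0}{0} \geq \dim \sh{\CC}{0} + 1$; and by the \emph{a priori} inclusion (\ref{eq:inclusion_a_priori}) we know $\pu{\CC}{0} \subseteq \Coi{0}{0}$, so in turn it suffices to prove $\dim \pu{\CC}{0} = \dim \sh{\CC}{0} + 1$.

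First I would compute $\dim \sh{\CC}{0}$. Assumption~\ref{ass:fundamental}(\ref{it:one}) says precisely that the coordinate form $\cv \mapsto c_0$ does not vanish identically on $\CC$; being a nonzero linear form on $\CC$, its kernel $\code{K} = \{ \cv \in \CC \mid c_0 = 0\}$ has dimension $\dim \CC - 1$. Every word of $\code{K}$ already has a zero at position $0$, so puncturing at that position is injective on $\code{K}$ and identifies $\code{K}$ with $\sh{\CC}{0}$; hence $\dim \sh{\CC}{0} = \dim \CC - 1$.

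Next I would compute $\dim \pu{\CC}{0}$. The kernel of the puncturing map $\CC \to \Fq^{n-1}$ at position $0$ consists of the codewords of $\CC$ supported on $\{0\}$ alone, i.e.\ of weight at most $1$. But $\CC = \Goppa{\xv}{\gamma^{q+1}}$ is an alternant code of degree $r(q+1)$, so by Proposition~\ref{prop:designed} its minimum distance is at least $r(q+1)+1 > 1$; therefore this kernel is trivial and $\dim \pu{\CC}{0} = \dim \CC$. Combining the two computations gives $\dim \pu{\CC}{0} = \dim \CC = \dim \sh{\CC}{0} + 1$, and then
$$\dim \Coi{0}{0} \;\geq\; \dim \pu{\CC}{0} \;=\; \dim \sh{\CC}{0} + 1 \;=\; \dim \Coi{0}{1} + 1,$$
which is the claimed inequality.

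I do not anticipate a genuine obstacle: the whole argument rests only on the fact that passing to a shortening or a puncturing at a single coordinate changes the dimension by exactly the rank of that coordinate form, together with the trivial remark that a nontrivial alternant code has minimum distance $>1$. The only point meriting a moment's care is the injectivity of puncturing $\CC$ at position $0$, which is exactly where the minimum-distance bound — and not anything about the filtration ${(\Coi{0}{s})}_{s\in\Z}$ — enters.
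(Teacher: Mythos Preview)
Your proof is correct and follows essentially the same route as the paper: both combine $\Coi{0}{1}=\sh{\CC}{0}$, the inclusion $\pu{\CC}{0}\subseteq\Coi{0}{0}$, Assumption~\ref{ass:fundamental}(\ref{it:one}), and the fact that $\CC$ has minimum distance $>1$ to obtain the chain $\Coi{0}{1}=\sh{\CC}{0}\varsubsetneq\pu{\CC}{0}\subseteq\Coi{0}{0}$. You merely spell out the dimension counts ($\dim\sh{\CC}{0}=\dim\CC-1$ and $\dim\pu{\CC}{0}=\dim\CC$) a bit more explicitly than the paper does, but the argument is the same.
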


\begin{proof}
First of all, notice that $\pu{\CC}{0}=\CC$ because $\CC$ is of minimum distance
$>1$. Assumption \ref{ass:fundamental} (i) tells us that 
$\sh{\CC}{0} \neq \CC$ and therefore
$\sh{\CC}{0}$ is strictly included in $\pu{\CC}{0}$.
In summary,
thanks to (\ref{it:short_Coi}) and (\ref{eq:inclusion_a_priori}), we have
\begin{equation}
   \label{eq:incl_sequence}
\underbrace{\Coi{0}{1}}_{= \sh{\CC}{0}} \varsubsetneq \pu{\CC}{0} \subseteq \Coi{0}{0}
\end{equation}
and hence,
\begin{equation}
  \label{eq:incl_sequence2}
    \dim \Coi{0}{0} - \dim \Coi{0}{1} \geq 1.
\end{equation}
\end{proof}

On the other hand, we can bound from above this difference of dimensions.
This follows from

\begin{proposition}\label{cor:codim_leq1}
  We have
  $$
  \dim \Coi{0}{1} - \dim \Coi{0}{-r} \leq 1.
  $$
\end{proposition}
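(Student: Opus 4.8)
The plan is to transport the statement to classical Goppa codes with a \emph{monomial} Goppa polynomial and then exploit the Frobenius action on exponents modulo $q^2-1$. First I would use Proposition~\ref{prop:equiv_BCH}: since $(q+1)$-th powers in $\Fqq$ are norms and hence lie in $\Fq$, the vector $\uv\eqdef\gamma^{q+1}(\xvzero)\star\xvzero^{-r(q+1)}$ is a full-weight vector of $\Fq^{n-1}$, and for $s\ge -r(q+1)$ one has $\Coi{0}{s}=\uv\star\Goppa{\xvzero}{z^{r(q+1)+s-1}}$. Applying this with $s=1$ and $s=-r$ (both admissible) gives $\Coi{0}{1}=\uv\star\Goppa{\xvzero}{z^{r(q+1)}}$ and $\Coi{0}{-r}=\uv\star\Goppa{\xvzero}{z^{rq-1}}$. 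Componentwise product by the full-weight vector $\uv\in\Fq^{n-1}$ is an $\Fq$-linear automorphism of $\Fq^{n-1}$, hence preserves $\Fq$-dimensions, so it suffices to prove
$$
\dim_{\Fq}\Goppa{\xvzero}{z^{rq-1}}-\dim_{\Fq}\Goppa{\xvzero}{z^{r(q+1)}}\le 1.
$$
The difference is $\ge 0$ because $z^{rq-1}\mid z^{r(q+1)}$, so $\Goppa{\xvzero}{z^{r(q+1)}}\subseteq\Goppa{\xvzero}{z^{rq-1}}$; the content is the upper bound, and proving it covers the inequality of the statement.

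Next I would write down explicit parity checks. Using $\Goppa{\xvzero}{z^m}=\Alt{m}{\xvzero}{\xvzero^{-m}}$ (Definition~\ref{def:ClassicalGoppa}) together with the defining equations of an alternant code, one obtains, for every $m$,
$$
\Goppa{\xvzero}{z^m}=\Big\{\cv\in\Fq^{n-1}~\Big|~\psi_{-1}(\cv)=\cdots=\psi_{-m}(\cv)=0\Big\},\qquad \psi_e(\cv)\eqdef\sum_{i=1}^{n-1}c_ix_i^{e},
$$
where every $x_i$ with $i\ge 1$ is nonzero since $x_0=0$. The two facts I would use are that $\psi_e$ depends on $e$ only modulo $q^2-1$ (because $x_i^{q^2-1}=1$), and that for $\cv\in\Fq^{n-1}$ one has $\psi_e(\cv)^q=\psi_{eq}(\cv)$; in particular ``$\psi_e(\cv)=0$'' is equivalent to ``$\psi_{eq}(\cv)=0$''. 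Passing from $\Goppa{\xvzero}{z^{rq-1}}$ to $\Goppa{\xvzero}{z^{r(q+1)}}$ amounts to adjoining the $r+1$ equations $\psi_{-l}=0$ for $l=rq,rq+1,\dots,rq+r$.

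Then comes the bookkeeping. For $l=rq+k$ with $0\le k\le r-1$ I would compute, using $r<q$ so that $r+kq\le r(q+1)\le q^2-1$,
$$
lq=(rq+k)q\equiv r+kq\pmod{q^2-1},
$$
hence $\psi_{-l}(\cv)=\psi_{-(r+kq)}(\cv)^q$; and since $r+kq\le rq+k-1=l-1$ (this is exactly where $k\le r-1$ enters), the equation $\psi_{-(r+kq)}=0$ is already among those imposed, so $\psi_{-l}=0$ adds nothing. This already yields the chain $\Goppa{\xvzero}{z^{rq-1}}=\Goppa{\xvzero}{z^{rq}}=\cdots=\Goppa{\xvzero}{z^{r(q+1)-1}}$ (the first equality being the wild identity of Theorem~\ref{thm:SKHN76} for $\gamma=z$), i.e.\ in fact $\Coi{0}{-r}=\Coi{0}{-r+1}=\cdots=\Coi{0}{0}$. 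There remains the last equation $l=r(q+1)$, for which $lq\equiv r(q+1)\equiv l\pmod{q^2-1}$, so $\psi_{-l}(\cv)^q=\psi_{-l}(\cv)$ and $\psi_{-l}$ is $\Fq$-valued on $\Fq^{n-1}$; adjoining a single $\Fq$-valued linear equation decreases the $\Fq$-dimension by at most $1$. Combining, $\dim\Goppa{\xvzero}{z^{rq-1}}-\dim\Goppa{\xvzero}{z^{r(q+1)}}\le 1$, which gives the proposition.

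The only genuinely delicate point I expect is the congruence bookkeeping in the last step: checking that $r+kq\le l-1$ holds precisely for $k\le r-1$, verifying that the hypothesis $r<q$ rules out any wrap-around modulo $q^2-1$ in the identities $lq\equiv r+kq$, and being careful that ``$\psi_{-l}$ is $\Fq$-valued'' really costs only one scalar equation rather than two. Everything else is elementary linear algebra over $\Fq$ and the translation already made available by Proposition~\ref{prop:equiv_BCH} and Theorem~\ref{thm:SKHN76}.
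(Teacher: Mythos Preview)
Your argument is correct. The opening move---transporting the question via Proposition~\ref{prop:equiv_BCH} to a comparison between $\Goppa{\xvzero}{z^{rq-1}}$ and $\Goppa{\xvzero}{z^{r(q+1)}}$---is exactly what the paper does. The paths then diverge. The paper finishes by invoking the wild identity (Theorem~\ref{thm:SKHN76}, or rather Remark~\ref{rem:wild_Goppa} with $f_1=z$, $a_1=r$) for $\Goppa{\xvzero}{z^{rq-1}}=\Goppa{\xvzero}{z^{rq}}$ and then citing an external result, \cite[Theorem~4]{COT13}, as a black box for the bound $\dim\Goppa{\xvzero}{z^{rq}}-\dim\Goppa{\xvzero}{z^{r(q+1)}}\le 1$. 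You instead write out the parity checks $\psi_e(\cv)=\sum_i c_i x_i^{e}$ explicitly and exploit the Frobenius relation $\psi_e(\cv)^q=\psi_{eq}(\cv)$ together with periodicity modulo $q^2-1$: of the $r+1$ new equations added when passing from exponent $rq-1$ to $r(q+1)$, you show that $r$ of them are redundant and that the last one is $\Fq$-valued, hence costs at most one $\Fq$-dimension. This is more elementary and entirely self-contained; you are in effect reproving the needed fragment of \cite{COT13} on the spot. A pleasant side effect is that your bookkeeping already yields the whole chain $\Coi{0}{-r}=\cdots=\Coi{0}{0}$, which the paper only extracts afterwards by combining the proposition with the strict inclusion~\eqref{eq:incl_sequence2}.
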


\begin{proof}
From Proposition \ref{prop:equiv_BCH}, we have
\begin{align*}
\Coi{0}{1} = \uv \star \Goppa{\xvzero}{z^{r(q+1)}} \quad {\rm and} \quad
\Coi{0}{-r} & = \uv \star \Goppa{\xvzero}{z^{r(q+1) - r-1}} \\
            & = \uv \star \Goppa{\xvzero}{z^{rq-1}}.
\end{align*}
From Theorem~\ref{thm:SKHN76} and Remark~\ref{rem:wild_Goppa},
we have 
$$
\Goppa{\xvzero}{z^{rq- 1}} = \Goppa{\xvzero}{z^{rq}}.
$$
In addition, from \cite[Theorem 4]{COT13}, we have
$$
\dim \Goppa{\xvzero}{z^{rq}} - \dim \Goppa{\xvzero}{z^{r(q+1)}} \leq 1.
$$
This yields the result.
\end{proof}

\paragraph{\bf Conclusion} Putting inclusion sequence (\ref{eq:incl_sequence})
in the filtration of the $\Coi{0}{j}$'s, we get the inclusion sequence
$$
\Coi{0}{1} \varsubsetneq \pu{\CC}{0} \subseteq \Coi{0}{0}
\subseteq \Coi{0}{-1} \subseteq \cdots \subseteq \Coi{0}{-r}
$$
Using Proposition \ref{cor:codim_leq1}, we prove that,
in the above inclusion sequence, every
inclusion is an equality but the left-hand one.
In particular,
$$
\pu{\CC}{0} = \Coi{0}{0},
$$
which concludes the proof of (\ref{it:simplify}).

\medskip

Actually, we got other deep results
namely, $\Coi{0}{0} = \Coi{0}{-r}$ and $\dim \Coi{0}{0} - \dim
\Coi{0}{1} \leq 1$. Using Proposition~\ref{prop:equiv_BCH},
we obtain an interesting result on Goppa codes which
 clarifies  \cite[Theorem 4]{COT13}.
\begin{cor}
  \label{cor:truc_Goppa}
  Let $\ell$ be a positive integer
  and $\xv \in \Fqq^n$ be a support, then:
  \begin{enumerate}[(i)]
     \item\label{it:truc_goppa_eq} $\Goppa{\xv}{z^{\ell q - 1}} =
       \Goppa{\xv}{z^{\ell (q+1)-1}}$;
     \item $\dim \Goppa{\xv}{z^{\ell (q+1)-1}} -
       \dim \Goppa{\xv}{z^{\ell (q+1)}} \leq 1$.
  \end{enumerate}
\end{cor}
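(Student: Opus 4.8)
The plan is to obtain this corollary for free from the two facts established at the very end of the proof of Proposition~\ref{prop:properties_Coi}(\ref{it:simplify}), namely the identity $\Coi{0}{0} = \Coi{0}{-r}$ and the inequality $\dim \Coi{0}{0} - \dim \Coi{0}{1} \leq 1$, by transporting them to the twisted Goppa codes $\Goppa{\xvzero}{z^{d}}$ through Proposition~\ref{prop:equiv_BCH}. Recall that the latter supplies a fixed vector $\uv \in \Fq^{n-1}$ with $\Coi{0}{s} = \uv \star \Goppa{\xvzero}{z^{r(q+1)+s-1}}$ for all $s \geq -r(q+1)$. Since $\gamma$ is irreducible of degree $r \geq 2$ it has no root in $\Fqq$, so $\uv$ is of full weight; hence $\cv \mapsto \uv \star \cv$ is a linear automorphism of $\Fq^{n-1}$, and it preserves dimensions and turns equalities of subcodes into equalities.

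First I would specialise the displayed identity to $s \in \{-r, 0, 1\}$ (all admissible since $-r \geq -r(q+1)$). From $\Coi{0}{0} = \Coi{0}{-r}$ I read off, after cancelling $\uv$,
$$
\Goppa{\xvzero}{z^{r(q+1)-1}} = \Goppa{\xvzero}{z^{r(q+1)-r-1}} = \Goppa{\xvzero}{z^{rq-1}},
$$
which is exactly assertion (i) for the support $\xvzero$ with $\ell = r$; and from $\dim \Coi{0}{0} - \dim \Coi{0}{1} \leq 1$ I read off
$$
\dim \Goppa{\xvzero}{z^{r(q+1)-1}} - \dim \Goppa{\xvzero}{z^{r(q+1)}} \leq 1,
$$
which is assertion (ii) for $\xvzero$ with $\ell = r$.

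Next I would argue that letting the underlying instance vary recovers the full generality claimed. For every integer $\ell \geq 2$ there is an irreducible $\gamma \in \Fqq[z]$ of degree $\ell$, and any support $\xv$ with all entries nonzero — which is automatic whenever $\Goppa{\xv}{z^{\ell}}$ is even defined — occurs as the punctured support $\xvzero$ of the wild Goppa code $\CC = \Goppa{\xv'}{\gamma^{q-1}}$, where $\xv'$ is obtained by prepending the coordinate $0$; taking the ambient length $n$ large enough makes $\CC$ nonzero, so that Assumption~\ref{ass:fundamental}(i) is in force. This yields (i) and (ii) for all $\ell \geq 2$ and all supports. Finally, for $\ell = 1$ the statement degenerates: there $\ell q - 1 = q-1$ and $\ell(q+1)-1 = q$, so (i) is just the wild-Goppa identity $\Goppa{\xv}{z^{q-1}} = \Goppa{\xv}{z^{q}}$ of Theorem~\ref{thm:SKHN76} (Remark~\ref{rem:wild_Goppa}), while (ii) reads $\dim \Goppa{\xv}{z^{q}} - \dim \Goppa{\xv}{z^{q+1}} \leq 1$, which is \cite[Theorem 4]{COT13} — the very fact already invoked in the proof of Proposition~\ref{cor:codim_leq1}.

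The transport through Proposition~\ref{prop:equiv_BCH} is purely formal, so I expect the only delicate point to be the last reduction step: one must make sure that the codes $\Coi{0}{s}$ coming from admissible wild Goppa codes really do realise \emph{every} pair $(\xv,\ell)$, in particular for supports too short to carry a non-degenerate wild Goppa code — where a small separate check (or a direct appeal to \cite[Theorem 4]{COT13}) is needed — and that the mild requirement $\sh{\CC}{0} \neq \CC$, which holds generically, can be ensured for the chosen instance.
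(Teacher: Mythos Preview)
Your proposal is correct and follows exactly the route the paper takes: the paper's own ``proof'' of Corollary~\ref{cor:truc_Goppa} is nothing more than the sentence preceding it --- the identities $\Coi{0}{0}=\Coi{0}{-r}$ and $\dim\Coi{0}{0}-\dim\Coi{0}{1}\le 1$ obtained at the end of the proof of (\ref{it:simplify}), transported through Proposition~\ref{prop:equiv_BCH}. You reproduce this and then go further than the paper does, by explicitly arguing why the statement holds for \emph{every} pair $(\xv,\ell)$ rather than only for the particular $(\xvzero,r)$ attached to the ambient wild Goppa code; the paper leaves that passage to full generality implicit.

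One phrasing to tighten: ``taking the ambient length $n$ large enough makes $\CC$ nonzero'' is not quite right, since once $\xv$ is fixed the constructed support $\xv'=(0,\xv)$ has length $|\xv|+1$ and cannot be enlarged. What you really need is that, for the given $\xv$ and some irreducible $\gamma$ of degree $\ell$, the code $\CC=\Goppa{\xv'}{\gamma^{q-1}}$ is nonzero \emph{and} the prepended position $0$ is not dead (so that Assumption~\ref{ass:fundamental}(i) holds). You flag this correctly in your final paragraph; the honest statement is that for supports long enough this is automatic, and for short supports one falls back directly on \cite[Theorem~4]{COT13} and Theorem~\ref{thm:SKHN76} (which is, after all, how Proposition~\ref{cor:codim_leq1} was proved). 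Your separate treatment of $\ell=1$ is also appropriate, since Theorem~\ref{thm:properties_superGoppa} requires $r>1$.
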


\subsection{Proof of (\ref{it:stagn})}
Thanks to Proposition~\ref{prop:equiv_BCH}, it reduces to
prove that
$$
\Goppa{\xvzero}{z^{(r+1)(q+1)-1}} = \Goppa{\xvzero}{z^{(r+1)q-1}},
$$
which is a direct consequence of
Corollary~\ref{cor:truc_Goppa}(\ref{it:truc_goppa_eq})
in the case $\ell = r+1$.


\section{An in--depth presentation of the attack}
\label{sec:in_depth}

Here we give a complete presentation of the attack in the general case, 
i.e. for a possibly non full support $\xv$. As explained in \S\ref{sec:attack},
the attack divides into four steps:
\begin{itemize}
   \item {\bf Step 1.} Compute the terms of the filtrations ${\big(\Coi{0}{j} \big)}_j$ 
     and ${\big(\Coi{1}{j}\big)}_j$ up to $\Coi{0}{q+1}$ and $\Coi{1}{q+1}$, using
     the methods presented in \S\ref{sec:nested}.
   \item {\bf Step 2.} Compute $\xv^{q+1}$ and $(\xv - \onev)^{q+1}$ thanks to 
     Proposition~\ref{prop:doubleinj}.
   \item {\bf Step 3.} Compute the minimal polynomials of every entry $x_i$ of the support $\xv$ using Lemma~\ref{lem:MiniPol}.
   \item {\bf Step 4.} Compute a matrix $\mat{M}$ solution of the linear problem
     $$
     \CC = \CC' \mat{M}
     $$
     where $\mat{M}$ is a matrix with many prescribed zero entries
     and $\CC' = \Alt{r(q+1)}{\xv}{\onev}$ and obtain from $\mat{M}$
     the whole structure of $\CC'$.
\end{itemize}

Step 1 is explained in depth in \S\ref{sec:nested} and Step 3
is straightforward (it is a direct application of Lemma~\ref{lem:MiniPol}). 
Thus, in this appendix, we give further details on Steps 2 and 4.

\subsection{Further details on Step 2 of the attack}\label{ss:furtherStep2}
As explained in \S\ref{sec:attack}, the computation of $\xv^{q+1}$
or, more precisely, that of $\xvzero^{-(q+1)}$ reduces to 
solving 
Problem~(\ref{eq:compNx}) which we recall here:
\begin{equation}\tag{\ref{eq:compNx}}
    \left\{
      \begin{array}{l}
        \cv \star \Coi{0}{q+1} \subseteq \pu{\CC}{0} \\ \\
        \forall i \geq 1,\ c_i \neq 0,\quad {\rm (i.e.}\ \cv \ {\rm has\ full\ weight)} \\ \\
        c_1 = 1.
      \end{array}
    \right.
\end{equation}

Remind that, from Proposition~\ref{prop:properties_Coi}(\ref{it:simplify}),
we know that $\pu{\CC}{0} = \Coi{0}{0}$. 
Then, according to Proposition~\ref{prop:SolSpace}, the subspace of vectors
$c\in \Fq^{n-1}$ such that $\cv \star \Coi{0}{q+1} \subseteq \pu{\CC}{0}$
is the space
$$
\DC \eqdef {\left( \spc{\Coi{0}{q+1}}{\Coi{0}{0}^{\bot}}\right)}^{\bot}.
$$
We will first investigate the structure of $\DC$ and in particular its
dimension. Then, we will study its set of full weight codewords.
For this sake we will use repeatedly the following elementary lemma.

\begin{lem}\label{lem:stardual}
  Let $\code{A} \subseteq \Fq^n$ be a code and $\uv \subseteq {(\Fq^{\times})}^n$,
  then
  $$
  {(\uv \star \code{A})}^{\bot} = \uv^{-1} \star (\code{A}^{\bot}).
  $$
\end{lem}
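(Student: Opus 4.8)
The plan is to prove the two codes are equal by first establishing one inclusion directly from the definition of the dual, and then upgrading to equality via a dimension argument. The key observation is that componentwise multiplication by a vector $\uv$ with all entries in $\Fq^{\times}$ is an $\Fq$-linear automorphism of $\Fq^n$, with inverse given by multiplication by $\uv^{-1} = (u_0^{-1}, \ldots, u_{n-1}^{-1})$; in particular it preserves dimensions.

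First I would prove the inclusion $\uv^{-1}\star \code{A}^{\bot} \subseteq (\uv \star \code{A})^{\bot}$. Take any $\bv \in \code{A}^{\bot}$ and any $\av \in \code{A}$. Then
$$
\langle \uv \star \av,\ \uv^{-1}\star\bv\rangle
= \sum_{i=0}^{n-1} (u_i a_i)(u_i^{-1} b_i)
= \sum_{i=0}^{n-1} a_i b_i
= \langle \av, \bv\rangle = 0,
$$
using that $u_i u_i^{-1} = 1$ for every $i$ (which is exactly where the hypothesis $\uv \in (\Fq^{\times})^n$ is needed). Since $\uv \star \code{A}$ is spanned by the vectors $\uv \star \av$ with $\av \in \code{A}$, this shows $\uv^{-1}\star\bv$ is orthogonal to all of $\uv\star\code{A}$, i.e. $\uv^{-1}\star\bv \in (\uv\star\code{A})^{\bot}$.

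For the reverse inclusion I would simply compare dimensions. Because multiplication by the invertible vector $\uv^{-1}$ is a linear isomorphism, $\dim(\uv^{-1}\star\code{A}^{\bot}) = \dim \code{A}^{\bot} = n - \dim\code{A}$. Likewise $\dim(\uv\star\code{A}) = \dim\code{A}$, so $\dim(\uv\star\code{A})^{\bot} = n - \dim\code{A}$. The two spaces have the same dimension and one is contained in the other, hence they coincide. (Alternatively one can run the same inner-product computation in the other direction, using that $\uv$ is the inverse of $\uv^{-1}$, to get the reverse inclusion without invoking dimensions; both routes are routine.) There is no real obstacle here: the only point worth flagging is that the nonvanishing of all entries of $\uv$ is essential, both so that $\uv^{-1}$ is defined and so that the identity $u_i u_i^{-1} = 1$ holds at every coordinate.
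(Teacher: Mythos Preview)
Your proof is correct and follows essentially the same approach as the paper: one inclusion is shown via the inner-product computation $\langle \uv\star\av,\uv^{-1}\star\bv\rangle=\langle\av,\bv\rangle$, and equality follows from the dimension argument using that multiplication by $\uv$ is a linear automorphism. The only cosmetic difference is the order of presentation (the paper states the dimension equality first, then the inclusion).
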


\begin{proof}
  Since $\uv$ is invertible, then, clearly, both codes have the same dimension
  and it is sufficient to prove inclusion ``$\supseteq$''.
  Let $\av \in \code{A}$ and $\bv \in \code{A}^{\bot}$, then
  $$
  \langle \uv \star \av, \uv^{-1} \star \bv \rangle =
  \sum_i u_i a_i u_i^{-1} b_i = \sum_i a_i b_i = 
  \langle \av, \bv \rangle = 0.
  $$
  This concludes the proof.
\end{proof}

\subsubsection{The structure of the code $\DC$}
We start with a rather technical statement which is fundamental in what 
follows.

\begin{proposition}\label{prop:structureD}
  We have
  $$
  \xvzero^{-(q+1)} \star \left(\RS{q+2}{\xvzero} \cap \Fq^{n-1} \right)
  \subseteq \code{D}.
  $$
\end{proposition}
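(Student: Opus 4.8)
The plan is to unwind $\DC$, $\Coi{0}{q+1}$ and $\Coi{0}{0}$ into their explicit evaluation-code descriptions and then to perform a one-line degree count; I do not expect any genuine obstacle here, the proposition being essentially a bookkeeping verification once the right descriptions are written down. The conceptual content is that the multiplier $\xvzero^{-(q+1)}$ is exactly tailored to cancel the factor $\xvzero^{q+1}$ built into $\Coi{0}{q+1}$ by Definition~\ref{def:defCoi}, while the polynomials occurring in the product still have degree below the bound $n-r(q+1)$ defining $\Coi{0}{0}$; this is the same mechanism as in Proposition~\ref{prop:doubleinj}, with the constant multipliers there replaced by the $(q+2)$-dimensional space of polynomials of degree at most $q+1$.

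First I would recall that, by Proposition~\ref{prop:SolSpace} applied with $\code{A}=\Coi{0}{q+1}$, $\code{B}=\Coi{0}{0}$ and the full ambient space $\Fq^{n-1}$, the code $\DC=\left(\spc{\Coi{0}{q+1}}{\Coi{0}{0}^{\bot}}\right)^{\bot}$ is precisely the set of $\cv\in\Fq^{n-1}$ such that $\spc{\cv}{\Coi{0}{q+1}}\subseteq\Coi{0}{0}$. Hence it suffices to fix an arbitrary vector $\cv$ in $\xvzero^{-(q+1)}\star\left(\RS{q+2}{\xvzero}\cap\Fq^{n-1}\right)$ and to check that $\cv\in\Fq^{n-1}$ and that $\spc{\cv}{\dv}\in\Coi{0}{0}$ for every $\dv\in\Coi{0}{q+1}$. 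The first point is immediate: since $(q+1)$-st powers in $\Fqq$ are norms and therefore lie in $\Fq$, we have $\xvzero^{-(q+1)}\in\Fq^{n-1}$, and the componentwise product of it with an element of $\Fq^{n-1}$ remains in $\Fq^{n-1}$.

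For the second point, write $\cv=\xvzero^{-(q+1)}\star p(\xvzero)$ with $p\in\Fqq[z]_{<q+2}$ and $p(\xvzero)\in\Fq^{n-1}$, and write a generic element of $\Coi{0}{q+1}$, using Definition~\ref{def:defCoi} with $a=0$, $s=q+1$ and $x_0=0$, as $\dv=\gamma^{q+1}(\xvzero)\star\loc{\xv}'(\xvzero)^{-1}\star\xvzero^{q+1}\star f(\xvzero)$ with $f\in\Fqq[z]_{<n-r(q+1)-(q+1)}$ and $\dv\in\Fq^{n-1}$. The factors $\xvzero^{-(q+1)}$ and $\xvzero^{q+1}$ cancel, so that
$$
\spc{\cv}{\dv}=\gamma^{q+1}(\xvzero)\star\loc{\xv}'(\xvzero)^{-1}\star (pf)(\xvzero).
$$
Here $\deg(pf)\leq(q+1)+\left(n-r(q+1)-(q+1)-1\right)=n-r(q+1)-1$, hence $pf\in\Fqq[z]_{<n-r(q+1)}$; moreover $\spc{\cv}{\dv}$ has all its entries in $\Fq$, being the componentwise product of the two vectors $\cv$ and $\dv$ of $\Fq^{n-1}$. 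Comparing with the description of $\Coi{0}{0}$ furnished by Definition~\ref{def:defCoi} with $s=0$, namely
$$
\Coi{0}{0}=\cset{\frac{\gamma^{q+1}(x_i)}{\loc{\xv}'(x_i)}g(x_i)}{1\leq i<n}{g\in\Fqq[z]_{<n-r(q+1)}}\cap\Fq^{n-1},
$$
we conclude $\spc{\cv}{\dv}\in\Coi{0}{0}$ by taking $g=pf$. As $\dv$ was arbitrary, this gives $\spc{\cv}{\Coi{0}{q+1}}\subseteq\Coi{0}{0}$, whence $\cv\in\DC$ and the claimed inclusion follows. Finally, if $n-r(q+1)-(q+1)\leq 0$ then $\Coi{0}{q+1}=\{0\}$ and the inclusion $\spc{\cv}{\Coi{0}{q+1}}\subseteq\Coi{0}{0}$ holds trivially, so this degenerate case requires no separate treatment.
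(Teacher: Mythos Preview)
Your proof is correct. You proceed ``primally'': you reinterpret $\DC$ via Proposition~\ref{prop:SolSpace} as the set of vectors $\cv\in\Fq^{n-1}$ satisfying $\spc{\cv}{\Coi{0}{q+1}}\subseteq\Coi{0}{0}$, then verify this inclusion directly by a degree count, exactly in the spirit of Proposition~\ref{prop:doubleinj}. The paper instead works ``dually'': it expands $\Coi{0}{0}^{\bot}$ as a trace code via Delsarte's theorem, computes the product $\spc{\Coi{0}{q+1}}{\Coi{0}{0}^{\bot}}$ through the GRS multiplication formula (Proposition~\ref{prop:GRSsquare}), bounds it inside $\xvzero^{q+1}\star\tr\big(\GRS{(n-1)-(q+2)}{\xvzero}{\loc{\xvzero}'(\xvzero)^{-1}}\big)$, and only then dualizes. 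Your route is shorter and more elementary; the paper's route has the advantage that the single inclusion~\eqref{eq:incl_always_eq} is isolated as the only place where information may be lost, which is precisely what feeds the heuristic ``Discussion on the equality'' that immediately follows the proposition. In other words, your argument establishes the inclusion cleanly but gives no handle on why it is typically an equality, whereas the paper's longer computation is tailored to that subsequent discussion.
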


\begin{proof}
  First let us rewrite the codes $\Coi{0}{0}$ and $\Coi{0}{q+1}$
  in a more convenient way. By
  definition
  \begin{align*}
  \Coi{0}{q+1} & =
  \cset{ \frac{\gamma^{q+1}(x_i)}{\loc{\xv}'(x_i) } x_i^{q+1} f(x_i)}
  {i \in \{1, \ldots, n-1\}}{
  f\in \Fqq[z]_{<n - (r+1)(q+1)}} \cap \Fq^{n-1}   
  \end{align*}
  In the very same way as in the proof of Proposition~\ref{prop:equiv_BCH},
  since the $(q+1)$--th powers are norms and hence are in $\Fq$, they can
  get out of the subfield subcode.
  \begin{align*}
    \Coi{0}{q+1} &=
    \gamma^{q+1}(\xvzero) \star \xvzero^{q+1} \star
    \cset{\frac{1}{\loc{\xv}'(x_i)} f(x_i)}{i \in \{1, \ldots, n-1\}}{
    f \in \Fqq[z]_{< n - (r+1)(q+1) }} \cap {\Fq^{n-1}}.
  \end{align*}
 Since the codes have length $n-1$, it is more relevant to write 
 $\Fqq[z]_{< n - (r+1)(q+1) }$ as\\ $\Fqq[z]_{< (n-1) - (r+1)(q+1) + 1}$.
  Then, thanks to Lemma~\ref{lem:locator_short}, we get
  \begin{align*}
    \Coi{0}{q+1} &=
    \gamma^{q+1}(\xvzero) \star \xvzero^{q+1} \star \\
    & \qquad \cset{\frac{1}{x_i\loc{\xvzero}'(x_i)} f(x_i)}{i \in \{1, \ldots, n-1\}}{
    f \in \Fqq[z]_{< (n-1) - (r+1)(q+1) + 1}} \cap \Fq^{n-1}.
  \end{align*}
   Consequently, by the description of alternant codes as evaluation codes
  (Lemma~\ref{lem:decription_as_eval}), we obtain
  \begin{equation}\label{eq:new_C0qp1}
               \Coi{0}{q+1} = \gamma^{q+1}(\xvzero) \star \xvzero^{q+1} \star
                 \Alt{(r+1)(q+1)-1}{\xvzero}{\xvzero}.
  \end{equation}
  In the very same manner, we prove that
  \begin{equation}\label{eq:newCoiO}
  \Coi{0}{0} =  \gamma^{q+1}(\xvzero)  \star
                 \Alt{r(q+1)-1}{\xvzero}{\xvzero}.
  \end{equation}
  From the definition of alternant codes
  (Definition~\ref{def:subfield_subcode}) together with Delsarte
  Theorem (Theorem~\ref{th:Delsarte}) and Lemma~\ref{lem:stardual}, we get
  \begin{equation}\label{eq:dualCoi0}
  \Coi{0}{0}^{\bot} = \gamma^{-(q+1)}(\xvzero) \star \tr \left(
     \GRS{r(q+1)-1}{\xvzero}{\xvzero}\right)
  \end{equation}
  From (\ref{eq:new_C0qp1}) and (\ref{eq:dualCoi0}), 
  $$
  \spc{\Coi{0}{q+1}}{\Coi{0}{0}^{\bot}} = 
  \xvzero^{q+1} \star \spc{\Alt{(r+1)(q+1)-1}{\xvzero}{\xvzero}}{
  \tr \left( \GRS{r(q+1)-1}{\xvzero}{\xvzero}\right)}.
  $$
  Since the alternant code is defined over $\Fq$, it can get in the trace:
  \begin{equation}\label{eq:prodCoi0Coiqp1}
  \spc{\Coi{0}{q+1}}{\Coi{0}{0}^{\bot}} = 
  \xvzero^{q+1} \star \tr \left(\spc{\Alt{(r+1)(q+1)-1}{\xvzero}{\xvzero}}{
   \GRS{r(q+1)-1}{\xvzero}{\xvzero}}\right).
  \end{equation}
  By definition of alternant codes (Definition~\ref{def:subfield_subcode})
  and by duality for GRS codes (Proposition~\ref{prop:dualGRS}),
  $$
    \Alt{(r+1)(q+1)-1}{\xvzero}{\xvzero} = 
    \GRS{n-(r+1)(q+1)}{\xvzero}{\xvzero^{-1} \star {\loc{\xvzero}'}^{-1}(\xvzero)}
    \cap \Fq^{n-1}
  $$
  Therefore, since every code contains its subfield subcode, 
  \begin{equation}
    \label{eq:incl_always_eq}
  \begin{aligned}
  \Alt{(r+1)(q+1)-1}{\xvzero}{\xvzero} & \star
   \GRS{r(q+1)-1}{\xvzero}{\xvzero} \subseteq  \\
   & \spc{\GRS{n-(r+1)(q+1)}{\xvzero}{\xvzero^{-1} \star {\loc{\xvzero}'}(\xvzero)^{-1}}}{\GRS{r(q+1)-1}{\xvzero}{\xvzero}}.
  \end{aligned}  
  \end{equation}
  Thus, from Proposition~\ref{prop:GRSsquare}(\ref{cas1:grssquare}) on 
  products on GRS codes,
  we get
  \begin{equation}\label{eq:horrible_product}
  \Alt{(r+1)(q+1)-1}{\xvzero}{\xvzero} \star
   \GRS{r(q+1)-1}{\xvzero}{\xvzero} \subseteq  
   \GRS{(n-1)-(q+2)}{\xvzero}{\loc{\xvzero}'(\xvzero)^{-1}}.
  \end{equation}
  Equations (\ref{eq:prodCoi0Coiqp1}) and (\ref{eq:horrible_product}) yield
  $$
    \spc{\Coi{0}{q+1}}{\Coi{0}{0}^{\bot}} \subseteq \xvzero^{q+1} \star
      \tr \left( \GRS{(n-1)-(q+2)}{\xvzero}{\loc{\xvzero}'(\xvzero)^{-1}}
        \right).
  $$
  By dualizing and thanks to Lemma~\ref{lem:stardual}, to Delsarte Theorem
  (Theorem~\ref{th:Delsarte}) and Proposition~\ref{prop:dualGRS},
  we get
  \begin{align*}
  \code{D} 
  & \supseteq  \xvzero^{-(q+1)} \star \left(\RS{q+2}{\xvzero} \cap
    \Fq^{n-1}\right).
  \end{align*}
\end{proof}

\bigbreak

\paragraph{\bf Discussion on the equality}
While Proposition~\ref{prop:structureD} is only an inclusion, it turns
out that in all our experiments, the inclusion was an equality.
It is worth nothing that the reason why this equality typically holds is 
more or less the reason why our distinguisher works.

Indeed, for the equality to hold, (\ref{eq:incl_always_eq}) should be an
equality. The right-hand product in  (\ref{eq:incl_always_eq})
is a GRS code of dimension $(n-1) - (q+1)$ (see~(\ref{eq:horrible_product})),
while the left hand one is a product of codes of respective (designed)
dimensions $n - 2(r+1)(q+1)$ and $r(q+1)-1$. From Proposition~\ref{pr:typ_dim},
the product of two random codes with these dimensions would be
$$\min \{n-1, (n - 2(r+1)(q+1))(r(q+1)-1)\}$$
For cryptographic sizes of parameters, the above min is $n-1$ and hence, with
a very high probability, the left-hand product in (\ref{eq:incl_always_eq})
fills in the right-hand one. This explains, why the inclusion in 
Proposition~\ref{prop:structureD} is almost always an equality.

\medbreak

Let us now investigate further the structure of the code
$\RS{q+2}{\xvzero} \cap \Fq^{n-1}$.

\begin{nota}
  Let $\alpha$ be a primitive element of $\Fqq/\Fq$.
  In what follows we denote by $\code{E}$ the following code which
  is used repeatedly
  $$
  \EC \eqdef {\langle \onev, \tr(\xvzero), \tr(\alpha \xvzero),
    N(\xvzero) \rangle}_{\Fq}.
  $$
\end{nota}

\begin{proposition}\label{prop:dimD}
  We have,
  $$\EC \subseteq
    \RS{q+2}{\xvzero} \cap \Fq^{n-1},
  $$
  with equality when the support $\xv$ is full.
\end{proposition}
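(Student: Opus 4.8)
The statement splits into an inclusion, valid for any support, and an equality, valid when $\xv$ is full; I establish these in turn. The inclusion is a routine verification; the equality is where the work is, and I would get it by passing to the Mattson--Solomon (discrete Fourier) description of functions on $\Fqq^{\times}$ and bookkeeping the Frobenius action on exponents modulo $q^{2}-1$.

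\textbf{The inclusion.} Since $\EC$ is the $\Fq$-span of $\onev$, $\tr(\xvzero)$, $\tr(\alpha\xvzero)$, $\nr(\xvzero)$ and $\RS{q+2}{\xvzero}\cap\Fq^{n-1}$ is an $\Fq$-linear code, it is enough to check that each of these four vectors lies in it. Membership in $\Fq^{n-1}$ is clear, since $\onev$ is rational and the trace and norm maps $\Fqq\to\Fq$ take values in $\Fq$. Membership in $\RS{q+2}{\xvzero}$ amounts to writing each vector as the evaluation on $\xvzero$ of a polynomial of $\Fqq[z]$ of degree at most $q+1$: one takes $1$ for $\onev$, $z^{q+1}$ for $\nr(\xvzero)$, $z^{q}+z$ for $\tr(\xvzero)$, and $\alpha^{q}z^{q}+\alpha z$ for $\tr(\alpha\xvzero)$.

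\textbf{The equality when $\xv$ is full.} Here $x_{0}=0$, so the entries of $\xvzero$ are exactly the $q^{2}-1$ nonzero elements of $\Fqq$ and $n-1=q^{2}-1$. Take $\cv=g(\xvzero)\in\RS{q+2}{\xvzero}\cap\Fq^{n-1}$ with $\deg g\le q+1$. Since the entries of $\xvzero$ run through $\Fqq^{\times}$, the value $g(x)$ depends only on $g$ modulo $z^{q^{2}-1}-1$; writing that reduction as its Mattson--Solomon polynomial $h(z)=\sum_{j=0}^{q^{2}-2}a_{j}z^{j}$ with $a_{j}\in\Fqq$, the degree bound forces $a_{j}=0$ for $q+2\le j\le q^{2}-2$, and $h(x_{i})=c_{i}$ for all $i$; the evaluation map $\Fqq[z]_{<q^{2}-1}\to\Fqq^{q^{2}-1}$ at $\xvzero$ being an isomorphism, this representation is unique. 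The condition $\cv\in\Fq^{n-1}$ reads $h(x)^{q}=h(x)$ for all $x\in\Fqq^{\times}$; expanding $h(x)^{q}=\sum_{j}a_{j}^{q}x^{qj}$ and using that $t\mapsto qt$ permutes $\Z/(q^{2}-1)\Z$ with $q\cdot q\equiv1$, uniqueness yields
$$a_{qj\bmod(q^{2}-1)}=a_{j}^{\,q}\qquad\text{for all }j.$$

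\textbf{Finishing, and the delicate point.} It remains to run through $j\in\{0,1,\dots,q+1\}$, the only indices where $a_{j}$ may be nonzero. For $2\le j\le q-1$ one has $q+2\le qj\le q^{2}-q\le q^{2}-2$, so $a_{qj}=0$; applying the boxed relation once more gives $a_{j}=\bigl(a_{qj\bmod(q^{2}-1)}\bigr)^{q}=0$. Thus only $a_{0},a_{1},a_{q},a_{q+1}$ survive, with $a_{0}=a_{0}^{q}$ and $a_{q+1}=a_{q+1}^{q}$ (using $q(q+1)\equiv q+1$), hence $a_{0},a_{q+1}\in\Fq$, and $a_{q}=a_{1}^{q}$. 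Therefore $h(z)=a_{0}+a_{1}z+a_{1}^{q}z^{q}+a_{q+1}z^{q+1}$, i.e. $\cv=a_{0}\onev+\tr(a_{1}\xvzero)+a_{q+1}\nr(\xvzero)$; decomposing $a_{1}=\lambda+\mu\alpha$ with $\lambda,\mu\in\Fq$ gives $\tr(a_{1}\xvzero)=\lambda\tr(\xvzero)+\mu\tr(\alpha\xvzero)$, so $\cv\in\EC$, which is the reverse inclusion. No step is deep; the one that needs care is precisely the exponent bookkeeping modulo $q^{2}-1$ — pinning down which residues $qj\bmod(q^{2}-1)$ leave the admissible window $\{0,\dots,q+1\}$ — together with checking that the argument degenerates harmlessly for very small $q$.
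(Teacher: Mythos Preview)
Your proof is correct and follows essentially the same route as the paper's: both reduce the full-support case to the congruence $h\equiv h^{q}\pmod{z^{q^{2}-1}-1}$ for $h\in\Fqq[z]_{<q+2}$ and compare coefficients to force $h_i=0$ for $2\le i\le q-1$, $h_0,h_{q+1}\in\Fq$, and $h_q=h_1^{q}$. Your presentation is slightly more explicit (separating the inclusion from the equality, and naming the Mattson--Solomon framework), but the underlying argument is the same.
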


\begin{proof}
We first prove the result under the assumption that $\xv$ is full.
Our goal is to describe the polynomials $h$ in $\Fqq[x]_{<q+2}$
satisfying
$$
\forall x\in \Fqq^{\times},\ h(x) = h(x)^q.
$$
Or equivalently,
\begin{equation}
  \label{eq:heqhq}
  h \equiv h^q \mod (x^{q^2-1} - 1).
\end{equation}
Writing $h$ as $h(x)= \sum_{i=1}^{q+1} h_i x^i$, Equation~(\ref{eq:heqhq})
yields the system
\begin{equation}\label{syst:subfield}
\left\{
  \begin{array}{lcll}
    h_0 &= &h_0^q& \\ 
    h_1 &=& h_q^q& \\
    h_{q+1} &=& h_{q+1}^q& \\
    h_i & = &0,& \ \forall i \in \{2, \ldots, q-1\}.
  \end{array}
\right.
\end{equation}
Solving the above system yields the $\Fq$--basis of solutions:
$1, x+x^q, \alpha x + \alpha^q x^q, x^{q+1}$, which concludes the proof.
If $\xv$ is non full then it is easy to see that 
the polynomials satisfying (\ref{syst:subfield}) provide words of 
$\RS{q+1}{\xvzero} \cap \Fq^{n-1}$ but there might exist other ones.
\end{proof}

\bigbreak

\paragraph{\bf Discussion on the non full--support case}
In all our experiments, the code $\RS{q+1}{\xvzero} \cap \Fq^n$ 
turned out to have dimension $4$ even when the support is non full.
This can be explained as follows. In terms of polynomials, 
the full support code is generated as the image of the $\Fq$--space of 
polynomials in $\Fqq[z]_{<q+2}$ solution to the $\Fq$--linear system
\begin{equation}\label{eq:lin_forms}
 \forall x_i \in \Fqq^{\times},\quad f(x_i)^q - f(x_i) = 0.
\end{equation}
There are $q^2-1$ equations, while the $\Fq$--dimension
of $\Fqq[z]_{<q+2}$ is $2q+4$.
The non full support case is obtained by removing equations in 
(\ref{eq:lin_forms}). Since this system is overconstrained, 
one can reasonably hope that removing some equations will have
no incidence on the solution space as soon as $n > 2q+4$.

\bigbreak

\paragraph{\bf Conclusion} It is reasonable to hope --- and this is exactly what
happened in all our experiments (more than $600$ tests) --- that
$$\code{D} = \xvzero^{-(q+1)} \star \left(\RS{q+1}{\xvzero} \cap \Fq^{n-1}\right)
 = \xvzero^{-(q+1)} \star \EC.
$$

\subsubsection{The full weight codewords of $\DC$}
Since, the solution set of Problem~(\ref{eq:compNx}) consists in
full weight vectors $\cv$ with $c_1 = 1$, it is sufficient to classify
full weight vectors up to multiplication by a scalar.
For this reason, in what follows, we will frequently consider vectors
up to multiplication by a scalar.
According to the previous discussions, one can assume that $\code{D} =
\xvzero^{-(q+1)} \star \EC$ and the study of full weight codewords of 
$\code{D}$ reduces to that of $\EC$.

\begin{proposition}
  \label{prop:full_weight}
  Let 
  $$
  U \eqdef \left\{ \left.
       (\xvzero - a)^{q+1} ~\right|~ a\in \Fqq \setminus
      \{x_1, \ldots, x_{n-1}\}\right\}.
  $$
  Then, the elements of $U$ are full weight codewords of $\EC$.
  Moreover, let $\mathbf{P}$ be the probability that
  every full weight codeword of $\EC$
  up to multiplication by a scalar is in $U$, then
  $$
  \mathbf{P} \left\{
    \begin{aligned}
      =    &\ 1,  {\rm \ if\ } n \geq q^2 -q +2 \\
      \geq & 1-(q^3+q)\frac{{q^2-q+1 \choose n-1}}{{q^2 \choose n-1}} , {\rm \ else.}
    \end{aligned}
    \right.
  $$
\end{proposition}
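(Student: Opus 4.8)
The plan is to first show that every element of $U$ is a full weight codeword of $\EC$, and then to count full weight codewords of $\EC$ in order to bound the probability $\mathbf P$. For the first part, take $a\in\Fqq\setminus\{x_1,\ldots,x_{n-1}\}$ and expand $(\xvzero-a)^{q+1}$. Writing $N$ for the norm and $\tr$ for the trace from $\Fqq$ to $\Fq$, one has the scalar identity $(x-a)^{q+1}=(x-a)(x-a)^q=(x-a)(x^q-a^q)=N(x)-a x^q - a^q x + N(a)$, and since $a x^q+a^q x=\tr(a)\,x$ when $x\in\Fq$ but here $x\in\Fqq$ is not in $\Fq$, I instead note $a x^q + a^q x = \tr(ax^q)$, which is an $\Fq$-linear form in the $\Fq$-coordinates of $x$; expressing $x$ in the basis $\{1,\alpha\}$ shows $a x^q+a^q x$ is an $\Fq$-linear combination of $\tr(x)$ and $\tr(\alpha x)$. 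Hence $(\xvzero-a)^{q+1}=N(\xvzero)-\big(\text{lin.\ comb.\ of }\tr(\xvzero),\tr(\alpha\xvzero)\big)+N(a)\onev$, which is a member of $\EC={\langle \onev,\tr(\xvzero),\tr(\alpha\xvzero),N(\xvzero)\rangle}_{\Fq}$. It has full weight because the $i$-th entry is $(x_i-a)^{q+1}$, which is nonzero exactly because $a\notin\{x_1,\ldots,x_{n-1}\}$ (the norm of a nonzero element of $\Fqq$ is nonzero).

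For the probability statement, the idea is to pass to the full-support situation. When $n\geq q^2$ (equivalently the support is all of $\Fqq^\times$, i.e.\ $n=q^2-1$ after puncturing position $0$, or one could include $0$), Proposition~\ref{prop:dimD} gives $\EC=\RS{q+2}{\xvzero}\cap\Fq^{n-1}$ exactly, and a codeword of $\EC$ is the evaluation of a polynomial $h(z)=h_0+h_1 z+h_q z^q+h_{q+1}z^{q+1}$ with $h_0,h_1,h_q,h_{q+1}$ satisfying the subfield conditions of~\eqref{syst:subfield}; in particular $h_0\in\Fq$, $h_{q+1}\in\Fq$ and $h_1=h_q^q$. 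One checks that such an $h$ evaluated at every point of $\Fqq^\times$ has full weight iff $h$ has no root in $\Fqq^\times$. If $h_{q+1}\neq 0$ then $h(z)/h_{q+1}$ is monic of degree $q+1$ with all coefficients of $z^2,\ldots,z^{q-1}$ vanishing and $h_0/h_{q+1}\in\Fq$; a short computation (or the identity above run backwards) shows that any such polynomial with no root is of the form $(z-a)^{q+1}$ up to a scalar, which recovers $U$; if $h_{q+1}=0$ the polynomial has degree $\le q$ with a root in $\Fqq^\times$ unless it is a nonzero constant, and the constant case $h=h_0\onev$ is $(\xvzero-a)^{q+1}$ for no finite $a$ but is the ``$a=\infty$'' degeneration — here one should double-check whether the all-ones vector is counted; the cleanest route is to note $h_1=h_q^q$ forces, in the degree-$\le q$ case with a prescribed pattern of zero coefficients, either a root in $\Fqq^\times$ or $h$ constant, and a nonzero constant is not full weight unless... it is full weight, but it is $\onev\in\EC$ and equals $(\xvzero-a)^{q+1}$ only via the projective limit, so one treats $\onev$ as the extra solution corresponding to $a=\infty$ — this is exactly why the bound is stated up to multiplication by a scalar and why one must be slightly careful. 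Assuming the statement intends $U$ to capture all of them in the full-support case, the equality $\mathbf P=1$ follows for $n\ge q^2-q+2$ by the argument below applied to a slightly smaller support.

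For the non-full-support case, the plan is a union bound. A full weight codeword of $\EC$ not in $U$ (up to scalar) would have to come from a polynomial $h$ of the above restricted shape that does have a root somewhere in $\Fqq$, but whose roots all avoid $\{x_1,\ldots,x_{n-1}\}$; equivalently, thinking of the full-support code $\EC_{\mathrm{full}}$ (dimension $4$, with $q^3+q$ nonzero vectors up to scalar, say — more precisely the number of rank-one-up-to-scalar ``bad'' polynomials), each fixed bad polynomial $h\notin U$ vanishes on at least one point of $\Fqq$ but is nonzero outside a small set, and it survives to a full weight codeword on our chosen support only if the puncturing/shortening set $\Ind$ (equivalently the chosen $n-1$ support points, chosen among $q^2-1$, or $q^2$) avoids all the $\le q-?$ zeros of $h$. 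The number of such $h$'s up to scalar is at most $q^3+q$ (the count of nonzero elements of the $4$-dimensional $\Fq$-space $\EC_{\mathrm{full}}$ modulo scalars is $(q^4-1)/(q-1)=q^3+q^2+q+1$; the bound $q^3+q$ is a refinement counting only those outside $U$, and I would justify it by the above classification), and for each the probability that a uniformly random support of the given size avoids its at-least-one zero is at most $\binom{q^2-q+1}{n-1}/\binom{q^2}{n-1}$ (choosing $n-1$ of the $q^2$ points while dodging $q$ forbidden ones gives at most $\binom{q^2-q+1}{n-1}$, weakened from $q-1$ to $q$ zeros for a clean formula). The union bound over the $\le q^3+q$ bad polynomials then yields $\mathbf P\ge 1-(q^3+q)\binom{q^2-q+1}{n-1}/\binom{q^2}{n-1}$, and this quantity equals $1$ exactly when $\binom{q^2-q+1}{n-1}=0$, i.e.\ when $n-1>q^2-q+1$, i.e.\ $n\ge q^2-q+2$, matching the stated threshold.

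The main obstacle I anticipate is the classification step: proving that every full weight codeword of the full-support code $\EC_{\mathrm{full}}$, up to a scalar, is either in $U$ or among an explicitly controlled set of size $\le q^3+q$ that is ruled out with high probability. The clean statement ``$h$ of the restricted shape with $h_{q+1}\ne0$ and no root equals $(z-a)^{q+1}$ up to scalar'' requires the little argument that a degree-$(q+1)$ polynomial over $\Fqq$ whose evaluation vector lies in $\Fq^{q^2-1}$ and which is nonvanishing on $\Fqq^\times$ must, after normalising, be a $(q{+}1)$-st power of a linear form — this is where the norm map $x\mapsto x^{q+1}$ surjecting $\Fqq^\times$ onto $\Fq^\times$ and the coefficient constraints of~\eqref{syst:subfield} must be combined carefully, and handling the boundary cases $h_{q+1}=0$ and the stray vector $\onev$ correctly is the delicate bookkeeping.
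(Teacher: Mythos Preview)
Your first paragraph (membership of $U$ in $\EC$ and full weight) is fine and matches the paper.

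The gap is exactly where you flag it: the classification step. You assert that a polynomial $h$ of the restricted shape with $h_{q+1}\neq 0$ and no root in $\Fqq$ must be $(z-a)^{q+1}$ up to scalar, and that the remaining ``bad'' polynomials number at most $q^3+q$ and each has enough zeros to make the binomial bound work --- but you prove none of this. In particular, your union bound needs the bad polynomials to have at least $q-1$ zeros in $\Fqq$, not ``at-least-one''; with a single zero you would only get $\binom{q^2-1}{n-1}/\binom{q^2}{n-1}$, which is far too weak. Your parenthetical ``weakened from $q-1$ to $q$ zeros'' is pulled out of the air.

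The paper obtains both the count $q^3+q$ and the lower bound $q-1$ on the number of zeros by a Weil descent: write $z=u+\alpha v$ with $u,v\in\Fq$ (choosing $\alpha$ so that $\alpha^q=-\alpha$ in odd characteristic, or $\alpha^q=\alpha+1$ in even), and observe that $h(z)$ becomes an affine conic $\tilde f(u,v)=\lambda_1(u^2-dv^2)+2\lambda_2 u+2d\lambda_3 v+\lambda_4$ over $\Fq$. The zeros of $h$ in $\Fqq$ are then the $\Fq$-points of this conic. One classifies: $\lambda_1=\lambda_2=\lambda_3=0$ gives the empty conic (the vector $\onev$); $\lambda_1=0$ otherwise gives a line with exactly $q$ points ($q^2+q$ such lines); $\lambda_1\neq 0$ and the conic singular gives precisely $(z-a)^{q+1}$ (these are your elements of $U$, or non-full-weight if $a$ is a support point); $\lambda_1\neq 0$ and nonsingular gives a smooth affine conic with at least $q-1$ points ($q^3-q^2$ of these). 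The total $q^2+q+q^3-q^2=q^3+q$ and the zero-count $\geq q-1$ then feed directly into the union bound. This geometric classification is the substance you are missing; without it, or an equivalent direct argument on the polynomials (which you do not supply), the proof does not go through.
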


\begin{proof}
Notice that the words $(\xvzero - a)^{q+1}$ for some
$a\in \Fqq$ are elements of $\EC$. Indeed, expanding
the word as  
$$
(\xvzero -a)^{q+1} = (\xvzero -a)^q(\xvzero -a) = \xvzero^{q+1} -\tr(a^q \xvzero) + a^{q+1}.
$$
Since $a^q$ decomposes as $a_0+\alpha a_1$, with $a_0, a_1 \in \Fq$,
this provides a decomposition of
$(\xvzero -a)^{q+1}$ as an $\F_q$--linear
combination of the words $\onev, \tr (\xvzero), \tr (\alpha \xvzero),
\xvzero^{q+1}$.

Let us investigate further the structure of the elements of $\EC$.
  The codewords of $\EC$ are given by evaluation
  of polynomials of the form
  \begin{equation}\label{eq:exprf}
  f(z) = \lambda_1 z^{q+1} + \lambda_2 (z^q+z) + \lambda_3 (\alpha^q z^q +
  \alpha z) + \lambda_4,\quad
  \lambda_1, \lambda_2, \lambda_3, \lambda_4 \in  \Fq.
  \end{equation}
  Therefore, describing the full weight codewords of $\EC$
  reduces to understand which of these polynomials do not vanish
  at any entry of $\xvzero$.
  Here, we can give a geometric interpretation of the set of roots in
  $\Fqq$ of such
  a polynomial in terms of points of affine conics over $\Fq$.
  For that we proceed to a Weil descent.
  Namely, set 
  $
  z = u+ \alpha v 
  $, where $u, v\in \Fq$.
  In addition, we choose $\alpha \in \Fqq \setminus \Fq$ so that
  $$
  \alpha^q = -\alpha,\quad \textrm{if}\ 2\nmid q
  \quad or \quad \alpha^q = \alpha +1 \quad \textrm{if}\ 2\mid q.
  $$
  Such an $\alpha$ always exists. Indeed, 
  \begin{itemize}
  \item in odd characteristic, choose a non-square $d\in \Fq$
    and let $\alpha \in \Fqq$ be a square root of $d$.
  \item in even characteristic, choose $d\in \Fq$ such that
    $\tr_{\Fq/\F_2}(d)\neq 0$, then the polynomial $z^2+z+d$
    is irreducible in $\Fq[z]$ and let $\alpha$ be one of
    its roots in $\Fqq$.
  \end{itemize}

Let us treat the odd characteristic case, the even characteristic can be 
treated in a very similar fashion.
Set $x=u+\alpha v$, where $\alpha \in \Fqq$
is a square root of a non-square $d \in \Fq$,  then
a simple computation from (\ref{eq:exprf}) transforms $f(x)$
as $\tilde{f}(u, v)$
\begin{equation}
  \label{eq:conic}
\tilde{f}(u, v) = \lambda_1 (u^2 -d v^2) + 2\lambda_2 u +2 d\lambda_3 v+\lambda_4,
\quad {\rm where} \quad \lambda_1,\lambda_2,\lambda_3,\lambda_4 \in \Fq.
\end{equation}
The set of pairs $(u, v) \in \Fq^2$ at which $\tilde{f}$
vanish are in one-to-one correspondence with the set of zeros in $\Fqq$ of
$f$. Therefore, $f$ provides a full-weight codeword in $\EC$
if and only if $f$ does not vanish on $\{x_1, \ldots, x_{n-1}\}$,
that is if and only if the zero locus of $\tilde{f}$ in $\Fq^2$
is contained in
$$A \eqdef \left\{ (u, v) ~\left|~ u+  \alpha v \in \Fqq \setminus \{x_1, \ldots, x_{n-1}\} \right. \right\} \cdot
$$
Consequently, we need
to understand the probability that $A$ contains a conic whose equation is
of the form (\ref{eq:conic}).
Let us analyze some particular cases
of conics of the form~(\ref{eq:conic}).
\begin{enumerate}[(i)]
  \item\label{it:empty} When $\lambda_1 = \lambda_2 = \lambda_3 = 0$. The conic
    is empty. In terms of codewords, it corresponds to multiples
    of the all-one word $\onev$.
  \item\label{it:lines}
    When, $\lambda_1 =0$ the conic is nothing but an affine line.
    It has exactly $q$ points over $\Fq$.
  \item\label{it:sing} For $\lambda_1 \neq 0$. Since we consider words only
    up to multiplication by a scalar, one can assume that $\lambda_1 = 1$.
    Let us look for a criterion for the conic to be singular. Recall that
    a conic of equation $f(x,y)$ is said to be {\em singular} if
    $f, \frac{\partial f}{ \partial x}$ and $\frac{\partial f}{\partial y}$
    have a common zero. The computation of the partial derivatives of
    $\tilde{f}$ yields:
    $$
    \frac{\partial \tilde{f}}{\partial u} = 2u+2\lambda_2 \quad
        \frac{\partial \tilde{f}}{\partial v} = -2dv+2d\lambda_3
    $$
    recall that we assumed $\lambda_1 = 1$. Then $\tilde{f}$ is singular
    if and only if $\tilde{f}(-\lambda_2, \lambda_3) = 0$ which is equivalent
    to
    $$
    \lambda_4 = \lambda_2^2 - d \lambda_3^2 =
    (\lambda_2 + \alpha \lambda_3)^{q+1}
    $$
    In such a situation a computation to $f$ from $\tilde{f}$ yields
   \begin{align*}
      f(z) & =  z^{q+1} + \lambda_2 \tr (z) + \lambda_3 \tr (\alpha z)
      +  (\lambda_2 + \alpha \lambda_3)^{q+1}\\
           & = (z - (\lambda_2 + \alpha \lambda_3))^{q+1}.
    \end{align*} 
    Therefore, the singular conics of the form~(\ref{eq:conic}) correspond to the words $(\xvzero -a)^{q+1}$
    for $a\in \Fqq$.
    In terms of codewords, either $a \in \{x_1, \ldots, x_n\}$ and the
    word  $(\xvzero -a)^{q+1}$ has weight $n-2$, or it is an element of $U$.
  \item\label{it:ellipse}
    Finally, for $\lambda_1 = 1$ and $\lambda_4 \neq \lambda_2^2 +
    d\lambda_3^2$, the conic is nonsingular and it is well-known
    that affine nonsingular conics have at least $q-1$ points (see for instance
    \cite[Chapter 9.3]{LN97}).
\end{enumerate}
In summary, only cases (\ref{it:lines}) and (\ref{it:ellipse})
may provide codewords of full weight which are not in $U$.
The number of lines coming from
(\ref{it:lines}) is the number of lines in the affine plane, namely $q^2+q$.
On the other hand, the number of conics coming from (\ref{it:ellipse})
is the number of possible triples $(\lambda_2, \lambda_3, \lambda_4)$
with $\lambda_4\neq \lambda_2^2 + d\lambda_3^2$.
That is $q^3-q^2$.

As a conclusion, 
there are $q^3 +q$ conics having at least 
$q-1$ points which may be contained in $A$.
The probability that such a conic is contained in $A$, which equals
the probability that the complement of $A$ is contained in the complement
of such a conic satisfies
$$
\prob\left(A\textrm{\ contains\ one\ of\ these}\ q^3+q\ \textrm{conics}\right)
\left\{
  \begin{array}{ll}
0 & {\rm if}\ |A| < q-1\\
\leq (q^3+q) \frac{{q^2-q+1 \choose q^2- |A|}}{{q^2 \choose q^2 - |A|}}&
{\rm else}.
  \end{array}
\right.
$$
Since $|A| = q^2 - (n-1)$, this yields the result.
\end{proof}

  \begin{table}[h]
    \centering
    
    \begin{tabular}{|c|c|c|c|c|}
      \hline
     $q= 29$,\ $n=791$ & $q=31$,\ $n=892$ & $q=31$, $n=851$ &
$q=31$,\ $n=813$ & $q=31$,\ $n=795$\\ 
 \hline
      $3\ 10^{-34}
$ & $2.3\ 10^{-33}$ & $4.7\ 10^{-26}$ & $1.06\ 10^{-21}$ & $4.7\ 10^{-20}$ \\
      \hline
    \end{tabular}
    \vspace{.2cm}
    \caption{Estimates of the upper bound on $1- \mathbf{P}$, where $\mathbf{P}$
      is defined in
      Proposition~\ref{prop:full_weight} for some explicit parameters.}
    \label{tab:probas}
  \end{table}

  \begin{rem}
    Actually, a further study proves that the nonsingular conics
    considered in the proof have all $q+1$ points. This permits
    to obtain a sharper bound for the probability. Details are left
    out here.
  \end{rem}

\subsubsection{Associating solutions by pairs}\label{ss:bypairs}
First remind that here again, words are considered only up to a
multiplication by a scalar.
In the previous subsections, we proved that with a very high probability,
the inverses of the solutions of Problem~(\ref{eq:compNx}) are
\begin{itemize}
\item $\xvzero^{q+1}$ which is the solution we look for;
\item the words $
\xvzero^{q+1} \star (\xvzero -a)^{-(q+1)}, \quad
a\in \Fqq \setminus \{x_0, \ldots, x_{n-1}\}.
$
\end{itemize}
In the very same manner, after computing the same filtration at position
$1$, one can then solve a linear problem of the form of
Problem~(\ref{eq:compNx}) whose inverse full weight solutions are
\begin{itemize}
\item $(\xvone - 1)^{q+1}$ which is the one we look for;
\item the words $
(\xvone - 1)^{q+1} \star (\xvone -a)^{-(q+1)} \qquad
a\in \Fqq \setminus \{x_0, \ldots, x_{n-1}\}.
$
\end{itemize}
Basically, we have two sets of $q^2 - n +1$ vectors (one can exclude
the all-one vector $\onev$ which is found easily).
The first set contains $\xvzero^{q+1}$ and the second one contains
$(\xvone - 1)^{q+1}$. But we do not know which ones they are.
The first idea would be to iterate Steps 3 and 4 of the attack until
the attack succeeds which represents in the worst case $(q^2- n+1)^2$
iterations. The point of this section is to explain how to 
reduce it to $q^2 -n +1$ iterations in the worst case.
For this purpose, let us bring in some notation.

\begin{nota}
Let $\xvzo$ be the vector $\xv$ punctured at positions $0,1$.
For all $a\in \Fqq \setminus \{x_0, \ldots, x_{n-1}\}$, set
$$
 \begin{array}{ccl}
  \uv_0(a) &\eqdef& 
  \spc{\xvzo^{q+1}}{(\xvzo-a)^{-(q+1)}} \\
  \uv_1(a) &\eqdef& 
  \spc{(\xvzo-1)^{q+1}}{(\xvzo-a)^{-(q+1)}}.
 \end{array}
$$
Moreover, set
$$
\uv_0(\infty)\eqdef \xvzo^{q+1} \quad
{\rm and}
\quad \uv_1(\infty)\eqdef(\xvzo-1)^{q+1},
$$
which can be regarded as $\uv_0(a)$ (resp. $\uv_1(a)$) ``when setting
$a=\infty$''. Finally, set 
\begin{align*}
\Lc_0 &\eqdef \{\uv_0(a) ~|~ a \in (\Fqq \cup \{\infty\})
 \setminus \{x_0, \ldots,
x_{n-1}\}\}\\
\Lc_1 &\eqdef \{\uv_1(a) ~|~ a \in (\Fqq \cup \{\infty\})
 \setminus \{x_0, \ldots,
x_{n-1}\}\}  
\end{align*}
\end{nota}

\begin{lemma}
  \label{lem:byPairs}
  Assume that $n>2q+4$.
  Let $a,b,c,d\in (\Fqq\cup\{\infty\}) \setminus \{x_0, \ldots, x_{n-1}\}$.
  Then, the vectors
  $\spc{\uv_0(a)}{\uv_1(b)}$ and $\spc{\uv_0(c)}{\uv_1(d)}$ are collinear
  if and only if
  $$
  \begin{array}{rccc}
    {\rm either} & a  =  c & {\rm and} & b  =  d \\
    {\rm or}     & a  =  d & {\rm and} & b  =  c.
  \end{array}
  $$
\end{lemma}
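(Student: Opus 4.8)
The ``if'' direction is immediate: if $(a,b)=(c,d)$ the two vectors are literally equal, and if $(a,b)=(d,c)$ then $\spc{\uv_0(a)}{\uv_1(b)} = \xvzo^{q+1}\star(\xvzo-1)^{q+1}\star(\xvzo-a)^{-(q+1)}\star(\xvzo-b)^{-(q+1)}$, which is symmetric in $a,b$, hence equal to $\spc{\uv_0(d)}{\uv_1(c)}=\spc{\uv_0(c)}{\uv_1(d)}$ (using the convention $(\xvzo-\infty)^{-(q+1)}=\onev$). So the whole content is the converse.

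For the converse, the plan is to translate collinearity into a polynomial identity and count degrees. Write $w_i = x_i$ for $i\in\{2,\dots,n-1\}$, so that the vector $\spc{\uv_0(a)}{\uv_1(b)}$ has $i$-th entry $\dfrac{w_i^{q+1}(w_i-1)^{q+1}}{(w_i-a)^{q+1}(w_i-b)^{q+1}}$ when $a,b\in\Fqq$ (with the obvious modifications when $a$ or $b$ equals $\infty$). Suppose $\spc{\uv_0(a)}{\uv_1(b)} = \lambda \, \spc{\uv_0(c)}{\uv_1(d)}$ for some $\lambda\in\Fq^\times$. Clearing denominators, this says that for every $i\in\{2,\dots,n-1\}$,
$$
(w_i-c)^{q+1}(w_i-d)^{q+1} \;=\; \lambda\,(w_i-a)^{q+1}(w_i-b)^{q+1},
$$
where a factor $(z-a)^{q+1}$ is replaced by $1$ whenever $a=\infty$. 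Thus the polynomial
$$
P(z) \;\eqdef\; (z-c)^{q+1}(z-d)^{q+1} - \lambda\,(z-a)^{q+1}(z-b)^{q+1} \;\in\;\Fqq[z]
$$
vanishes at the $n-2$ pairwise distinct points $x_2,\dots,x_{n-1}$. Its degree is at most $2(q+1)=2q+2$ (it could be lower: if, say, exactly one of $a,b$ is $\infty$ and exactly one of $c,d$ is $\infty$ the degree drops to $q+1$; if two of them are $\infty$ it drops further, and one treats these degenerate cases by the same argument with the correspondingly smaller degree bound). Since $n>2q+4$ we have $n-2>2q+2\ge\deg P$, so $P$ has more roots than its degree and therefore $P\equiv 0$, i.e.
$$
(z-c)^{q+1}(z-d)^{q+1} \;=\; \lambda\,(z-a)^{q+1}(z-b)^{q+1}
$$
as polynomials in $\Fqq[z]$ (interpreting $(z-\infty)$ as the constant $1$).

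It remains to extract the conclusion from this polynomial equality, which is where the only real case analysis lives. Comparing leading coefficients forces $\lambda=1$ (or, in the degenerate $\infty$-cases, still pins down $\lambda$). Then unique factorization in $\Fqq[z]$ gives that the multiset of roots on the left, namely $c$ with multiplicity $q+1$ and $d$ with multiplicity $q+1$, equals the multiset on the right, namely $a$ and $b$ each with multiplicity $q+1$. If $a\neq b$ this immediately yields $\{a,b\}=\{c,d\}$, i.e. one of the two stated alternatives. If $a=b$, the right side is $(z-a)^{2(q+1)}$, forcing $c=d=a$, so again both alternatives hold trivially. The $\infty$-cases are handled the same way: e.g. if $a=\infty,b\in\Fqq$ the right side is $(z-b)^{q+1}$ of degree $q+1$, so the left side must also have degree $q+1$, forcing exactly one of $c,d$ to be $\infty$ and the other to equal $b$. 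The main obstacle is purely bookkeeping: making sure the degree bound $\deg P\le 2q+2$ (and its refinements when some parameters are $\infty$) is correctly stated so that $n>2q+4$ suffices, and then doing the $\infty$-bookkeeping in the factorization step cleanly; there is no conceptual difficulty beyond unique factorization and counting roots.
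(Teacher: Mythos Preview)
Your proposal is correct and follows essentially the same approach as the paper: reduce collinearity to the vanishing of the polynomial $P(z)=(z-c)^{q+1}(z-d)^{q+1}-\lambda(z-a)^{q+1}(z-b)^{q+1}$ at the $n-2$ points $x_2,\dots,x_{n-1}$, then use $n>2q+4$ to force $P\equiv 0$. The paper simply says ``this yields the result'' after obtaining $P\equiv 0$ and explicitly omits the $\infty$-cases, whereas you spell out the unique-factorization step and the $\infty$-bookkeeping; this extra detail is harmless and arguably clearer.
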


\begin{proof}
  The ``if'' part is straightforward.
  Conversely, assume that $\spc{\uv_0(a)}{\uv_1(b)}$ and $\spc{\uv_0(c)}{\uv_1(d)}$ are collinear. Thus, there exists a nonzero scalar $\lambda \in \Fqq$ such that
  \begin{equation}
    \label{eq:collinearity}
  \spc{\uv_0(a)}{\uv_1(b)} =\lambda \spc{\uv_0(c)}{\uv_1(d)}.  
  \end{equation}
For convenience, we assume that $a,b,c$ and $d$ are all distinct from $\infty$.
The cases when some of them equal $\infty$ are treated in the same way-
we therefore omit to detail these cases here.
From (\ref{eq:collinearity}), we have
$$
\forall i \in \{2, \ldots , n-1\},\ \
{\left(\frac{x_i (x_i-1)}{(x_i-a)(x_i-b)}
\right)}^{q+1}= \lambda 
{\left(\frac{x_i (x_i-1)}{(x_i-c) (x_i-d)}
\right)}^{q+1}.
$$
This leads to
\begin{equation}
  \label{eq:coll2}
\forall i \in \{2, \ldots , n-1\},\ \ 
  (x_i-c)^{q+1} (x_i-d)^{q+1} = \lambda (x_i-a)^{q+1} (x_i-b)^{q+1},
\end{equation}
From (\ref{eq:coll2}), the polynomial $P(z)\eqdef
((z-c)(z-d))^{q+1} - \lambda ((z-a)(z-b))^{q+1}$ vanishes at $x_i$
for all $i \in \{2, \ldots , n-1\}$, and hence has $n-2$ roots, while its degree
is less than or equal to $2q+2$. Thus, under the assumption $n>2q+4$, this polynomial has more roots than its degree and hence is zero. 
This yields the result.
\end{proof}

\begin{proposition}\label{prop:by_pairs}
  Assume that $n > 2q+4$
  Let $a, a'\in (\Fqq \cup \{\infty\}) \setminus \{x_0, \ldots, x_{n-1}\}$.
  If we have the following equality of sets:
  $$
  \{\uv_0(a) \star \cv ~|~ \cv \in \Lc_1\} = \{\cv' \star \uv_1 (a') ~|~
  \cv' \in \Lc_0\},
  $$
  where vectors are considered up to multiplication by a scalar,
  then, $a=a'$.
\end{proposition}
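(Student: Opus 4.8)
The plan is to reduce everything to Lemma~\ref{lem:byPairs}, whose hypothesis $n > 2q + 4$ is precisely the one assumed here. Write $S \eqdef (\Fqq \cup \{\infty\}) \setminus \{x_0, \ldots, x_{n-1}\}$, so that $\Lc_0 = \{\uv_0(b) \mid b \in S\}$ and $\Lc_1 = \{\uv_1(b) \mid b \in S\}$; note that $\infty \in S$ always, since the $x_i$ lie in $\Fqq$. The hypothesis then says that, as sets of lines,
$$
\{\spc{\uv_0(a)}{\uv_1(b)} \mid b \in S\} \;=\; \{\spc{\uv_0(c)}{\uv_1(a')} \mid c \in S\}.
$$
All the vectors appearing here are nonzero, since $\xvzo$, $\xvzo - \onev$ and each $\xvzo - e$ with $e \in S$ have full weight (the entries of $\xv$ are distinct, $x_0 = 0$, $x_1 = 1$, and $e \notin \{x_2, \ldots, x_{n-1}\}$), so comparing their Schur products is meaningful.

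First I would plug the value $\infty$ into both sides. Since $\infty \in \Lc_0$-index set, the vector $\spc{\uv_0(\infty)}{\uv_1(a')}$ lies in the right-hand set, hence in the left-hand set: there is $b \in S$ with $\spc{\uv_0(\infty)}{\uv_1(a')}$ collinear to $\spc{\uv_0(a)}{\uv_1(b)}$. Applying Lemma~\ref{lem:byPairs} to the pairs $(\infty, a')$ and $(a, b)$ yields two alternatives: either $a = a'$ (with $b = \infty$), which is exactly what we want, or $a = \infty$ (with $b = a'$).

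It then remains to dispose of the residual case $a = \infty$. Here I would use the symmetric move: the vector $\spc{\uv_0(\infty)}{\uv_1(\infty)} = \spc{\uv_0(a)}{\uv_1(\infty)}$ lies in the left-hand set (take $b = \infty$), hence in the right-hand set, so it is collinear to $\spc{\uv_0(c)}{\uv_1(a')}$ for some $c \in S$. Applying Lemma~\ref{lem:byPairs} to the pairs $(\infty, \infty)$ and $(c, a')$ forces $a' = \infty$ in each of its two alternatives. Thus $a = a' = \infty$, and in all cases $a = a'$, as claimed.

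I do not expect a genuine obstacle: once Lemma~\ref{lem:byPairs} is granted, the whole argument is bookkeeping of the two case distinctions it produces, with no computation. The only points requiring a little care are that $\infty$ is a legitimate element of $S$ and an admissible argument in Lemma~\ref{lem:byPairs}, so that both the evaluations at $b = \infty$ or $c = \infty$ and the invocations of the lemma with arguments equal to $\infty$ are licit, and that the ``equality of sets'' is read up to scalars throughout (i.e.\ as equality of sets of lines), matching the collinearity conclusion of the lemma.
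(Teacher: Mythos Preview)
Your argument is correct and, like the paper's, reduces immediately to Lemma~\ref{lem:byPairs}; the only difference is tactical. The paper argues by contradiction: assuming $a\neq a'$, it picks some $b\in S$ with $b\neq a,a'$ and observes that $\spc{\uv_0(a)}{\uv_1(b)}$, which lies in the left set, cannot be collinear to any $\spc{\uv_0(c)}{\uv_1(a')}$ by the lemma. You instead exploit the always-available element $\infty\in S$ and chase the two alternatives of the lemma directly. Your route has the minor advantage of not silently assuming $|S|\ge 3$ (needed for the existence of $b\neq a,a'$), but otherwise the two proofs are the same one-step application of Lemma~\ref{lem:byPairs}.
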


\begin{proof}
  Clearly, if $a = a'$
  then every element of the left hand set is of the form
  $\uv_0(a)\star \uv_1(b)$, for some
  $b\in  (\Fqq \cup \{\infty\}) \setminus \{x_0, \ldots, x_{n-1}\}$
  and, from Lemma~\ref{lem:byPairs}, this vector is collinear to
  $\uv_0 (b) \star \uv_1 (a)$.
  
  Now, if $a \neq a'$, then let $b \in (\Fqq \cup \{\infty\})
  \setminus \{x_0, \ldots, x_{n-1}\}$ and $b \neq a, a'$. Then,
  Lemma~\ref{lem:byPairs} asserts that for all $c\in (\Fqq  \cup \{\infty\})
  \setminus \{0, \ldots, n-1\}$,
  $\uv_0(c) \star \uv_1(a')$ is non collinear to $\uv_0(a)\star \uv_1(b)$.
\end{proof}

Proposition~\ref{prop:by_pairs} allows to gather elements of $\Lc_0, \Lc_1$
by pairs $(\uv_0(a), \uv_1(a))$ without knowing $a$. We proceed as follows:
we compute all the sets
$$
\av_0 \star \Lc_1 \eqdef \{\av_0 \star \cv ~|~ \cv \in \Lc_1\}
$$
for all $\av_0 \in \Lc_0$
and all the sets
$$
\Lc_0 \star \av_1 \eqdef \{\cv' \star \av_1 ~|~ \cv' \in \Lc_0\}.
$$
for all $\av_1 \in \Lc_1$.

Then, if two such sets match i.e. if $\av_0 \star \Lc_1 = \Lc_0 \star \av_1$,
then we create the pair $(\av_0, \av_1)$. 
By Proposition ~\ref{prop:by_pairs} they correspond to pairs of the form $(\uv_0(a), \uv_1(a))$.
By this manner, we create $q^2-n+1$ pairs 
of elements of $\Lc_0 \times \Lc_1$. One of them is the one we look for,
namely the pair $(\xv^{q+1}, (\xv - 1)^{q+1})$.

\subsection{Further details on Step 4 of the attack}
\label{ss:further_step4}

Thanks to Lemma~\ref{lem:MiniPol}, one can compute the minimal polynomial
$P_{x_i}$ of every entry $x_i$ of $\xv$, that is to say that the support
is known up to Galois action.

\begin{fact}
  \label{fact:sigma}
From the very knowledge of these $P_{x_i}$'s, one can compute a (non unique) permutation $\sigma$ such that
$\xsig\eqdef \sigma(\xv)$ is of the form
\begin{equation}\label{eq:xsigma}
\xsig = (u_0, u_1, \ldots , u_{\ell_1}, v_0, v_0^q, \ldots, v_{\ell_2-1}, v_{\ell_2-1}^q, w_0, \ldots , w_{\ell_3-1})
\end{equation}
where
\begin{itemize}
\item the $u_i$'s list all the entries of $\xv$ lying in $\Fq$;
\item the $v_i$'s list all the entries of $\xv$  in $\Fqq\setminus \Fq$
  and whose conjugate is also an entry of $\xv$.
\item the $w_i$'s list all the entries of $\xv$ in $\Fqq\setminus \Fq$
  and whose conjugate is not an entry of $\xv$.
\end{itemize}
\end{fact}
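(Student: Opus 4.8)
The plan is to construct $\sigma$ purely combinatorially from the multiset of polynomials $P_{x_0},\dots,P_{x_{n-1}}$, and then check that applying this permutation to $\xv$ produces the three announced blocks. These polynomials are available because, by Lemma~\ref{lem:MiniPol} fed with the componentwise norms $(\xv^{q+1})_i=\nr(x_i)$ and $((\xv-\onev)^{q+1})_i=\nr(x_i-1)$ computed in Step~2, one gets $P_{x_i}=(z-x_i)^2$ when $x_i\in\Fq$ and $P_{x_i}$ equal to the irreducible minimal polynomial of $x_i$ when $x_i\in\Fqq\setminus\Fq$; I would take them as the given data.

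First I would record what each $P_{x_i}$ reveals. If $\deg P_{x_i}=1$ (equivalently $P_{x_i}$ has a double root, equivalently $x_i\in\Fq$), then $x_i$ is recovered as the unique root of $P_{x_i}$. If $\deg P_{x_i}=2$ and $P_{x_i}$ is irreducible over $\Fq$, then $x_i\in\Fqq\setminus\Fq$ and $P_{x_i}$ determines only the unordered pair $\{x_i,x_i^q\}$ of its two roots in $\Fqq$, since the nontrivial automorphism of $\Fqq$ over $\Fq$, namely Frobenius $x\mapsto x^q$, permutes the two roots of an irreducible quadratic. This is the sense in which the support is known only ``up to Galois action''.

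Next I would partition the positions $\{0,\dots,n-1\}$ into three classes according to $P_{x_i}$: class (a) those with $\deg P_{x_i}=1$; class (b) those with $P_{x_i}$ irreducible quadratic and such that some other position $j$ has $P_{x_j}=P_{x_i}$; class (c) those with $P_{x_i}$ irreducible quadratic and no such partner. Here I would invoke that the entries of the support $\xv$ are pairwise distinct: an irreducible quadratic has exactly two roots in $\Fqq$, so at most two positions can share a given irreducible $P_{x_i}$; hence the relation $P_{x_i}=P_{x_j}$ restricted to class (b) is a disjoint union of $2$-element blocks, and two distinct positions with entries in $\Fq$ cannot share their (degree-one) minimal polynomial either. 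I then let $\sigma$ be any permutation of $\{0,\dots,n-1\}$ listing first all positions of class (a), then those of class (b) pair by pair (the two members of each pair consecutive), then all positions of class (c). This $\sigma$ is manifestly computable from the $P_{x_i}$'s alone and is non-unique, owing to the freedom of ordering within and between classes and within each pair.

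Finally I would verify that $\xsig=\sigma(\xv)$ has the shape~\eqref{eq:xsigma}: the class-(a) block consists exactly of the entries of $\xv$ in $\Fq$, which are the $u_i$'s; inside a class-(b) pair, if $v_j$ denotes the first of its two entries, the second is a root of the same irreducible quadratic distinct from $v_j$, hence equals $v_j^q$, and by construction every entry of $\xv$ in $\Fqq\setminus\Fq$ whose conjugate is also an entry of $\xv$ occurs in exactly one such pair; the class-(c) block is precisely the set of entries in $\Fqq\setminus\Fq$ whose conjugate does not appear in $\xv$, the $w_i$'s. I do not expect a genuine obstacle in this argument; the only point deserving care is exactly the Galois ambiguity noted above — the procedure can never decide which root of an irreducible $P_{x_i}$ is the true $x_i$ — which is why $\sigma$ can only be guaranteed to realise the block structure and not the actual support values, and this is harmless since Step~4 of the attack is designed to resolve precisely that ambiguity.
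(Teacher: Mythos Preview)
Your argument is correct and is exactly the natural construction; the paper itself does not prove Fact~\ref{fact:sigma} at all --- it is stated as self-evident and the text immediately proceeds to use it --- so there is nothing to compare against.

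One small terminological slip worth fixing: you write ``If $\deg P_{x_i}=1$'', but by Lemma~\ref{lem:MiniPol} the polynomial $P_{x_i}$ is always monic of degree $2$; when $x_i\in\Fq$ it equals $(z-x_i)^2$. What you mean is ``if $P_{x_i}$ is reducible (equivalently, has a double root)'' versus ``if $P_{x_i}$ is irreducible''. Your parenthetical already says this correctly, so just drop the degree-$1$ phrasing. The rest of the argument --- the three-class partition using distinctness of the support entries to ensure at most two positions share an irreducible $P_{x_i}$, and the resulting block structure --- is clean and complete.
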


Therefore, one can compute a
generator matrix of the code $\CCs \eqdef \Goppa{\xsig}{\gamma^{q+1}}$
by permuting the columns of a generator matrix of $\CC$.
Call $\Gsig \in \Fq^{k\times n}$ this matrix.
In other words $\Gsig$ is obtaining by first picking 
the columns of a generator matrix $\Gm$ of $\CC$ that correspond to the 
entries of $\xv$ that belong to $\fq$ and then putting together the 
columns of $\Gm$ that correspond to conjugate entries of $\xv$ and finally put at the end
the columns of $\Gm$ that are not of this kind.

It is worth noting that, even when $\sigma$ is computed,
the vector $\xsig$ remains unknown since only the minimal polynomials
of its entries are known. 
Afterwards, we introduce an extended support by inserting the conjugates of the
$w_j$'s.
\begin{equation}\label{eq:xve}
\xve \eqdef (u_0, \ldots , u_{\ell_1-1},
v_0, v_0^q\ldots , v_{\ell_2-1}, v_{\ell_2-1}^q, w_0, w_0^q,\ldots , w_{\ell_3-1}, w_{\ell_3-1}^q).
\end{equation}
This vector is also unknown, however, one can compute an arbitrary vector
which equals $\xve$ up to a very particular permutation.
One can compute an arbitrary vector
\begin{equation}\label{eq:xvpe}
\xvpe =  (u_0, \ldots , u_{\ell_1-1}, v'_0, (v'_0)^q,\ldots
{v'_{\ell_2-1}}, (v'_{\ell_2-1})^q,
 w'_0, {w'_0}^q,\ldots
{w'_{\ell_3-1}}, (w'_{\ell_3-1})^q)
\end{equation}
such that for all $i$ the $i$-th entry of $\xvpe$ has the same
minimal polynomial as that of $\xve$. Equivalently, the entries
of $\xvpe$ equal those of $\xve$ up to Galois action.
This can be interpreted in terms of permutations using:
\begin{definition}
  \label{def:Tau}
Let $\mathfrak{T}$ be the subgroup of $\mathfrak{S}_{n+\ell_3}$ of products
of transpositions with disjoint supports, each one permuting
either the positions $v_i, v_i^q$ the positions
 $w_i, w_i^q$ in $\xve$.
Every element $\tau \in \mathfrak{T}$ is represented by the matrix
\begin{equation}\label{eq:Rtau}
\mathbf{R}_{\tau} \eqdef \begin{pmatrix}
  \mathbf{I}_{\ell_1} &  (0)\\
   (0) & \mathbf{B}  \\
\end{pmatrix},
\end{equation}
where $\mathbf{B} \in \mathfrak{M}_{2(\ell_2+\ell_3)}(\Fq)$
is $2\times 2$ block--diagonal 
with blocks among $
\begin{pmatrix}
  1 & 0 \\ 0 & 1
\end{pmatrix}
$
and
$
\begin{pmatrix}
  0 & 1 \\ 1 & 0
\end{pmatrix}
$.
\end{definition}

\begin{lemma}
  \label{lem:permutxvpe}
There exists $\tau \in \mathfrak{T}$ such that $\tau (\xve) = \xvpe$.
\end{lemma}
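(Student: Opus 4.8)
The plan is to construct $\tau$ explicitly, block by block, so that it is a product of commuting transpositions of the allowed type and $\tau(\xve)=\xvpe$. The only ingredient is the defining property of $\xvpe$: for every index $i$ the $i$-th entries of $\xve$ and of $\xvpe$ have the same minimal polynomial over $\Fq$.

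First I would handle the $\Fq$-part, that is, the first $\ell_1$ coordinates. There the entry $u_i$ of $\xve$ lies in $\Fq$, so its minimal polynomial is $z-u_i$, of degree $1$, and $u_i$ is the only element of $\Fqq$ it annihilates; hence the $i$-th entry of $\xvpe$ is forced to equal $u_i$, and $\xve$ and $\xvpe$ already coincide on these positions, so $\tau$ acts trivially there. Next, for each block of two consecutive positions coming from a pair $(v_j,v_j^q)$ --- and in exactly the same way for a pair $(w_j,w_j^q)$ --- I would use that, since $v_j\in\Fqq\setminus\Fq$, the minimal polynomial of $v_j$ over $\Fq$ is the irreducible quadratic $(z-v_j)(z-v_j^q)=z^2-\tr(v_j)z+\nr(v_j)$, whose only roots in $\Fqq$ are $v_j$ and $v_j^q$; in particular $v_j$ and $v_j^q$ have the same minimal polynomial. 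The entry $v'_j$ of $\xvpe$ in the first slot of this block therefore lies in $\{v_j,v_j^q\}$, and since $(v_j^q)^q=v_j$ the entry $(v'_j)^q$ in the second slot is automatically the remaining element of $\{v_j,v_j^q\}$ --- so no independent matching condition has to be verified there. Thus on this block $\xvpe$ equals either $(v_j,v_j^q)$, which agrees with $\xve$ (take the identity), or $(v_j^q,v_j)$ (take the transposition exchanging the two positions of the block).

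Finally I would let $\tau$ be the product, over all $v$-blocks and all $w$-blocks, of the transpositions just selected; its nontrivial factors have pairwise disjoint supports and each exchanges a pair of one of the two allowed kinds, so $\tau\in\mathfrak{T}$, and by construction $\tau(\xve)=\xvpe$. I do not expect any real obstacle: the entire content is the two elementary facts that a degree-$1$ minimal polynomial determines its root and that $\Fq$-conjugate elements of $\Fqq$ share a minimal polynomial, after which the choice of which transpositions to keep is read off directly from the shapes of \eqref{eq:xve} and \eqref{eq:xvpe}. The only mild point to keep in mind is that the second slot of each conjugate-pair block is taken care of for free, which is precisely why \eqref{eq:xvpe} is written with its conjugate pairs already paired up.
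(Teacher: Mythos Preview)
Your argument is correct: you build $\tau$ block by block from the observation that each entry of $\xvpe$ shares the minimal polynomial of the corresponding entry of $\xve$, which forces equality on the $\Fq$-part and leaves exactly the two possibilities on each conjugate pair. The paper actually states this lemma without proof, treating it as immediate from the shapes of \eqref{eq:xve} and \eqref{eq:xvpe}, so your write-up is precisely the routine verification the authors omitted.
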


We extend $\Gsig$
as a $k\times (n+\ell_3)$ matrix $\Gsige$ by inserting $\ell_3$ zero columns
at positions in one-to-one correspondence with the entries $w_i^q$ in $\xve$
(see (\ref{eq:xve})).
That is:
\begin{equation}\label{eq:Gsige}
 \Gsig = \begin{pmatrix}
 g_{11} & \ldots& g_{1n}\\
 \vdots &       & \vdots\\
 g_{k1} &\ldots & g_{kn}
 \end{pmatrix}
\quad {\rm and}
\quad
\Gsige = \begin{pmatrix}
 g_{11} & \ldots& g_{1s} & 0 & g_{1,s+1} & 0 & \ldots & g_{1n} & 0\\
 \vdots &       &        &   &           &   &        &        & \vdots\\
 g_{k1} &\ldots & g_{ks} & 0 & g_{k,s+1} & 0 & \ldots & g_{kn} & 0
 \end{pmatrix},
\end{equation}
where $s = \ell_1+2\ell_2$ (see (\ref{eq:xsigma})).
The corresponding code is referred to as $\CCse$.
The key of this final step is the following statement.

\begin{theorem}
  \label{thm:finalkey}
There exists a matrix $\mathbf{M}$ such that
\begin{equation}\label{eq:finalkey}
\CCse\ \mathbf{M}  ~\subseteq~ \Alt{r(q+1)}{\xvpe}{\onev} 
\end{equation}
and $\mathbf{M} = \mathbf{R}_{\tau}\mathbf{D}$, where $\mathbf{D}$ is diagonal and invertible and $\mathbf{R}_{\tau}$ is the permutation matrix of some $\tau \in \mathfrak{T}$ as described in (\ref{eq:Rtau}).
\end{theorem}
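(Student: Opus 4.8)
The plan is to exhibit the matrix $\mathbf{M}$ explicitly from the algebraic description of $\CC$ as a twisted alternant code, and then read off its combinatorial shape. Recall from Theorem~\ref{thm:properties_superGoppa}(\ref{it:PsupSRS}) that $\CC = \uv \star \Alt{r(q+1)}{\xv}{\onev}$ for some multiplier $\uv \in (\Fq^\times)^n$. Permuting columns by the permutation $\sigma$ of Fact~\ref{fact:sigma} turns this into $\CCs = \Goppa{\xsig}{\gamma^{q+1}} = \uv^\sigma \star \Alt{r(q+1)}{\xsig}{\onev}$, where $\xsig$ has the block form~\eqref{eq:xsigma} and $\uv^\sigma$ is $\uv$ permuted accordingly. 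First I would pass to the extended objects: inserting $\ell_3$ zero columns into $\Gsig$ produces $\Gsige$, whose row space $\CCse$ is contained in $\uv^{\sigma}_{\textrm{\bf ext}}\star \Alt{r(q+1)}{\xve}{\onev}$, where $\uv^\sigma_{\textrm{\bf ext}}$ extends $\uv^\sigma$ by arbitrary (say, $1$) entries at the inserted positions — this holds simply because a codeword of $\CCs$, viewed on the extended length by padding with zeros at the $w_i^q$ positions, satisfies the same parity checks as before plus trivially the checks involving only the zero coordinates; one invokes Lemma~\ref{lem:decription_as_eval} on the GRS parity-check description to make this precise.

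Next I would bring in the permutation $\tau \in \mathfrak{T}$ supplied by Lemma~\ref{lem:permutxvpe}, so that $\tau(\xve) = \xvpe$. By the invariance of alternant codes under coordinate permutation, $\Alt{r(q+1)}{\xve}{\onev}\,\mathbf{R}_\tau = \Alt{r(q+1)}{\xvpe}{\onev}$. Therefore
\begin{equation}
\CCse\,\mathbf{R}_\tau \;\subseteq\; \bigl(\uv^\sigma_{\textrm{\bf ext}}\star \Alt{r(q+1)}{\xve}{\onev}\bigr)\mathbf{R}_\tau \;=\; \bigl(\mathbf{R}_\tau^{-1}\uv^\sigma_{\textrm{\bf ext}}\bigr)\star \Alt{r(q+1)}{\xvpe}{\onev},
\end{equation}
using that conjugating a Schur-multiplier by a permutation just permutes the multiplier's entries. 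The vector $\mathbf{R}_\tau^{-1}\uv^\sigma_{\textrm{\bf ext}}$ lies in $(\Fq^\times)^{n+\ell_3}$ since all entries of $\uv$ (hence of $\uv^\sigma$) are nonzero and we chose the padded entries to be $1$. Let $\mathbf{D}$ be the invertible diagonal matrix whose diagonal is the entrywise inverse of $\mathbf{R}_\tau^{-1}\uv^\sigma_{\textrm{\bf ext}}$. Then right-multiplication by $\mathbf{D}$ cancels the multiplier:
\begin{equation}
\CCse\,\mathbf{R}_\tau\mathbf{D} \;\subseteq\; \Alt{r(q+1)}{\xvpe}{\onev}.
\end{equation}
Setting $\mathbf{M} \eqdef \mathbf{R}_\tau \mathbf{D}$ gives~\eqref{eq:finalkey}, and by construction $\mathbf{R}_\tau$ is the permutation matrix of an element of $\mathfrak{T}$ of the shape~\eqref{eq:Rtau} while $\mathbf{D}$ is diagonal and invertible, as required.

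The main obstacle I anticipate is the bookkeeping around the inserted zero columns: one must check carefully that padding a codeword of $\CCse$ with zeros at the $w_i^q$-positions still satisfies the parity checks of the (extended) GRS code $\GRS{r(q+1)}{\xvpe}{\cdots}$, i.e. that no spurious constraint is violated at those coordinates — but since the coordinate value there is $0$, every linear parity check is automatically satisfied at that position, so the inclusion survives the extension. A secondary point is that $\mathbf{R}_\tau$ commutes correctly with the Schur product: this is the elementary identity $(\vv \star \cv)\mathbf{P} = (\vv\mathbf{P})\star(\cv\mathbf{P})$ for a permutation matrix $\mathbf{P}$, which is immediate from the definition of $\star$. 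Everything else is a routine unwinding of Definition~\ref{def:subfield_subcode} together with Proposition~\ref{pr:shortened_alternant_code}-style manipulations, and the non-uniqueness of $\tau$ (hence of $\mathbf{M}$) is harmless since the statement only asserts existence.
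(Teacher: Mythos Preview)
Your proof is correct and follows essentially the same route as the paper's: the paper also uses the shortening relation to embed $\CCse$ into the extended Goppa code $\Goppa{\xve}{\gamma^{q+1}}$, applies the permutation $\tau\in\mathfrak{T}$ from Lemma~\ref{lem:permutxvpe}, and then invokes Theorem~\ref{thm:properties_superGoppa}(\ref{it:PsupSRS}) to peel off a diagonal multiplier. The only cosmetic difference is that the paper applies Theorem~\ref{thm:properties_superGoppa}(\ref{it:PsupSRS}) \emph{after} permuting (to $\Goppa{\xvpe}{\gamma^{q+1}}$), thereby avoiding the need to track the multiplier $\uv$ through $\sigma$, the zero-column extension, and $\tau$; your order works equally well (and your ``padding with zeros lands in the extended alternant code'' step is exactly the shortening identity of Proposition~\ref{pr:shortened_alternant_code}), though note that your permuted multiplier should read $\uv^\sigma_{\textrm{\bf ext}}\mathbf{R}_\tau$ rather than $\mathbf{R}_\tau^{-1}\uv^\sigma_{\textrm{\bf ext}}$ --- harmless here since every $\tau\in\mathfrak{T}$ is an involution.
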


\begin{proof}
  Recall that $\CCs = \Goppa{\xsig}{\gamma^{q+1}}$.
Since from Definition~\ref{def:ClassicalGoppa},
Goppa codes are alternant and hence
from Proposition~\ref{pr:shortened_alternant_code},
the code $\CCs$ is a shortening of $ \Goppa{\xve}{\gamma^{q+1}}$.
Consequently, we have
\begin{equation}
  \label{eq:incCCse}
\CCse \subseteq \Goppa{\xve}{\gamma^{q+1}}.  
\end{equation}
Second,
from Lemma~\ref{lem:permutxvpe}, there exists $\tau \in \mathfrak{T}$
such that $\tau(\xve) = \xvpe$, then
\begin{equation}
  \label{eq:PermAlt}
 \Goppa{\xve}{\gamma^{q+1}} \mathbf{R}_{\tau}  = \Goppa{\xvpe}{\gamma^{q+1}} ,  
\end{equation}
Next,
from Theorem~\ref{thm:properties_superGoppa}(\ref{it:PsupSRS}),
there exists $\av \in {(\Fq^{\times})}^{n+\ell_3}$  such that
\begin{equation}
  \label{eq:super}
\Goppa{\xvpe}{\gamma^{q+1}} = \spc{\av}{\Alt{r(q+1)}{\xvpe}{\onev}}.  
\end{equation}
Let $\mathbf{D}$ be the diagonal matrix whose diagonal equals $\av^{-1}$. From (\ref{eq:incCCse}), (\ref{eq:PermAlt}) and (\ref{eq:super}),
we get
\begin{equation}
 \CCse \mathbf{R}_{\tau}\mathbf{D} \subseteq \Alt{r(q+1)}{\xvpe}{\onev}.  
\end{equation}
\end{proof}

To finish the attack, we proceed as follows. We compute a permutation $\sigma$ as in Fact \ref{fact:sigma}. Then, we compute the matrix $\Gsige$
defined in (\ref{eq:Gsige}).
Afterwards, we compute an arbitrary vector $\xvpe$ 
and a parity--check matrix $\mathbf{H}$ of the code
$\Alt{r(q+1)}{\xvpe}{\onev}$.
 Finally,
we solve the problem
\begin{problem}
  \label{prob:final}
  Find the space of matrices $\mathbf{M}\in \mathfrak{M}_{n+\ell_3}(\Fq)$
  of the form
  $
  \mathbf{M} = \begin{pmatrix}
    \mathbf{E} &  (0)\\
     (0) & \mathbf{F} 
  \end{pmatrix}
  $, where $\mathbf{E}$ is $\ell_1\times \ell_1$ and diagonal
  and $\mathbf{F}$ is $(2\ell_2+2\ell_3)\times (2\ell_2+2\ell_3)$
  and $2\times 2$--block--diagonal such that
  $$
  \mathbf{H}\ {\left( \Gsige\ \mathbf{M}\right)}^{T} ~=~ 0.
  $$
\end{problem}

The matrix $\mathbf{M}$ of Theorem~\ref{thm:finalkey} is a solution of
Problem~\ref{prob:final}.
Moreover, this problem is linear, has  $\ell_1 + 2(\ell_2+\ell_3) \leq 4n$
unknowns and $(\dim \CC)(n-\dim \Alt{r(q+1)}{\xvpe}{\onev})\geq k(n-k)$
equations. Thus, the number of unknowns is linear while the number of equations
is quadratic. This provides an extremely small space of solutions.

\begin{example}
  If we consider a $[841, 601]$ wild Goppa code over $\F_{32}$
  (where $r=4$), then we get less than
  $3364$ unknowns and more than $120200$ equations.
\end{example}

Experimentally, we observe that the solution space has dimension $2$
and an exhaustive search of matrices which are the product of a diagonal matrix
and a permutation matrix provide two solutions (see Lemma~\ref{lem:conjugsupp}
for the rationale behind these two solutions). Choose a solution
$\mathbf{M}$, then factorize it as $\mathbf{D}\mathbf{R}_{\tau}$
as in Theorem~\ref{thm:finalkey}.
This yields the permutation $\tau$ and hence the support
$\xv$. Second, the entries of $\mathbf{D}$ provide directly a vector $\av$
such that 
\begin{equation}
  \label{eq:victory}
\CC = \spc{\av}{\Alt{r(q+1)}{\xv}{\onev}},  
\end{equation}
 which allows
to correct up to $\lfloor \frac{r(q+1)}{2}\rfloor$ errors.
Hence the scheme is broken.

\medbreak
\begin{algorithm}
 \caption{\label{al:total} Algorithm of the attack.}
\begin{algorithmic}
\STATE{Compute $\Coi{0}{q+1}$, $\Coi{1}{q+1}$ using 
Algorithm \ref{al:sCoi0v}.}
 \STATE{$ \Lc_0 \leftarrow$ List of candidates for $\xvzero^{q+1}$ (Obtained by solving (\ref{eq:compNx}))}
 \STATE{$ \Lc_1 \leftarrow$ List of candidates for $(\xvone-1)^{q+1}$}
 \STATE{$\mathcal{P} \leftarrow$ the set of $q^2-n+1$ pairs $(\av_0, \av_1)\in\Lc_0 \times \Lc_1$ as explained in \S\ref{ss:bypairs}.}
 \STATE{$\mat{M}_0 \leftarrow 0$}
 \WHILE{$\mat{M}_0=0$ and $L_0 \neq \emptyset$}
 \STATE{$(\av_0, \av_1) \leftarrow$ a random pair in $\mathcal{P}$.}
 \STATE{$\mathcal{P} \leftarrow \mathcal{P}\setminus \{(\av_0, \av_1)\}$}
 \STATE{ Compute the minimal polynomials $P_i$ of the positions using Lemma~\ref{lem:MiniPol}.}
 \STATE{Construct $\Gsige$, $\xvpe$ and
 a parity-check matrix $\mat{H}$ of the code $\Alt{r(q+1)}{\xvpe}{\onev}$
 as described in Theorem~\ref{thm:finalkey}.}
 \STATE{$V \leftarrow$ Space of solutions $\mat{M}$ of
   Problem~\ref{prob:final}}
 \IF{$\dim V>0$ and $\exists \mat{M} \in V$ of the form
 $\mathbf{D}\mathbf{R}_{\tau}$ as in Theorem~\ref{thm:finalkey}}
 \STATE{$\mat{M}_0 \leftarrow M$}
 \ENDIF
 \ENDWHILE
 \IF{$\mat{M}_0=0$}
 \RETURN{``error''}
 \ELSE
 \STATE{Recover $\xv$ and $\uv$ from $\mat{M}$ as in (\ref{eq:victory})}
 \RETURN{$\xv, \uv$}
 \ENDIF
\end{algorithmic}
\end{algorithm}

\paragraph{\bf The two solutions of the problem}
The solutions of 
Problem~\ref{prob:final} of the form $\mathbf{D}\mathbf{R}_{\tau}$,
where $\mathbf{D}$ is diagonal and invertible and $\mathbf{R}_{\tau}$
is a permutation matrix has cardinality $2$. This is due to the fact
that any alternant code of extension degree $2$ has at least $2$
pairs $(\xv, \yv)$ to represent it.
This explained in the following lemma.

\begin{lem}
  \label{lem:conjugsupp}
Let $\av \in \Fqq^n$ be a support and $\bv \in \Fqq^n$ be a multiplier.
Then,
$$
\GRS{k}{\av}{\bv}\cap \Fq^n = \GRS{k}{\av^q}{\bv^q} \cap \Fq^n.
$$
\end{lem}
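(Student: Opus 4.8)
The statement says that a GRS code and its image under the coordinatewise Frobenius $\phi : (c_0,\dots,c_{n-1}) \mapsto (c_0^q,\dots,c_{n-1}^q)$ have the same subfield subcode over $\Fq$. The natural approach is to first show that $\phi$ maps $\GRS{k}{\av}{\bv}$ bijectively onto $\GRS{k}{\av^q}{\bv^q}$, and then to observe that $\phi$ fixes $\Fq^n$ pointwise.

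First I would record the behaviour of GRS codes under $\phi$. If $\cv \in \GRS{k}{\av}{\bv}$, then by Definition~\ref{def:defGRS} we may write $\cv = \spc{\bv}{p(\av)}$ for some $p = \sum_{j=0}^{k-1} p_j z^j \in \Fqq[z]_{<k}$. Raising to the $q$-th power componentwise and using that $x\mapsto x^q$ is a field automorphism of $\Fqq$ (so it respects sums and products), one gets
$$
\phi(\cv) = \spc{\bv^q}{p^{(q)}(\av^q)}, \qquad p^{(q)} \eqdef \sum_{j=0}^{k-1} p_j^q\, z^j \in \Fqq[z]_{<k}.
$$
Hence $\phi(\cv) \in \GRS{k}{\av^q}{\bv^q}$. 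Since $\phi$ is a bijection on $\Fqq^n$ and $p \mapsto p^{(q)}$ is a bijection of $\Fqq[z]_{<k}$, and since Frobenius preserves the property of having pairwise distinct entries (so $\av^q$ is again a support) and the property of having nonzero entries (so $\bv^q$ is again a multiplier), we conclude $\phi\big(\GRS{k}{\av}{\bv}\big) = \GRS{k}{\av^q}{\bv^q}$.

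Now take $\cv \in \GRS{k}{\av}{\bv} \cap \Fq^n$. Because every entry of $\cv$ lies in $\Fq$, it is fixed by the $q$-th power map, i.e. $\phi(\cv) = \cv$. Combining this with the previous paragraph, $\cv = \phi(\cv) \in \GRS{k}{\av^q}{\bv^q}$, and of course $\cv \in \Fq^n$ still, so $\cv \in \GRS{k}{\av^q}{\bv^q} \cap \Fq^n$. This proves the inclusion $\subseteq$. The reverse inclusion follows by the same argument applied to $\av^q, \bv^q$ in place of $\av, \bv$, using that $(\av^q)^q = \av$ and $(\bv^q)^q = \bv$ since we are in a quadratic extension ($m=2$); alternatively it is immediate once one notes the whole argument is symmetric in $(\av,\bv)$ and $(\av^q,\bv^q)$.

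There is essentially no hard step here: the only thing to be careful about is the bookkeeping that $p^{(q)}$ still has degree $<k$ and that $\av^q$, $\bv^q$ remain a valid support/multiplier pair, so that the right-hand side is a genuine GRS code of the same dimension; everything else is the trivial observation that $\Fq$ is the fixed field of Frobenius.
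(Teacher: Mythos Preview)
Your proof is correct and is essentially the same as the paper's: both introduce the Frobenius-twisted polynomial $p^{(q)}$, use that $b_i^q\,p^{(q)}(a_i^q) = (b_i\,p(a_i))^q$, and conclude via the fact that $\Fq$ is the fixed field of Frobenius. Your presentation is slightly more structured (first establishing $\phi(\GRS{k}{\av}{\bv}) = \GRS{k}{\av^q}{\bv^q}$ globally, then intersecting with $\Fq^n$) and includes the extra bookkeeping that $\av^q$, $\bv^q$ remain a valid support/multiplier pair, but the argument is the same.
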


\begin{proof}
Let $f\in \Fqq[x]_{<k}$ be a polynomial such that $(b_0 f(a_0),\ldots , b_{n-1}f(a_{n-1})) \in \GRS{k}{\av}{\bv}\cap \Fq^n$.
Writing $f$ as $f_0+f_1x+\cdots + f_{k-1}x^{k-1}$, denote by $f^{(q)}\in \Fqq[x]_{<k}$ the polynomial $f^{(q)} \eqdef f_0^q+ f_1^q x+\cdots + f_{k-1}^q x^{k-1}$. Then it is easy to check that for all $i \in \{0, \ldots , n-1\}$, we have
$$
b_i^q f^{(q)} (a_i^q) = (b_i f(a_i))^q.
$$
In addition, since by assumption $b_i f(a_i) \in \Fq$, we have
$$
\forall i\in \{0, \ldots , n-1\},\ b_i^q f^{(q)} (a_i^q) = b_i f(a_i)
$$
Therefore,
$$
(b_0 f(a_0), \ldots , b_{n-1}f(a_{n-1})) = (b_0^q f^{(q)}(a_0^q), \ldots , b_{n-1} f^{(q)}(a_{ n-1}^q)) \in \GRS{k}{\av^q}{\bv^q}\cap \Fq^n. 
$$
We proved that $\GRS{k}{\av}{\bv}\cap \Fq^n \subseteq \GRS{k}{\av^q}{\bv^q} \cap \Fq^n$ and the converse inclusion can be proved by the very same manner.
\end{proof}


\end{document}